\keywords{logic, combinatorial games}
\authorrunning{Anonymized}
\authorrunning{M. Carmosino et al.}
\title{Parallel Play Saves Quantifiers}
\author{Marco Carmosino}{MIT-IBM Watson AI Lab}{mlc@ibm.com}{}{}
\author{Ronald Fagin}{IBM Almaden Research Center}{fagin@us.ibm.com}{}{}
\author{Neil Immerman}{University of Massachusetts Amherst}{immerman@umass.edu}{https://orcid.org/0000-0001-6609-5952}{}
\author{Phokion Kolaitis}{UC Santa Cruz and IBM Almaden Research Center}{kolaitis@ucsc.edu}{https://orcid.org/0000-0002-8407-8563}{}
\author{Jonathan Lenchner}{IBM T. J. Watson Research Lab}{lenchner@us.ibm.com}{https://orcid.org/0000-0002-9427-8470}{}
\author{Rik Sengupta}{MIT-IBM Watson AI Lab}{rik@ibm.com}{https://orcid.org/0000-0002-9238-5408}{}
\author{Ryan Williams}{Massachusetts Institute of Technology}{rrw@mit.edu}{}{}
\date{}
\begin{document}

\maketitle

\begin{abstract}
  The number of quantifiers needed to express first-order properties is captured by two-player combinatorial games called \emph{multi-structural} (MS) games. We play these games on linear orders and strings, and introduce a technique we call \emph{parallel play}, that dramatically reduces the number of quantifiers needed in many cases. Linear orders and strings are the most basic representatives of \emph{ordered} structures --- a class of structures that has historically been notoriously difficult to analyze. Yet, in this paper, we provide upper bounds on the number of quantifiers needed to characterize different-sized subsets of these structures, and prove that they are tight up to constant factors, including, in some cases, up to a factor of $1+\varepsilon$, for arbitrarily small $\varepsilon$.
\end{abstract}

\maketitle

\section{Introduction}\label{sec:intro}

How many quantifiers are needed to describe a property in first-order (FO) logic? While this is a very natural way to measure the complexity of expressing a FO property, until recently, in comparison to the quantifier \emph{rank} (a.k.a.~quantifier depth), there had been little work on this complexity measure. In 1981, Immerman~\cite{MScanon0} defined a two-player combinatorial game, which he called the \emph{separability game}, that completely characterizes the number of quantifiers needed to express a property. Immerman's work lay dormant until 2021, when Fagin, Lenchner, Regan and Vyas~\cite{MScanon1} rediscovered the games, called them \emph{multi-structural games} (which we shall call them), and introduced methods for analyzing them. Other recent follow-up works on these games include~\cite{MScanon2,MScanon3,vinallsmeeth2024quantifier}. Additional games to study number of quantifiers were recently introduced in \cite{hella2024}, and close cousins of multi-structural (MS) games were used to study formula size in \cite{DBLP:journals/corr/abs-1208-4803,DBLP:journals/lmcs/GroheS05}.

In this paper, we introduce a powerful way to analyze MS games, which we call \emph{parallel play}. We use parallel play to obtain non-trivial upper bounds on the number of quantifiers needed to separate different sets of linear orders and binary strings. We note that these are basic representatives of \emph{ordered} structures, a setting where combinatorial games have been historically notoriously difficult to analyze, a few exceptions (e.g.,\cite{DBLP:journals/apal/Schwentick96}) notwithstanding. In fact, methodological limitations of using combinatorial games on ordered structures have been explored in \cite{Chen2013OnLO}. In several cases, our upper bounds are within a $(1+\varepsilon)$ multiplicative factor of provable lower bounds, for arbitrarily small $\varepsilon > 0$. Since binary strings are universal encoding 
objects, such sets are extremely general and the results of this paper represent a notable advancement in a different direction than the initial results on linear orders, rooted trees, and $s$-$t$ connectivity from \cite{MScanon0, MScanon2}.

\medskip \noindent{\bf Multi-Structural Games. }
In order to replace quantifier rank with quantifier number, MS games modify the setup of classical {\ef} (EF) games. Firstly, MS games are played by the customary two perfect players, Spoiler ($\bS$) and Duplicator ($\bD$), on a pair of \emph{sets} of structures, rather than on a pair of structures. Secondly, MS games give $\bD$ more power by enabling her to make arbitrarily many copies of the structures on her side before responding, if she wishes. In order to win the game, she only needs to ensure that there are just two structures, one from each set, that ``match'' each other at the end of the game.

The fundamental theorem for MS games~\cite{MScanon0, MScanon1} (see Theorem~\ref{thm:MSfundamental}) states that $\bS$ wins the $r$-round MS game on a pair $(\cA, \cB)$ of sets of structures if and only if there is an FO sentence $\varphi$ with at most $r$ quantifiers that is true for every structure in $\cA$ but false for every structure in $\cB$. We call such a $\varphi$ a \emph{separating sentence} for $(\cA, \cB)$. In general, we wish to understand the number of quantifiers needed to express certain properties of structures (or in other words, separate structures having the property from those that do not).

\medskip \noindent{\bf Parallel Play: High-Level Idea. }In this paper, we introduce the concept of \emph{parallel play}, which widens the scope of winning strategies for $\bS$ compared to previous work. The essential idea is that if
the sets $\cA$ and $\cB$ can be partitioned as $\cA_1 \sqcup \ldots \sqcup \cA_k$ and $\cB_1 \sqcup \ldots \sqcup \cB_k$ respectively in such a way such that $\bS$ can win each instance $(\cA_i, \cB_i)$ of the MS ``sub-game'' by making his moves in these sub-games on the same sequence of sides, then he can win the entire instance $(\cA, \cB)$ with the same sequence of moves, provided that structures from different sub-games do not form a ``match'' at the end. In effect, $\bS$ plays the sub-games in parallel, thus saving many quantifiers in the resulting separating sentence.

\medskip \noindent{\bf Outline of the Paper. }
This paper is organized as follows. In Section \ref{sec:prelims}, we set up some preliminaries. In Section \ref{sec:parallel}, we precisely formulate what we call the \emph{Parallel Play Lemma} (Lemma \ref{lem:parallelplay}) and the \emph{Generalized Parallel Play Lemma} (Lemma \ref{lem:genparallelplay}). In Section \ref{sec:linearorders}, we develop results on linear orders that are important for the subsequent work on strings. 
One of our key results is that the number of quantifiers needed to separate linear orders up to a given length from bigger linear orders is at most one greater than the quantifier rank needed; this is remarkable in light of the fact that it is known (e.g.,~\cite{MScanon2}) that expressing a property in general can require exponentially more quantifiers than quantifier rank. We also show that we can distinguish any particular linear order by a sentence with essentially the optimal number of quantifiers (up to an additive term of $3$), which has strictly alternating quantifiers ending with a $\forall$.
In Section \ref{sec:strings}, we present our results on separating disjoint sets of strings. Table \ref{table:upper-bounds} provides a summary of our upper bounds for separating different-sized sets of strings. One of our more remarkable results (Theorem \ref{thm:all-vs-all upper bound}) is that we can separate any two arbitrary disjoint sets of $n$-bit strings with $(1 + \varepsilon)\frac{n}{\log(n)}$ quantifiers, for arbitrarily small $\varepsilon > 0$. In Proposition \ref{prop:all-vs-all lower bound}, we provide an $\frac{n}{\log(n)}$ lower bound, so the upper bound is essentially tight. We urge the reader to use Table \ref{table:upper-bounds} as a navigational guide when reading this section. In Section \ref{sec:conclusion}, we wrap up with some conclusions and open problems.

\begin{table*}[ht]
\centering
\resizebox{\textwidth}{!}{%
\begin{tabular}{ |c|c|c|c| } 
\hline
\textbf{$|\cA|$ vs.\ $|\cB|$} & \textbf{Quantifier Upper Bound} & 
\textbf{Proved In}
& \textbf{Range Where Bound is Best} \\ 
\hline 
1 vs.\ 1 & $\log(n) + O(1)$ & Proposition \ref{prop:one-vs-one-n}  & N/A\\
\hline
1 vs.\ $f(n)$ & \shortstack{$\log(n) + \log_t(N) + O(1)$ \\for $t = \max(2, \arg\max(t^{et} \leq N))$} & Theorem \ref{thm:one-vs-constant-best} & $f(n)$ is bounded, e.g., $f(n) \leq N$ \\
\hline
1 vs.\ $f(n)$  & $(1 + \varepsilon)\log(n) +  O(1)$ & Corollary \ref{cor:1+epsilon-upper} & $f(n)$ is polynomial \\ 
\hline
1 vs.\ $f(n)$ & $3\log_3(n) + O(1)$ & Theorem \ref{thm:one-vs-all-n}  & $f(n)$ is super-polynomial \\
\hline
$f(n)$ vs.\ $g(n)$ & $(1 + \varepsilon)\log(n) + O(1)$ & Corollary \ref{cor:1+epsilon-upper-2-sided} & $f(n)$, $g(n)$ are polynomials\\
\hline
$f(n)$ vs.\ $g(n)$ & $(3 + \varepsilon)\log_3(n) + O(1)$ & Corollary \ref{cor:3+epsilon-upper} & \shortstack{$f(n)$ is polynomial,\\ $g(n)$ is super-polynomial}\\
\hline  
$f(n)$ vs.\ $g(n)$ & $(1 + \varepsilon)\frac{n}{\log(n)} + O(1)$  & Theorem \ref{thm:all-vs-all upper bound} & $f(n)$, $g(n)$ are super-polynomial\\
\hline
\end{tabular}}
\caption{Ranges of applicability of our upper bounds for separating different sized sets $\cA$ and $\cB$ of strings of length $n$. Bounds involving $\varepsilon$ hold for arbitrarily small $\varepsilon > 0$. The results in rows one and four also apply when the string(s) on the right have arbitrary lengths, by Corollaries \ref{cor:one-vs-one} and \ref{cor:one-vs-all}. 
}
\label{table:upper-bounds}
\end{table*}
\section{Preliminaries}\label{sec:prelims}

Fix a vocabulary $\tau$ with finitely many relation and constant symbols. We typically designate structures in boldface ($\bA$), their universes in capital letters ($A$), and sets of structures in calligraphic typeface ($\cA$). This last convention includes sets of pebbled structures (see below).

We always use $\log(\cdot)$ to designate the base-$2$ logarithm. Furthermore, in several results in Section \ref{sec:strings}, we have an $O(1)$ additive term. This term will always be independent of $n$. Any additional dependence will be stated in the form of a subscript on the $O$, e.g., $O_t(1)$ would denote a term independent of $n$, but dependent on the choice of some parameter $t$.

\subsection{Pebbled Structures and Matching Pairs}

Consider a palette $\mathcal{C} = \{\r, \b, \g, \ldots\}$ of \emph{pebble colors}, with infinitely many pebbles of each color available. A $\tau$-structure $\bA$ is \emph{pebbled} if some of its elements $a_1, a_2, \ldots \in A$ have pebbles on them. There can only be at most one pebble of each color on a pebbled structure. There can be multiple pebbles (of different colors) on the same element $a_i \in A$. Occasionally, when the context is clear, we will use the term \emph{board} synonymously with ``pebbled structure''.
  
If $\mathbf{A}$ is a $\tau$-structure, and the first few pebbles are placed (in order) on elements $a_1, a_2, a_3 \ldots \in A$, we designate the resulting pebbled $\tau$-structure as $\langle \mathbf{A} ~|~ a_1, a_2, a_3, \ldots \rangle$. Note that $\bA$ can be viewed as a pebbled structure $\langle \bA ~|~\rangle$ with the empty set of pebbles.

By convention, we use $\r$, $\b$, and $\g$ for the first three pebbles we play (in that order), as a visual aid in our proofs. Hence, the pebbled structure $\langle \bA ~|~ a_1, a_2, a_3\rangle$ has pebbles $\r$ on $a_1 \in A$, $\b$ on $a_2 \in A$, and $\g$ on $a_3 \in A$. Note that $a_1$, $a_2$, and $a_3$ need not be distinct.


Let $\bA, \bB$ be $\tau$-structures, and let $a_1, \ldots, a_k \in A$ and $b_1, \ldots, b_k \in B$. We say that the pebbled structures $\langle \bA ~|~ a_1, \ldots, a_k\rangle$ and $\langle \bB ~|~ b_1, \ldots, b_k\rangle$ are a \emph{matching pair} if the map $f : A \to B$ defined by:
\begin{itemize}
    \item $f(a_i) = b_i$ for all $1 \leq i \leq k$
    \item $f(c^\bA) = c^\bB$ for all constants $c$ in $\tau$
\end{itemize}
is an isomorphism on the induced substructures. Note that $\langle \bA ~|~ a_1, \ldots, a_k\rangle$ and $\langle \bB ~|~ b_1, \ldots, b_k\rangle$ can form a matching pair even when $\bA \not\cong \bB$.


\subsection{Multi-Structural Games}

Let $r \in \mathbb{N}$, and let $\cA$ and $\cB$ be two sets of pebbled structures, each pebbled with the \emph{same} set $\{x_1, \ldots, x_k\} \subseteq \cC$ of pebbles. The \emph{$r$-round multistructural (MS) game on $(\cA, \cB)$} is defined as the following two-player game, played by two perfect players, \textbf{Spoiler} ($\bS$, he/him) and \textbf{Duplicator} ($\bD$, she/her). In round $i$ for $1 \leq i \leq r$, $\bS$ chooses either $\cA$ or $\cB$, and an \textbf{unused} color $y_i \in \mathcal{C}$; he then places (``plays'') a pebble of color $y_i$ on an element of \emph{every} board in the chosen set (``side''). In response, $\bD$ makes as many copies as she wants of each board on the other side, and plays a pebble of color $y_i$ on an element of each of those boards. $\bD$ wins the game if at the end of round $r$, there is a board in $\cA$ and a board in $\cB$ forming a matching pair. Otherwise, $\bS$ wins.
For readability, we always call the two sets $\cA$ and $\cB$, even though the structures change over the course of a game in the following two ways:
\begin{itemize}
    \item $\cA$ or $\cB$ can increase in size over rounds, as $\bD$ can make copies of the boards.
    \item The number of pebbles on each of the boards in $\cA$ and $\cB$ increases by $1$ in each round.
\end{itemize}

We will usually refer to $\cA$ as the \emph{left} side, and $\cB$ as the \emph{right} side.

Let $\cA$ and $\cB$ be two sets of pebbled structures, with each pebbled structure containing pebbles $\{x_1, \ldots, x_k\} \subseteq \mathcal{C}$. Let $\varphi(x_1, \ldots, x_k)$ be an FO formula with free variables $\{x_1, \ldots, x_k\}$. We say $\varphi$ is a \emph{separating formula} for $(\cA, \cB)$ (or $\varphi$ \emph{separates} $\cA$ and $\cB$) if:
\begin{itemize}
    \item for every $\langle \bA ~|~ a_1, \ldots, a_k\rangle \in \cA$ we have $\bA[a_1/x_1, \ldots, a_k/x_k] \models \varphi$,
    \item for every $\langle\bB ~|~ b_1, \ldots, b_k\rangle \in \cB$ we have $\bB[b_1/x_1, \ldots, b_k/x_k] \models \lnot\varphi$.
\end{itemize}
The following key theorem \cite{MScanon0, MScanon1}, stated here without proof, relates the logical characterization of a separating formula with the combinatorial property of an MS game strategy.

\begin{theorem}[Fundamental Theorem of MS Games, \cite{MScanon0, MScanon1}]\label{thm:MSfundamental}
    $\bS$ has a winning strategy in the $r$-round MS game on $(\cA, \cB)$ iff there is a formula with $\leq r$ quantifiers separating $\cA$ and $\cB$.
\end{theorem}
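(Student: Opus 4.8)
The plan is to prove both directions by induction, but with a different induction parameter for each. For the implication ``$\bS$ wins in $r$ rounds $\Rightarrow$ there is a separating formula with $\leq r$ quantifiers,'' I would induct on $r$; for the converse I would induct on the structure of the separating formula. In both cases the base case is the degenerate one: $\bS$ wins the $0$-round game iff there is no matching pair between $\cA$ and $\cB$ at the start, and a formula with $0$ quantifiers is quantifier-free. The bridge between these is the standard fact that two pebbled structures form a matching pair iff they satisfy exactly the same quantifier-free formulas (i.e.\ realize the same atomic type). So if no board in $\cA$ matches any board in $\cB$, the disjunction $\bigvee_t \chi_t$ over the (finitely many) atomic types $t$ realized in $\cA$, where $\chi_t$ is the conjunction of literals defining $t$, is a quantifier-free separating formula; conversely a quantifier-free separating formula forces every $\cA$-board and $\cB$-board to have distinct atomic types, hence no matching pair.

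For the forward direction, suppose $\bS$ wins the $r$-round game and consider his first move. If he plays on the left side $\cA$, he selects, for each board $\langle\bA\mid \bar a\rangle\in\cA$, a witness element, producing $\cA'$; I would then take $\bD$'s \emph{maximal} response on the right, letting $\cB'$ contain $\langle\bB\mid \bar b, b\rangle$ for \emph{every} element $b$ of every $\langle\bB\mid\bar b\rangle\in\cB$. Since $\bS$ wins against every response, he wins the remaining $(r-1)$-round game on $(\cA',\cB')$, so by induction there is a separating $\psi$ with $\leq r-1$ quantifiers, and $\exists y\,\psi$ then separates $(\cA,\cB)$ with $\leq r$ quantifiers (the witnesses verify $\cA\models\exists y\,\psi$, and exhaustiveness of $\cB'$ verifies $\cB\models\neg\exists y\,\psi$). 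If instead $\bS$ plays on the right, the symmetric argument produces $\forall y\,\psi$. Thus the side of $\bS$'s move dictates the outer quantifier.

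For the converse, I would do a structural induction on the separating formula $\varphi$, maintaining that a separating formula with $q$ quantifiers yields a win in $q$ rounds. The quantifier-free case is the base case above. If $\varphi=\exists y\,\psi$, then every $\cA$-board has a witness; $\bS$ plays those witnesses on the left, and since every $\cB$-board satisfies $\forall y\,\neg\psi$, the formula $\psi$ separates $(\cA',\cB')$ for \emph{any} $\bD$-response, so the inductive win on $(\cA',\cB')$ extends $\bS$'s move to a win. The case $\varphi=\forall y\,\psi$ is dual (play on the right), and $\varphi=\neg\psi$ is handled by noting $\psi$ separates $(\cB,\cA)$ together with the symmetry of the game under swapping sides. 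The Boolean cases are where the quantifier count behaves additively: for $\varphi=\psi_1\wedge\psi_2$ with $q_1+q_2$ quantifiers, partition $\cB=\cB_1\sqcup\cB_2$ according to which conjunct each board violates, so $\psi_i$ separates $(\cA,\cB_i)$ and, by induction, $\bS$ wins $(\cA,\cB_i)$ in $q_i$ rounds; $\bS$ then plays his $(\cA,\cB_1)$-strategy for $q_1$ rounds (playing arbitrarily on the $\cB_2$-boards when he is on the right) and follows with his $(\cA,\cB_2)$-strategy for $q_2$ rounds. Disjunction is dual, partitioning $\cA$.

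The hard part is justifying this sequential composition in the two Boolean cases, since the two sub-games share a side and the boards accumulate pebbles across phases. The key observation I would isolate as a small lemma is \emph{monotonicity of non-matching}: adding pebbles can only destroy matchings, because the reduct of a matching pair to any subset of pebbles is itself a matching pair. Consequently, the $\cB_1$-boards, already separated from $\cA$ after phase one, remain separated after the extra phase-two pebbles are placed; and the phase-two strategy, which was designed for clean $(\cA,\cB_2)$-boards, still works in the presence of the inert phase-one pebbles, since those pebbles neither restrict $\bD$'s available placements nor create new matchings. A final match between an $\cA$-board and a $\cB_2$-board would restrict to a match on the phase-two pebbles, contradicting the sub-game win. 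This monotonicity argument is exactly the mechanism that the paper later amplifies into the \emph{Parallel Play Lemma}, so verifying it carefully here is both the main obstacle and the conceptual crux.
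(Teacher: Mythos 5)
The paper itself states Theorem~\ref{thm:MSfundamental} explicitly \emph{without proof}, deferring to \cite{MScanon0, MScanon1}, so there is no in-paper argument to compare against line by line; judged on its own, your proof is correct. The forward direction (induction on $r$, using $\bD$'s oblivious/maximal response to justify the outer quantifier) and the atomic-type bridge at $r=0$ are the standard argument and are sound. The one genuinely delicate point is exactly the one you isolate: the sequential composition in the Boolean cases, and your monotonicity lemma (a matching pair restricted to a subset of pebbles is still a matching pair, so added pebbles only destroy matches) together with the projection argument (each phase-two board projects, by forgetting the inert phase-one pebbles, to a board of the clean sub-game position, so a final match would restrict to a match contradicting the sub-game win) does close it. Where your route differs from what the paper implicitly relies on: Lemma~\ref{lem:pattern} indicates that the cited proof works with \emph{prenex} separating formulas and tracks the quantifier signature as the pattern of $\bS$'s strategy, which makes the Boolean cases disappear entirely (prenexification preserves the number of quantifier occurrences, so one may assume the formula is prenex and induct only on the prefix). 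Your structural induction is more self-contained and proves the statement for arbitrary formulas directly, at the cost of the composition argument; the prenex route is shorter and yields the pattern refinement of Lemma~\ref{lem:pattern} for free, whereas your composed strategy's pattern is the concatenation of the sub-patterns and does not directly give that refinement. Either is acceptable for the theorem as stated.
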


In the theorem above, if $\cA$ and $\cB$ are sets of \emph{unpebbled} structures, and $\varphi$ is a sentence, we would call $\varphi$ a \emph{separating sentence} for $(\cA, \cB)$.

We note that $\bD$ has a clear optimal strategy in the MS game, called the \emph{oblivious} strategy: for each of $\bS$'s moves, $\bD$ can make all possible responses on each pebbled structure on the other side (making enough copies of each such pebbled structure to be able to make these responses). For any instance of the MS game, if $\bD$ has a winning strategy, then the oblivious strategy is winning. For this reason, the MS game is essentially a single-player game, where $\bS$ can simulate $\bD$'s responses himself.

We make an easy observation here without proof, that will help us \emph{discard} some boards during gameplay; we can remove them without affecting the result of the game. This will help us in the analysis of several results in the paper.
\begin{observation}\label{obs:discard}
During gameplay in any instance of the MS game, consider a board $\langle \bA ~|~ a_1, \ldots, a_k\rangle$ such that there is no board on the other side forming a matching pair with it. Then, $\langle \bA ~|~ a_1, \ldots, a_k\rangle$ can be removed from the game without affecting the result.
\end{observation}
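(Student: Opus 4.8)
The plan is to exploit a \emph{monotonicity} property of matching pairs: once two boards fail to match, no way of adding further pebbles can repair this. First I would establish the following downward-closure claim. Suppose $\langle \bA \mid a_1, \ldots, a_m\rangle$ and $\langle \bB \mid b_1, \ldots, b_m\rangle$ form a matching pair, witnessed by the isomorphism $f$ of induced substructures sending $a_i \mapsto b_i$ and $c^\bA \mapsto c^\bB$. Then for any $j \le m$, the restriction of $f$ to the elements $\{a_1, \ldots, a_j\} \cup \{c^\bA : c \in \tau\}$ is again a relation-preserving bijection onto the induced substructure on $\{b_1, \ldots, b_j\} \cup \{c^\bB : c \in \tau\}$, so $\langle \bA \mid a_1, \ldots, a_j\rangle$ and $\langle \bB \mid b_1, \ldots, b_j\rangle$ form a matching pair as well. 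Contrapositively, if $\langle \bA \mid a_1, \ldots, a_k\rangle$ and $\langle \bB \mid b_1, \ldots, b_k\rangle$ are \emph{not} a matching pair, then no pair of pebble-extensions of them can be a matching pair either.

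Next I would apply this to the board in question. Say $\langle \bA \mid a_1, \ldots, a_k\rangle$ lies on the left side and has no matching partner among the current right-side boards. Every board present at the end of the game is obtained from some board currently in play by placing additional pebbles in subsequent rounds (copying preserves the pebbles already placed); in particular, any end-of-game board descended from $\langle \bA \mid a_1, \ldots, a_k\rangle$ has the form $\langle \bA \mid a_1, \ldots, a_k, a_{k+1}, \ldots\rangle$, and any right-side board at the end has the form $\langle \bB \mid b_1, \ldots, b_k, b_{k+1}, \ldots\rangle$ extending some current right-side board $\langle \bB \mid b_1, \ldots, b_k\rangle$. If these two formed a matching pair, the downward-closure claim would force $\langle \bA \mid a_1, \ldots, a_k\rangle$ and $\langle \bB \mid b_1, \ldots, b_k\rangle$ to be a matching pair already, contradicting the hypothesis. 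Hence no descendant of $\langle \bA \mid a_1, \ldots, a_k\rangle$ can ever participate in a matching pair.

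Finally I would argue that deleting this ``dead'' board does not alter the run of the game on the remaining boards. The moves on distinct boards are independent: when $\bS$ plays on a side he chooses one element per board, and when $\bD$ responds she acts on each board separately (possibly after copying it), so the presence or absence of $\langle \bA \mid a_1, \ldots, a_k\rangle$ has no bearing on the pebbles placed on any other board. Since a matching pair at the end is an existential condition over pairs of surviving boards, and since the only pairs involving $\langle \bA \mid a_1, \ldots, a_k\rangle$ (or its descendants) are exactly those ruled out above, the set of achievable matching pairs --- and hence the winner --- is identical with or without the board. The only point requiring care is the downward-closure claim, i.e.\ checking that restricting the witnessing isomorphism preserves the matching-pair property; everything else is bookkeeping about the independence of the boards.
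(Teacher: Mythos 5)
The paper states this observation explicitly without proof, so there is no in-paper argument to compare against; your job is simply to supply the missing justification, and your proposal does so correctly. The two load-bearing points are exactly the ones you isolate: (i) the matching-pair relation is downward closed under removing pebbles (restricting the witnessing isomorphism of induced substructures to the first $j$ pebbles plus the constants is again an isomorphism of the corresponding induced substructures), so a board with no current partner can never acquire one through further pebbling; and (ii) since both players act on each board independently and the win condition is existential over pairs of surviving boards, deleting a board that can never participate in a matching pair changes neither the evolution of the remaining boards nor the winner. Both are handled correctly, so the proof is complete.
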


\subsection{Linear Orders}

Let $\tau_\mathsf{ord} = \angle{< ~;~ \mathsf{min},\mathsf{max}}$ be the vocabulary of orders, where $<$ is a binary predicate, and $\mathsf{min}$ and $\mathsf{max}$ are constant symbols. For every $\ell \geq 1$, we shall use $L_\ell$ to refer to a structure of type $\tau_\mathsf{ord}$, which interprets $<$ as a total linear order on $\ell + 1$ elements, and $\mathsf{min}$ and $\mathsf{max}$ as the first and last elements in that total order respectively. Note that there is only one linear order for any fixed value of $\ell$. Whenever unambiguous, we may suppress the subscript and refer to the linear order as simply $L$.

We define the \emph{length} of a linear order $L$ as the size of its universe minus one (equivalently, as the number of edges if the linear order were represented as a path graph). Hence, the length of $L_\ell$ is $\ell$. Since we only consider $\ell \geq 1$, the length is always positive; in particular, $\mathsf{min}$ and $\mathsf{max}$ are necessarily distinct. Our conventions are somewhat different from \cite{MScanon1} and \cite{MScanon2}, where the length of a linear order was the number of elements, and the vocabulary had no built-in constants. Note that having $\mathsf{min}$ and $\mathsf{max}$ is purely for convenience; each can be defined and reused at the cost of two quantifiers.

Let $L$ be a linear order with elements $a$ and $b$ satisfying $a < b$. The linear order $L[a, b]$ is the induced linear order on all elements from $a$ to $b$, both inclusive. If the variables $x$ and $y$ have been interpreted by $L$ so that $x^L = a$ and $y^L = b$, then we shall use $L[x, y]$ and $L[a, b]$ interchangeably; we adopt a similar convention for constants. If pebbles $\r$ and $\b$ have been placed on $L$ on $a$ and $b$ respectively, we use $L[\r, \b]$ and $L[a, b]$ interchangeably. Note that it is always the case that $L = L[\mathsf{min}, \mathsf{max}]$.

We will frequently need to consider sets of linear orders. For $\ell \geq 1$, we will use the notation $L_{\leq \ell}$ to denote the set of linear orders of length at most $\ell$, and $L_{> \ell}$ to denote the set of linear orders of length greater than $\ell$.

\subsection{Strings}

Let $\tau_\mathsf{string} = \langle <, ~ S ~;~ \mathsf{min},\mathsf{max}\rangle$ be the vocabulary of binary strings, where $<$ is a binary predicate, $S$ is a unary predicate, and $\mathsf{min}$ and $\mathsf{max}$ are constant symbols. We encode a string $w = (w_1, \ldots, w_n) \in \{0,1\}^n$ by the $\tau_\mathsf{string}$-structure $\mathbf{B}_w$ having universe $B_w = \{1, \dots, n\}$, relation $<$ interpreted by the linear order on $\{1, \dots, n\}$, relation $S = \{ i ~|~ w_i = 1 \} $, and $\mathsf{min}$ and $\mathsf{max}$ interpreted as $1$ and $n$ respectively.

For any $n$-bit string $w$, and any $1 \leq i \leq j \leq n$, denote by $w[i, j]$ the substring $w_i\ldots w_j$ of $w$. Note that $w[i, j]$ corresponds to the induced substructure of $\mathbf{B}_w$ on $\{i, \ldots, j\}$. Of course, $w = w[1, n]$. We will often interchangeably talk about the string $w$ and the $\tau_\mathsf{string}$-structure $B_w$, when the context is clear. Note that, as in $\tau_{\mathsf{ord}}$, having $\mathsf{min}$ and $\mathsf{max}$ in the vocabulary is purely for convenience.

\section{Parallel Play}\label{sec:parallel}

In this section, we prove our key lemma, which shows how, in certain cases, $\bS$ can combine his winning strategies in two sub-games, playing them in parallel in a single game that requires no more rounds than the longer of the two sub-games.

To understand why this is helpful, note that in general, if a formula $\varphi$ is of the form $\varphi_1\land\varphi_2$ or $\varphi_1\lor\varphi_2$, the number of quantifiers in $\varphi$ is the sum of the number of quantifiers in $\varphi_1$ and $\varphi_2$, even if the two subformulas have the same quantifier structure. We will see that playing parallel sub-games roughly corresponds to taking a $\varphi$ of the form $\varphi_1\land\varphi_2$ or $\varphi_1\lor\varphi_2$ where the subformulas have the same quantifier signature, and writing $\varphi$ with the same quantifier signature as $\varphi_1$ or $\varphi_2$, saving half the quantifiers we normally require.

Suppose $\bS$ has a winning strategy for an instance $(\cA, \cB)$ of the $r$-round MS game. In principle, the choice of which side $\bS$ plays on could depend on $\bD$'s previous responses. However, note that any strategy $\cS$ used by $\bS$ that wins against the oblivious strategy also wins against any other strategy that $\bD$ plays. Therefore, WLOG we may restrict ourselves to strategies used by $\bS$ against $\bD$'s oblivious strategy. It follows that the choice of which side to play on in every round is completely determined by the instance, independent of any of $\bD$'s responses. Let $\cS$ be such a winning strategy for $\bS$. We now define the \emph{pattern} of $\cS$, which specifies which side $\bS$ plays on in each round, when following $\cS$.

\begin{definition}\label{def:pattern}
Suppose $\cA$ and $\cB$ are sets of pebbled structures, and assume that $\bS$ has a winning strategy $\cS$ for the $r$-round $\ms$ game on $(\cA, \cB)$. The \emph{pattern} of $\cS$, denoted $\mathsf{pat}(\cS)$, is 
an $r$-tuple $(Q_1, \ldots, Q_r) \in \{\exists, \forall\}^r$, where:
\begin{equation*}
    Q_i = \begin{cases}
    \exists  & \text{ if $\bS$ plays in $\cA$ in round $i$,} \\
    \forall  & \text{ if $\bS$ plays in $\cB$ in round $i$.}
\end{cases}
\end{equation*}
We say that $\bS$ \emph{wins the game with pattern
$(Q_1,\ldots,Q_r)$} if $\bS$ has a winning strategy $\cS$ for the game in which $\mathsf{pat}(\cS) = (Q_1,\ldots,Q_r)$.
\end{definition}  

The following lemma is implicit in the proof of Theorem \ref{thm:MSfundamental}.

\begin{lemma}\label{lem:pattern}
For any two sets $\cA$ and $\cB$ of pebbled $\tau$-structures, the following are equivalent:
\begin{enumerate}
\item $\bS$ wins the $r$-round $\ms$ game on $(\cA, \cB)$ with pattern $(Q_1,\ldots,Q_r)$.
\item $(\cA,\cB)$ has a separating formula with $r$ quantifiers and quantifier signature $(Q_1,\ldots,Q_r)$.
\end{enumerate}
\end{lemma}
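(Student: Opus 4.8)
The plan is to prove both implications simultaneously by induction on the number of rounds $r$, refining the (standard) proof of Theorem~\ref{thm:MSfundamental} by additionally tracking, at each round, the correspondence between the side $\bS$ plays on and the type of the quantifier introduced. The guiding dictionary is that $\bS$ playing on the left ($\cA$), the side where the separating formula must hold, corresponds to an existential quantifier, while $\bS$ playing on the right ($\cB$), the side where it must fail, corresponds to a universal quantifier --- exactly as in Definition~\ref{def:pattern}. I will maintain the invariant that the separating formulas produced are in prenex normal form, so that the quantifier signature of the formula is literally its quantifier prefix $(Q_1,\ldots,Q_r)$, and one quantifier is appended per round.

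For the base case $r=0$, the empty pattern is the only option, and a $0$-quantifier (quantifier-free) separating formula is one that is true on every board of $\cA$ and false on every board of $\cB$. Such a formula exists iff no board of $\cA$ and no board of $\cB$ realize the same atomic type on their pebbled tuples (together with the constants), which is precisely the condition that there is no matching pair, i.e.\ that $\bD$ loses the $0$-round game; concretely one takes the disjunction of the finitely many atomic types realized in $\cA$.

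For the inductive step I would case on $Q_1$. Suppose first $Q_1 = \exists$, so $\bS$ plays in $\cA$ in round~$1$. Going from (1) to (2), $\bS$'s winning strategy specifies a witness element $g(\bA)$ on each board $\langle \bA \mid \bar a\rangle \in \cA$; against $\bD$'s oblivious response on the right, this leaves $\bS$ with a win on $(\cA', \cB')$ with pattern $(Q_2,\ldots,Q_r)$, where $\cA'$ collects the boards $\langle \bA \mid \bar a, g(\bA)\rangle$ and $\cB'$ collects all of $\bD$'s oblivious extensions on the right. By induction $(\cA',\cB')$ has a prenex separating formula $\psi$ with signature $(Q_2,\ldots,Q_r)$, and then $\exists x_{k+1}\,\psi$ separates $(\cA,\cB)$: the witnesses $g(\bA)$ certify it on the left, while the oblivious sweep certifies its negation on the right. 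Conversely, given a prenex $\exists x_{k+1}\,\psi$ separating $(\cA,\cB)$, I read off $\bS$'s first move on $\cA$ by choosing, on each left board, an element witnessing the existential; $\bD$'s oblivious reply yields an instance $(\cA',\cB')$ separated by $\psi$, to which the inductive hypothesis applies to recover a win with pattern $(Q_2,\ldots,Q_r)$, hence with pattern $(Q_1,\ldots,Q_r)$ overall. The case $Q_1 = \forall$ is entirely symmetric, with the roles of $\cA$/$\cB$ and $\exists$/$\forall$ interchanged, using $\lnot\forall x\,\psi \equiv \exists x\,\lnot\psi$ so that $\bS$'s choice on the right supplies the falsifying witness.

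The step I expect to require the most care is the bookkeeping that aligns a single leading quantifier with $\bS$'s round-$1$ move when $\bS$ may pebble \emph{different} elements on different boards of the chosen side. The resolution is that one existential (resp.\ universal) quantifier is permitted distinct witnesses (resp.\ counterexamples) across the different structures on which it is evaluated, which is exactly the freedom encoded by $\bS$'s per-board choices together with $\bD$'s oblivious strategy; invoking the oblivious strategy (and, where convenient, Observation~\ref{obs:discard} to discard boards with no match on the other side) is what makes the round-by-round reduction clean. Since this is the only genuinely delicate point and the remaining manipulations are the routine inductive prenex constructions already implicit in Theorem~\ref{thm:MSfundamental}, the induction closes and yields the claimed equivalence.
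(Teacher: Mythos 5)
Your proof is correct and is essentially the argument the paper has in mind: the paper states this lemma without proof, remarking only that it is ``implicit in the proof of Theorem~\ref{thm:MSfundamental},'' and your round-by-round induction (left move $\leftrightarrow$ $\exists$, right move $\leftrightarrow$ $\forall$, oblivious responses collapsing to ``all extensions,'' atomic types in the base case) is exactly that implicit refinement made explicit. The only definitional point worth a sentence in a write-up is that for direction (2)$\Rightarrow$(1) you treat the separating formula as prenex so that its quantifier signature is its prefix, which matches how the paper uses ``quantifier signature'' throughout.
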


Note that Lemma \ref{lem:pattern} implies that, as long as there is a separating formula $\varphi$ for $(\cA, \cB)$ with $r$ quantifiers, $\bS$ has a winning strategy for the $r$-round MS game on $(\cA, \cB)$ that ``follows'' $\varphi$; namely, if $\varphi = Q_1\ldots Q_r\psi$, then in round $i$, $\bS$ plays in $\cA$ if $Q_i = \exists$, and in $\cB$ if $Q_i = \forall$. Hence, for the rest of the paper, we will refer to $\bS$ moves in $\cA$ and $\cB$ as \emph{existential} and \emph{universal} moves respectively. We are now ready to state our main lemma from this section.

\begin{lemma}[Parallel Play Lemma --- Baby Version]\label{lem:parallelplaybaby}
Let  $\cA$ and $\cB$ be two sets of pebbled structures, and let $r \in \mathbb{N}$. Suppose that $\cA$ and $\cB$ can be partitioned as $\cA = \cA_1 \sqcup \cA_2$ and $\cB = \cB_1 \sqcup \cB_2$
respectively, such that for $1 \leq i \leq 2$, $\bS$ has a winning strategy $\cS_i$ for the $r$-round MS game on $(\cA_i,\cB_i)$, satisfying:
\begin{enumerate}
\item Both $\cS_i$'s have the same pattern $P = \mathsf{pat}(\cS_1)=\mathsf{pat}(\cS_2)$.
\item At the end of the sub-games, both of the following are true:
\begin{itemize}
    \item There does not exist a board in $\cA_1$ and a board in $\cB_2$ forming a matching pair.
    \item There does not exist a board in $\cA_2$ and a board in $\cB_1$ forming a matching pair.
\end{itemize}
\end{enumerate}
Then $\bS$ wins the $r$-round MS game on $(\cA,\cB)$ with pattern $P$.
\end{lemma}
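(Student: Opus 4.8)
The plan is to give a direct game-theoretic argument: $\bS$ plays the two sub-games \emph{simultaneously} on the combined instance $(\cA,\cB)$, using $\cS_1$ on the boards descended from $\cA_1\cup\cB_1$ and $\cS_2$ on those descended from $\cA_2\cup\cB_2$. Since $\bD$'s oblivious strategy is optimal (as recorded in the preliminaries), it suffices to exhibit a strategy for $\bS$ that beats oblivious $\bD$, so I would let $\bD$ play obliviously throughout. Write $P=(Q_1,\ldots,Q_r)$. The key bookkeeping device is \emph{provenance}: every board occurring during play is a (possibly copied, possibly further pebbled) descendant of a unique original board, which lies in exactly one of $\cA_1,\cA_2,\cB_1,\cB_2$ because the two partitions are disjoint. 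Call a board \emph{type $1$} or \emph{type $2$} according to whether its ancestor lies in $\cA_1\cup\cB_1$ or $\cA_2\cup\cB_2$. Copying and pebbling preserve type, so at every moment $\cA$ splits as (type-$1$ boards) $\sqcup$ (type-$2$ boards), and likewise $\cB$. I would then define $\bS$'s combined strategy $\cS$: in round $i$ he plays on the side dictated by $Q_i$ (the same for both sub-games, by hypothesis~1), and on each board of that side he places his pebble exactly where $\cS_1$ or $\cS_2$ prescribes, according to that board's type.

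\textbf{Decoupling.} Next I would prove, by induction on the round number, that the type-$1$ boards in the combined game evolve exactly as the boards of the standalone sub-game $(\cA_1,\cB_1)$ played with $\cS_1$ against oblivious $\bD$, and symmetrically for type $2$. The point is that a single round changes a board in a way that depends only on (a) which side is active --- identical across the combined game and both sub-games, since all three follow pattern $P$, which is exactly where hypothesis~1 is used --- and (b) either $\bS$'s prescribed move on that board, which $\cS_i$ determines from the type-$i$ configuration alone (equal to the standalone sub-game-$i$ configuration by the inductive hypothesis), or $\bD$'s oblivious response, which ranges over \emph{all} placements on each active board independently of the other boards. Hence the type-$1$ and type-$2$ populations never interact, and each faithfully reproduces its sub-game. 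In particular, at the end the type-$1$ left boards equal the final $\cA_1$ of sub-game $1$, the type-$1$ right boards equal the final $\cB_1$, and similarly for type $2$. (One should note that $\bS$ uses a single common color $y_i$ in round $i$ across both types, so every surviving board carries the same pebble set; relabeling the colors used by $\cS_1$ and $\cS_2$ to agree is harmless and affects no matching relation.)

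\textbf{Conclusion and main obstacle.} Finally, $\bD$ wins the combined game iff some left board forms a matching pair with some right board, and there are exactly four kinds of such pairs, indexed by the types of the two boards. A type-$1$/type-$1$ pair would be a match in sub-game $1$, impossible since $\cS_1$ wins it; a type-$2$/type-$2$ pair is impossible since $\cS_2$ wins sub-game $2$; a type-$1$-left/type-$2$-right pair is precisely a matching pair between final $\cA_1$ and final $\cB_2$, excluded by the first bullet of hypothesis~2; and a type-$2$-left/type-$1$-right pair is excluded by the second bullet. So no matching pair survives, $\bS$ wins, and since $\cS$ follows $P$ he wins with pattern $P$. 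I expect the main obstacle to be stating the decoupling step cleanly: one must argue carefully that $\bS$'s move on a type-$i$ board genuinely depends only on the type-$i$ sub-configuration, so that the combined strategy is well defined and simulates each $\cS_i$ verbatim, and one must confirm that ``the end of the sub-games'' in hypothesis~2 refers to the configurations reached against oblivious $\bD$ --- the very configurations that appear, as subsets, inside the combined game. Everything else is routine bookkeeping.
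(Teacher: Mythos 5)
Your proposal is correct and takes essentially the same approach as the paper: $\bS$ plays $\cS_1$ and $\cS_2$ in parallel (well-defined because the patterns agree), and the final case analysis on the four type-combinations of potential matching pairs is exactly the paper's argument. Your provenance/decoupling induction is just a careful spelling-out of the bookkeeping the paper leaves implicit.
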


\begin{proof}
$\bS$ plays the $r$-round MS game on $(\cA,\cB)$ by playing his winning strategies, $\cS_i$ on $(\cA_i,\cB_i)$ simultaneously for all $i$ in parallel. This is a well-defined strategy, since both $\cS_i$'s have the same pattern $P$. At the end of the game, for $1 \leq i, j \leq 2$:
\begin{itemize}
\item For $i = j$, no board from $\cA_i$ forms a matching pair with a board from $\cB_j$, since $\bS$ wins the sub-game $(\cA_i, \cB_i)$.
\item For $i \neq j$, no board from $\cA_i$ forms a matching pair with a board from $\cB_j$, by assumption.
\end{itemize}
Therefore, no matching pair remains after round $r$, and so, $\bS$ wins the game. The pattern for this strategy is $P$ by construction.
\end{proof}

We observe that a useful generalization of Lemma \ref{lem:parallelplay} holds, namely that we can
weaken assumption 1 in the statement of the lemma, so that both of the patterns are subsequences of
a pattern $P$ of length $r$. This is because $\bS$ can simply extend $P_i$ for
strategy $\cS_i$ to a strategy $\cS'_i$ of pattern $P$, where for every ``missing'' entry $Q_i$ from $P$, $\bS$ makes a dummy move placing pebble $i$ on the corresponding side. Thus the two sub-games are both of length $r$ and have pattern $P$. We state this generalization below without a proof.

\begin{lemma}[Parallel Play Lemma]\label{lem:parallelplay}
Let  $\cA$ and $\cB$ be two sets of pebbled structures, and let $r \in \mathbb{N}$. Let $P \in \{\exists, \forall\}^r$ be a sequence of quantifiers of length $r$. Suppose that 
$\cA$ and $\cB$ can be partitioned as $\cA = \cA_1 \sqcup \cA_2$ and $\cB = \cB_1
\sqcup \cB_2$ respectively, such that for $1 \leq i \leq 2$, $\bS$ has a winning
strategy $\cS_i$ for the $r_i$-round MS game on $(\cA_i,\cB_i)$, with $r_i \leq r$, satisfying:
\begin{enumerate}
\item For $1\leq i\leq 2$, $\mathsf{pat}(\cS_i)$ is a subsequence of $P$.
\item At the end of the sub-games, for $i \neq j$, there does not exist a board in $\cA_i$ and a board in $\cB_j$ forming a matching pair.
\end{enumerate}
Then $\bS$ wins the $r$-round MS game on $(\cA,\cB)$ with pattern $P$.
\end{lemma}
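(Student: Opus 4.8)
The plan is to reduce the Parallel Play Lemma (Lemma~\ref{lem:parallelplay}) to the Baby Version (Lemma~\ref{lem:parallelplaybaby}) by a padding argument, following the sketch already given in the paragraph preceding the statement. The two hypotheses to be bridged are: (i) the Baby Version requires both strategies to have \emph{exactly} the same pattern of length $r$, whereas here each $\cS_i$ wins an $r_i$-round game for some $r_i \le r$, with $\mathsf{pat}(\cS_i)$ merely a \emph{subsequence} of the common target pattern $P$; and (ii) the matching-pair condition (assumption 2) is stated identically in both lemmas, so only the pattern mismatch needs genuine work.

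First I would fix the target pattern $P = (Q_1, \ldots, Q_r) \in \{\exists, \forall\}^r$. For each $i \in \{1,2\}$, since $\mathsf{pat}(\cS_i)$ is a subsequence of $P$, I would choose an order-preserving injection $\iota_i : \{1, \ldots, r_i\} \hookrightarrow \{1, \ldots, r\}$ that witnesses this, i.e.\ such that the $j$-th quantifier of $\cS_i$ equals $Q_{\iota_i(j)}$. The construction is then to promote each $\cS_i$ to a padded strategy $\cS_i'$ of pattern exactly $P$: in round $m$ of the $r$-round game, if $m = \iota_i(j)$ for some $j$, then $\bS$ makes the genuine move that $\cS_i$ prescribes in its $j$-th round; if $m$ is not in the image of $\iota_i$, then $m$ is a ``dummy'' round, and $\bS$ plays an arbitrary legal move on the side dictated by $Q_m$ (for concreteness, placing the pebble on the element already carrying some fixed previously-played pebble, or on $\mathsf{min}$ if none exists). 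I would verify that $\cS_i'$ is well-defined, has $\mathsf{pat}(\cS_i') = P$, and is still winning on $(\cA_i, \cB_i)$.

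The key step is this last verification, and it is where I expect the only real subtlety to lie. The point is that dummy universal moves ($\bS$ playing in $\cB_i$) only help $\bS$ or leave the situation unchanged: every board in $\cB_i$ is forced to receive a pebble, which can only break matches, never create them; while dummy existential moves ($\bS$ playing in $\cA_i$) must be shown not to hurt $\bS$. Here I would invoke the equivalence of Lemma~\ref{lem:pattern}: since $\cS_i$ is winning with pattern $\mathsf{pat}(\cS_i)$, there is a separating formula $\varphi_i$ with that quantifier signature; padding corresponds to inserting dummy quantifiers into $\varphi_i$ (e.g.\ $\exists y\,(\cdots)$ where $y$ does not occur free in the matrix, or a bounded re-pebbling that preserves the substructure type), yielding a separating formula $\varphi_i'$ of signature $P$, and translating back through Lemma~\ref{lem:pattern} gives the padded winning strategy $\cS_i'$ directly. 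This logic-side argument cleanly certifies that padding preserves separation, sidestepping a case analysis on the game board.

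Finally, I would observe that the padded strategies $\cS_1'$ and $\cS_2'$ now satisfy both hypotheses of the Baby Version: they share the identical pattern $P$ of length $r$ (assumption~1 of Lemma~\ref{lem:parallelplaybaby}), and assumption~2 (the cross-matching condition for $i \ne j$) is inherited verbatim from the hypothesis of the present lemma, since padding does not alter the final boards' matching relationships across sub-games in any way that creates a cross-match --- indeed inserting extra pebbles can only destroy matching pairs. Applying Lemma~\ref{lem:parallelplaybaby} to $(\cA_1, \cB_1)$ and $(\cA_2, \cB_2)$ with strategies $\cS_1', \cS_2'$ then yields that $\bS$ wins the $r$-round MS game on $(\cA, \cB)$ with pattern $P$, as required. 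The main obstacle, to reiterate, is giving an airtight justification that dummy moves preserve $\bS$'s win; routing it through the separating-formula correspondence of Lemma~\ref{lem:pattern} is what I would rely on to keep that step short.
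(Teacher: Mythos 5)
Your proposal is correct and follows essentially the same route as the paper: the paper states this lemma without a formal proof, but the paragraph preceding it sketches exactly your padding argument --- extend each $\cS_i$ to a strategy of pattern $P$ by inserting dummy moves at the ``missing'' positions, then invoke Lemma~\ref{lem:parallelplaybaby}. Your additional step of certifying that dummy moves preserve the win by passing through the separating-formula correspondence of Lemma~\ref{lem:pattern} (inserting vacuous quantifiers into $\varphi_i$) is a clean way to make rigorous what the paper leaves implicit, but it is a refinement of the same argument rather than a different one.
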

\section{Linear Orders}\label{sec:linearorders}

Let $r(\ell)$ (resp.~$q(\ell)$) be the minimum quantifier rank (resp.~number of quantifiers) needed to separate $L_{\leq \ell}$ and $L_{>\ell}$. Let $q_\forall(\ell)$ (resp.~$q_\exists(\ell)$) be the minimum number of quantifiers needed to separate $L_{\leq \ell}$ and $L_{>\ell}$ with a sentence whose prenex normal form starts with $\forall$ (resp.~$\exists$). Note that $q(\ell) = \min(q_\forall(\ell),q_\exists(\ell))$. The
values of $r(\ell)$ are well understood \cite{ROSENSTEIN:1982}:

\begin{theorem}[Quantifier Rank, \cite{ROSENSTEIN:1982}]\label{rosey-fact}
For $\ell\geq 1$, we have $r(\ell) = 1 + \floor{\log(\ell)}$.
\end{theorem}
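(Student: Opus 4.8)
The plan is to pass from quantifier rank to Ehrenfeucht--Fra\"iss\'e games (EF games) and then invoke the classical analysis of EF games on finite linear orders. First I would record the standard duality: a sentence of quantifier rank $k$ separates $L_{\leq \ell}$ from $L_{>\ell}$ if and only if no $L \in L_{\leq \ell}$ is $\equiv_k$-equivalent to any $L' \in L_{>\ell}$, where $\equiv_k$ denotes agreement on all $\tau_{\mathsf{ord}}$-sentences of quantifier rank at most $k$. The forward direction is immediate; for the converse, there are only finitely many $\equiv_k$-classes over $\tau_{\mathsf{ord}}$ and each is defined by a single rank-$k$ Hintikka sentence, so one can take the disjunction of the Hintikka sentences of the classes that meet $L_{\leq \ell}$. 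Combined with the Ehrenfeucht--Fra\"iss\'e theorem (namely $L \equiv_k L'$ iff $\bD$ wins the $k$-round EF game on $(L, L')$), this yields $r(\ell) = \min\{k : \bS \text{ wins the } k\text{-round EF game on } (L, L') \text{ for every } L \in L_{\leq \ell} \text{ and } L' \in L_{>\ell}\}$.

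Next I would analyze these games using the built-in constants. Because $\mathsf{min}$ and $\mathsf{max}$ are interpreted at the two endpoints of every linear order, they behave like pebbles pre-placed on the first and last elements, and $\bD$ is forced to match endpoints to endpoints. Hence the $k$-round game on $(L_m, L_n)$ collapses to the ordinary $k$-round EF game on the sub-orders lying strictly between $\mathsf{min}$ and $\mathsf{max}$, which are linear orders on $m-1$ and $n-1$ elements respectively. At this point I would invoke the classical theorem (as in \cite{ROSENSTEIN:1982}) that $\bD$ wins the $k$-round EF game on two finite linear orders of sizes $p$ and $q$ if and only if $p = q$ or $p, q \geq 2^k - 1$.

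Finally I would combine these facts and do the arithmetic. For $m \leq \ell < n$ we always have $m \neq n$, so $\bS$ wins the game on $(L_m, L_n)$ exactly when at least one interior has fewer than $2^k - 1$ elements, i.e.\ when $m < 2^k$ or $n < 2^k$. Requiring this for all such pairs, the binding case is $m = \ell$ with $n$ arbitrarily large, which forces $\ell < 2^k$; conversely, if $\ell < 2^k$ then $m - 1 \leq \ell - 1 < 2^k - 1$ for every $m \leq \ell$, so $\bS$ wins every pair at once. Therefore $r(\ell) = \min\{k : 2^k > \ell\} = 1 + \floor{\log(\ell)}$.

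The substantive ingredient is the $2^k - 1$ threshold for EF games on linear orders, which I am treating as known from \cite{ROSENSTEIN:1982}; a self-contained argument proceeds by induction on $k$, with $\bS$ repeatedly bisecting the larger order for the hardness direction and $\bD$ preserving gaps up to $2^{k-1}$ for the other. The only genuinely delicate point in adapting it here is the bookkeeping around the constants: it is the presence of $\mathsf{min}$ and $\mathsf{max}$ that shifts the relevant quantity from the total number of elements to the number of \emph{interior} elements, and this shift is exactly what produces the clean bound $1 + \floor{\log(\ell)}$ rather than the slightly larger value one would get for pure orders with no built-in endpoints.
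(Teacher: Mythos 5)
The paper offers no proof of this statement; it is quoted as a known fact from \cite{ROSENSTEIN:1982}, so there is no internal argument to compare yours against. That said, your proof is correct and is the standard derivation. The reduction from ``some rank-$k$ sentence separates $L_{\leq\ell}$ from $L_{>\ell}$'' to ``no $\equiv_k$-equivalence class meets both sets'' via Hintikka sentences is sound (the disjunction is finite because there are only finitely many rank-$k$ Hintikka sentences over $\tau_{\mathsf{ord}}$); the presence of the constants $\mathsf{min}$ and $\mathsf{max}$ does collapse the game on $(L_m,L_n)$ to the pure EF game on the interiors, since a move on an endpoint is wasted and the relation of any interior pebble to the endpoints is forced; and you correctly invoke the \emph{tight} threshold ($\bD$ wins $k$ rounds on pure orders of sizes $p,q$ iff $p=q$ or $p,q\ge 2^k-1$) rather than the looser $2^k$ sufficient condition often quoted, which is exactly what is needed to land on $1+\lfloor\log(\ell)\rfloor$. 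The final bookkeeping also checks out: the interior of $L_m$ has $m-1$ elements, so $\bS$ wins the pair $(L_m,L_n)$ iff $m<2^k$ or $n<2^k$, the binding constraint over all $m\le\ell<n$ is $\ell<2^k$, and $\min\{k: 2^k>\ell\}=1+\lfloor\log(\ell)\rfloor$, consistent with the sample values $r(3)=2$ and $r(4)=3$ in Table~\ref{q*-table}.
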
\quad

Quantifier rank is a lower bound for number of quantifiers, so $r(\ell) \leq q(\ell)$ for all $\ell$. On the other hand, in \cite[Theorem 4]{MScanon2}, it is shown that the number of quantifiers needed to express a property can in general be super-exponentially more than the quantifier rank needed to express the same property. In this section, we shall show that this is not true of lengths: for each $\ell > 0$, we will show that $\bS$ can always separate $L_{\leq \ell}$ from $L_{>\ell}$ in at most $r(\ell) + 1$ rounds, which immediately shows that $q(\ell) \leq r(\ell) + 1$. In fact, we will also show that both $q_\forall(\ell)$ and $q_\exists(\ell)$ are bounded above by $r(\ell) + 1$.

Throughout this section, for notational convenience, we denote by $\textrm{MSL}_{\exists,r}(\ell)$ an $r$-round MS game on $(L_{\leq \ell}, L_{> \ell})$, in which $\bS$ \emph{must} play an existential first round move. We use $\textrm{MSL}_{\forall,r}(\ell)$ analogously, where the first round move \emph{must} be universal. Observe that, \emph{a priori}, for $Q \in \{\exists, \forall\}$, the $\textrm{MSL}_{Q,r}(\ell)$ may be winnable by either $\bS$ or $\bD$. Since we are primarily interested in analyzing games from $\bS$'s perspective, we restrict our attention only to $\bS$-winnable games. We call such games simply \emph{winnable}.

We now define a divide-and-conquer strategy for $\bS$ to play winnable instances $\textrm{MSL}_{Q,r}(\ell)$. We will start by building some intuition through an example game, before formally specifying the strategy. This strategy will implicitly give us upper bounds on $q_\exists(\ell)$ and $q_\forall(\ell)$, which we will then relate to $r(\ell)$. We will conclude the section with two useful results characterizing the quantifier pattern associated with the strategy, which will be used in Section \ref{sec:strings}.

For the rest of this section, for convenience, we define the \emph{closest-to-midpoint} of a linear order $L[x, y]$ as the element halfway between the elements corresponding to $x$ and $y$ if $L[x, y]$ has even length, or the one just to its left if $L[x, y]$ has odd length.

\subsection{A Strategy and an Example}\label{sec:cma}

We now develop a recursive strategy for $\bS$ for winnable instances $\textrm{MSL}_{Q,r}(\ell)$, called \emph{Closest-to-Midpoint with Alternation} ($\mathsf{CMA}$). The pattern for this strategy will alternate between $\exists$ and $\forall$, splitting each game recursively into two smaller sub-games that can be played in parallel using Lemma \ref{lem:parallelplay}. As we shall see, in these sub-games, placed pebbles will take on the roles of $\mathsf{min}$ and $\mathsf{max}$. $\bS$ will continue in this way until the sub-games are defined on linear orders of length $2$ or less, at which point he can win them easily.

The idea for the $\mathsf{CMA}$ strategy on $\textrm{MSL}_{Q,r}(\ell)$ is for $\bS$ to use the following rules throughout the game, except possibly the last three rounds:
\begin{itemize}
    \item $\bS$ starts on his designated side (determined by $Q$), and then alternates in every round;
    \item On every board, $\bS$ plays on the closest-to-midpoint of a linear order $L[x, y]$, chosen carefully to ensure he essentially ``halves'' the length of the instance every round.
\end{itemize}
Note that one consequence of the second point above is that $\bS$ will \textit{never} play on $\mathsf{max}$.

Before getting to a formal description of the strategy, let us illustrate the main idea through a worked example. Consider the (winnable) game $\textrm{MSL}_{\exists,4}(5)$.

In round $1$, $\bS$ plays on the closest-to-midpoint of all boards in $L_{\leq 5}$ (by the two conditions in the $\mathsf{CMA}$ strategy). Before $\bD$'s response, we reach the position shown in Figure \ref{fig:strategy_sample_game}.

\begin{figure*}
\begin{center}
\tikzset{Red/.style={circle,draw=red, inner sep=0pt, minimum size=.9cc, line width=2pt}}
\tikzset{RedOuter/.style={circle,draw=red, inner sep=0pt, minimum size=1.2cc, line width=2pt}}
\tikzset{BlueOuter/.style={circle,draw=blue, inner sep=0pt, minimum size=1.25cc, line width=2pt}}
\tikzset{Blue/.style={circle,draw=blue, inner sep=0pt, minimum size=.9cc, line width=2pt}}
\tikzset{Green/.style={circle,draw=dg, inner sep=0pt, minimum size=.9cc, line width=2pt}}
\tikzset{Black/.style={circle,draw=black, inner sep=0pt, minimum size=.9cc, line width=1pt}}
\tikzset{TreeNode/.style={rectangle,draw=black, inner sep=0pt, minimum size=1.3cc, line width=1pt}}
\tikzset{BigTreeNode/.style={rectangle,draw=black, inner sep=0pt, minimum size=1.8cc, line width=1pt}}
\tikzset{Tleft/.style={ellipse,draw=black, inner sep=0pt, minimum size=.9cc,
    line width=1pt,fill=orange!20}}
\tikzset{TRleft/.style={ellipse,draw=red, inner sep=0pt, minimum size=.9cc,
    line width=2pt,fill=orange!20}}
\tikzset{Tright/.style={ellipse,draw=black, inner sep=0pt, minimum size=.9cc,
    line width=1pt,fill=indigo!20}}

\begin{tikzpicture}[scale=.09]
\node at (-35,74) {$L_{\leq 5}$};
\node at (40,74) {$L_{> 5}$};
\node at (40,46) {$\vdots$};

\node [TreeNode] (T0) at(-10,64)  {$\exists \r$};
\node [TRleft] (a0) at (-40,64) {{\scriptsize \mn}};
\node [Tright] (a1) at (-30,64) {{\scriptsize \mx}};
\node [Tleft] (b0) at (10,64) {{\scriptsize \mn}};
\node [Black] (b1) at (20,64) {{\scriptsize $$}};
\node [Black] (b2) at (30,64) {{\scriptsize $$}};
\node [Black] (b3) at (40,64) {{\scriptsize $$}};
\node [Black] (b4) at (50,64) {{\scriptsize $$}};
\node [Black] (b5) at (60,64) {{\scriptsize $$}};
\node [Tright] (b6) at (70,64) {{\scriptsize \mx}};

\foreach \from/\to in {a0/a1,b0/b1,b1/b2,b2/b3,b3/b4,b4/b5,b5/b6}
\draw[-,line width=1pt,color=black] (\from) -- (\to);

\node [Tleft] (1a0) at (-45,58) {{\scriptsize \mn}};
\node [Red]   (1a1) at (-35,58) {{\scriptsize $\r$}};
\node [Tright] (1a2) at (-25,58) {{\scriptsize \mx}};
\node [Tleft] (1b0) at (5,58) {{\scriptsize \mn}};
\node [Black] (1b1) at (15,58) {{\scriptsize $$}};
\node [Black] (1b2) at (25,58) {{\scriptsize $$}};
\node [Black] (1b3) at (35,58) {{\scriptsize $$}};
\node [Black] (1b4) at (45,58) {{\scriptsize $$}};
\node [Black] (1b5) at (55,58) {{\scriptsize $$}};
\node [Black] (1b6) at (65,58) {{\scriptsize $$}};
\node [Tright] (1b7) at (75,58) {{\scriptsize \mx}};
\foreach \from/\to in {1a0/1a1,1a1/1a2,1b0/1b1,1b1/1b2,1b2/1b3,1b3/1b4,1b4/1b5,1b5/1b6,1b6/1b7}
\draw[-,line width=1pt,color=black] (\from) -- (\to);

\node [Tleft] (2a0) at (-50,52) {{\scriptsize \mn}};
\node [Red] (2a1) at (-40,52) {{\scriptsize $\r$}};
\node [Black] (2a2) at (-30,52) {{\scriptsize $$}};
\node [Tright] (2a3) at (-20,52) {{\scriptsize \mx}};
\foreach \from/\to in {2a0/2a1,2a1/2a2,2a2/2a3}
\draw[-,line width=1pt,color=black] (\from) -- (\to);

\node [Tleft] (3a0) at (-55,46) {{\scriptsize \mn}};
\node [Black] (3a1) at (-45,46) {{\scriptsize $$}};
\node [Red] (3a2) at (-35,46) {{\scriptsize $\r$}};
\node [Black] (3a3) at (-25,46) {{\scriptsize $$}};
\node [Tright] (3a4) at (-15,46) {{\scriptsize \mx}};
\foreach \from/\to in {3a0/3a1,3a1/3a2,3a2/3a3,3a3/3a4}
\draw[-,line width=1pt,color=black] (\from) -- (\to);

\node [Tleft] (4a0) at (-60,40) {{\scriptsize \mn}};
\node [Black] (4a1) at (-50,40) {{\scriptsize $$}};
\node [Red] (4a2) at (-40,40) {{\scriptsize $\r$}};
\node [Black] (4a3) at (-30,40) {{\scriptsize $$}};
\node [Black] (4a4) at (-20,40) {{\scriptsize $$}};
\node [Tright] (4a5) at (-10,40) {{\scriptsize \mx}};
\foreach \from/\to in {4a0/4a1,4a1/4a2,4a2/4a3,4a3/4a4,4a4/4a5}
\draw[-,line width=1pt,color=black] (\from) -- (\to);
\end{tikzpicture}
\end{center}
\caption{The position after $\bS$'s round $1$ move in the game $\textrm{MSL}_{\exists,4}(5)$. The pebble $\r$ is on the closest-to-midpoint of every board on the left.}
\label{fig:strategy_sample_game}
\end{figure*}
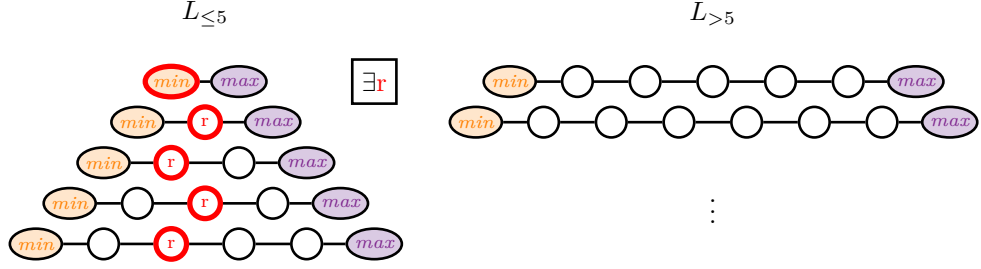  

Now assume $\bD$ responds obliviously. We can first use Observation \ref{obs:discard} to discard all boards on the right with $\r$ on $\mathsf{max}$. By virtue of $\bS$'s first move, every board $\langle L ~|~ a_1\rangle$ on the left satisfies \emph{both} $L[\mathsf{min}, \r] \leq 2$, and $L[\r, \mathsf{max}] \leq 3$. Now consider any board $\langle L' ~|~ a'_1\rangle$ on the right. Note that either $L'[\mathsf{min}, \r] > 2$, or $L'[\r, \mathsf{max}] > 3$. Partition the right side as $\cB_1 \sqcup \cB_2$, where every $\langle L' ~|~ a'_1\rangle \in \cB_1$ satisfies $L'[\mathsf{min}, \r] > 2$, and every $\langle L' ~|~ a'_1\rangle \in \cB_2$ satisfies $L'[\r, \mathsf{max}] > 3$.

In round $2$, $\bS$ makes a universal move (by the first condition in the $\mathsf{CMA}$ strategy). In all boards in $\cB_1$, he plays pebble $\b$ on the closest-to-midpoint of $L'[\mathsf{min}, \r]$; similarly, in all boards in $\cB_2$, he plays pebble $\b$ on the closest-to-midpoint of $L'[\r, \mathsf{max}]$. Note that in either case, $\bS$ plays $\b$ on an element which is not on $\r$, $\mathsf{min}$, or $\mathsf{max}$.

After $\bD$ responds obliviously, we can use Observation \ref{obs:discard} to discard all boards on the left where $\b$ is on $\mathsf{min}$, $\mathsf{max}$, or $\r$. Since in particular this discards all boards on the left with $\r$ on $\mathsf{min}$, we can again use Observation \ref{obs:discard} to discard all boards from the right which have $\r$ on $\mathsf{min}$. Every remaining board in $\cB_1$ (resp.~$\cB_2$) corresponds to the isomorphism class $\mathsf{min} < \b < \r < \mathsf{max}$ (resp.~$\mathsf{min} < \r < \b < \mathsf{max}$). The remaining boards on the left also correspond to exactly one of those classes. Partition the left side as $\cA_1 \sqcup \cA_2$ accordingly.

Now, because of this difference in isomorphism classes, we will never obtain a matching pair from $\cA_1$ and $\cB_2$ (or from $\cA_2$ and $\cB_1$). Furthermore, for the rest of the game, $\bS$ will \emph{only} play inside $L[\mathsf{min}, \r]$ on all boards in $\cA_1$ and $\cB_1$, and inside $L[\r, \mathsf{max}]$ on all boards in $\cA_2$ and $\cB_2$. Suppose, in response to such a move on $\cA_1$, $\bD$ plays outside the range $L[\mathsf{min}, \r]$ on a board from $\cB_1$; the resulting board cannot form a partial match with any board from $\cA_1$ (since there is a discrepancy with $\r$), or with any board from $\cA_2$ (as observed already). Therefore, this board from $\cB_1$ can be discarded using Observation \ref{obs:discard}. A similar argument applies if $\bD$ ever responds outside the corresponding range in $\cB_2$, $\cA_1$, or $\cA_2$.

It follows that the sub-game $(\cA_1, \cB_1)$ (resp.~$(\cA_2, \cB_2$) corresponds \emph{exactly} to the game $\textrm{MSL}_{\forall, 3}(2)$ (resp.~$\textrm{MSL}_{\forall, 3}(3)$) where $\bS$ has already made his first move using the $\mathsf{CMA}$ strategy by playing a universal move on the closest-to-midpoints of the (relevant) linear orders. Since $\bS$ will alternate sides throughout, the patterns for both sub-game strategies will be the same.

We can now apply Lemma \ref{lem:parallelplay}. Observe that the lengths of the instances in the sub-games have been roughly halved, at the cost of a single move. The game then proceeds as shown in Figure \ref{fig:small_game_tree}. The leaves of the tree correspond to base cases (analyzed in Section \ref{sec:cmaformal}). The pattern of the strategy is preserved along all branches.
\begin{figure}[ht]
    \centering
    \includegraphics[scale=0.6]{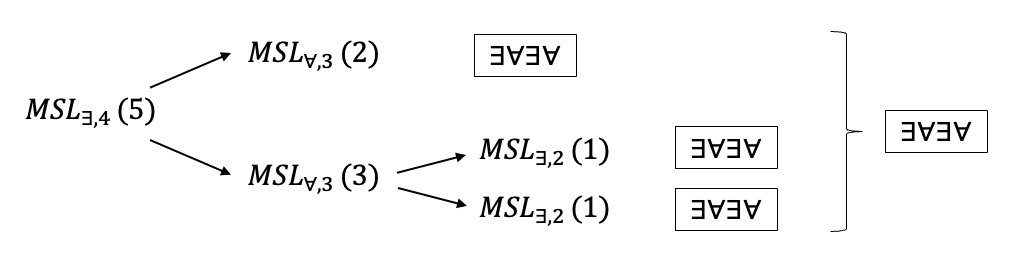}
    \caption{The $\textrm{MSL}_{\exists,4}(5)$ game tree. Each leaf is decorated with the associated quantifier signature. All paths can be played in parallel using Lemma \ref{lem:parallelplay} using the pattern $(\exists, \forall, \exists, \forall)$.}
    \label{fig:small_game_tree}
\end{figure}

\subsection{Formalizing the Strategy}\label{sec:cmaformal}

The first step in formalizing the $\mathsf{CMA}$ strategy for $\bS$ is to define four base cases, which we shall call \textit{irreducible} games. We assert the following.
\begin{enumerate}
    \item The game $\textrm{MSL}_{\forall, 1}(1)$ is winnable; $\bS$ makes a universal move by playing on any element other than $\mathsf{min}$ and $\mathsf{max}$ on each board on the right. There is no valid response by $\bD$ on the single board on the left. The pattern is $(\forall)$.
    \item The game $\textrm{MSL}_{\exists, 2}(1)$ is winnable; $\bS$ makes a dummy existential move (by playing as a matter of convention on $\mathsf{min}$), and then reverts to the strategy above for $\textrm{MSL}_{\forall, 1}(1)$ for his second move. The pattern is $(\exists, \forall)$.
    \item The game $\textrm{MSL}_{\forall, 2}(2)$ is winnable; $\bS$ makes two successive universal moves by playing on two arbitrary distinct elements other than $\mathsf{min}$ and $\mathsf{max}$ on each board on the right. $\bD$ cannot match this one the boards on the left. We remark that $\textrm{MSL}_{\forall, 2}(2)$ is not winnable by $\bS$ if he plays in any other fashion. The pattern is $(\forall, \forall)$.
    \item The game $\textrm{MSL}_{\forall, 3}(2)$ is winnable; $\bS$ follows the same strategy as in $\textrm{MSL}_{\forall, 2}(2)$ in rounds $1$ and $3$, except that he makes a dummy existential move in round $2$ (by playing as a matter of convention on $\mathsf{min}$). The pattern is $(\forall, \exists, \forall)$.
\end{enumerate}

The game $\textrm{MSL}_{\exists, 1}(1)$ is not winnable and hence not considered.

We now give a formalization of the inductive step. For a given quantifier $Q \in \{\exists, \forall\}$ and its complementary quantifier $\bar{Q}$, consider the game $\textrm{MSL}_{Q,k}(\ell)$. Note that if $\bS$ employs the $\mathsf{CMA}$ strategy described in Section \ref{sec:cma}, the game splits\footnote{Recall that technically the split happens only after two rounds, but each sub-game then ends up with one move already played consistent with the strategy.} into the two sub-games $\textrm{MSL}_{\bar{Q},k-1}(\ell')$ and $\textrm{MSL}_{\bar{Q},k-1}(\ell'')$. We designate this split as:
\begin{equation*}
    \textrm{MSL}_{Q,k}(\ell) \rightarrow  \textrm{MSL}_{\bar{Q},k-1}(\ell') \oplus \textrm{MSL}_{\bar{Q},k-1}(\ell'').
\end{equation*}
We will shortly show in the proof of Lemma \ref{lem:cma-is-well-defined} that these sub-games can be played recursively, in parallel. When $\bS$ reaches an irreducible sub-game, he plays the strategies with the patterns asserted above.

We now claim the following about the rules for splitting.

\begin{restatable}[Splitting Rules]{claim}{splitrules}\label{claim:splitrules}
    For $k \geq 3$, we have:
\begin{eqnarray} \label{msl-cases}
    \textrm{(i)~MSL}_{\exists,k}(2\ell) \rightarrow  \textrm{MSL}_{\forall,k-1}(\ell) \oplus \textrm{MSL}_{\forall,k-1}(\ell),~~~~~~~~~~~~~~~~~~~\ell \geq 1 \notag\\
    \textrm{(ii)~MSL}_{\exists,k}(2\ell + 1) \rightarrow  \textrm{MSL}_{\forall,k-1}(\ell) \oplus \textrm{MSL}_{\forall,k-1}(\ell+1),~~~~~~~~~\ell \geq 1 \\
    \textrm{(iii)~MSL}_{\forall,k}(2\ell) \rightarrow \textrm{MSL}_{\exists,k-1}(\ell) \oplus \textrm{MSL}_{\exists,k-1}(\ell-1),~~~~~~~~~~~~~~\ell \geq 2 \notag \\
    \textrm{(iv)~MSL}_{\forall,k}(2\ell + 1) \rightarrow  \textrm{MSL}_{\exists,k-1}(\ell) \oplus \textrm{MSL}_{\exists,k-1}(\ell),~~~~~~~~~~~~~~\ell \geq 1 \notag
\end{eqnarray}
\end{restatable}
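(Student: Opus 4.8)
The plan is to examine a single round of the $\mathsf{CMA}$ strategy and carry out the floor/ceiling bookkeeping for the lengths of the two sub-intervals that result; all four cases are instances of the same analysis, differing only in the parity of the length and in which side $\bS$ opens on. The one arithmetic fact I will use repeatedly is that placing a pebble $\r$ on the closest-to-midpoint of a length-$m$ linear order $L[x,y]$ cuts it into a lower part $L[x,\r]$ of length $\lfloor m/2\rfloor$ and an upper part $L[\r,y]$ of length $\lceil m/2\rceil$.

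For the existential rules (i) and (ii) we have $Q=\exists$, so $\bS$ opens on the left, $L_{\leq\ell}$. Since every left board has length $m\leq 2\ell$ (resp.\ $m\leq 2\ell+1$), both of its parts are short after the move: in case (i) each part has length at most $\ell$, and in case (ii) the lower part has length at most $\ell$ while the upper part has length at most $\ell+1$. I then let $\bD$ respond obliviously on the right and partition the right side into $\cB_1$ (boards whose lower part $L'[\mathsf{min},\r]$ is long) and $\cB_2$ (the rest). The crux is a one-line pigeonhole argument showing this is a genuine partition: each right board has total length exceeding $2\ell$ (resp.\ $2\ell+1$), so its two parts cannot both be short, and hence a board placed in $\cB_2$ automatically has a long upper part. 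Comparing the short-side bounds with these long-side bounds identifies the sub-game on $[\mathsf{min},\r]$ (where $\r$ now serves as $\mathsf{max}$) as $\textrm{MSL}_{\forall,k-1}(\ell)$ in both cases, and the sub-game on $[\r,\mathsf{max}]$ (where $\r$ serves as $\mathsf{min}$) as $\textrm{MSL}_{\forall,k-1}(\ell)$ in case (i) and $\textrm{MSL}_{\forall,k-1}(\ell+1)$ in case (ii).

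For the universal rules (iii) and (iv) the situation is exactly dual: $\bS$ opens on the right, $L_{>\ell}$, so now it is the two parts of every right board that are forced to be long, and it is the left side that must be partitioned. For a right board of length $m'>2\ell$ (resp.\ $>2\ell+1$), the same computation gives $L'[\mathsf{min},\r]\geq\ell$ and $L'[\r,\mathsf{max}]\geq\ell+1$ in case (iii), and both parts $\geq\ell+1$ in case (iv). Dually, I partition the left into $\cA_1$ (boards whose lower part is short) and $\cA_2$ (the rest), where the pigeonhole now runs the other way: since each left board has total length at most $2\ell$ (resp.\ $2\ell+1$), its two parts cannot both be long, so a board in $\cA_2$ has a short upper part. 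Matching thresholds on each sub-interval then yields $\textrm{MSL}_{\exists,k-1}(\ell)\oplus\textrm{MSL}_{\exists,k-1}(\ell-1)$ in case (iii) and $\textrm{MSL}_{\exists,k-1}(\ell)\oplus\textrm{MSL}_{\exists,k-1}(\ell)$ in case (iv). In each sub-game $\bS$'s next move is of the complementary type $\bar{Q}$, accounting for the flip of subscript, and the round budget drops from $k$ to $k-1$; once the parameters and the partition are in hand, the two sub-games can be combined by Lemma~\ref{lem:parallelplay}.

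The one genuinely delicate point --- and the source of the side conditions $\ell\geq 1$ and, for rule (iii), $\ell\geq 2$ --- is the asymmetry between $\lfloor m/2\rfloor$ and $\lceil m/2\rceil$. It is precisely this asymmetry that, in case (iii), forces only the weaker threshold $\ell-1$ on the lower branch, so that branch bottoms out one length-unit sooner and we must require $\ell-1\geq 1$ for $\textrm{MSL}_{\exists,k-1}(\ell-1)$ to be a legitimate instance. I expect the bulk of the write-up to be this routine parity casework; the entire conceptual content sits in the two pigeonhole observations --- the long side cannot have both parts short, and the short side cannot have both parts long --- which are what make the declared classes honest partitions and thereby set up the application of parallel play.
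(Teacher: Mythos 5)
Your proposal is correct and follows essentially the same route as the paper's proof: a round-one closest-to-midpoint move on the side dictated by $Q$, the pigeonhole observation that the opposite side cannot have both parts on the wrong side of the threshold, the floor/ceiling bookkeeping that pins down the sub-game lengths (including the $\ell-1$ branch and the $\ell\geq 2$ side condition in rule (iii)), and an appeal to Lemma~\ref{lem:parallelplay}. The only piece you leave implicit --- which the paper's figures and its footnote about the split ``happening after two rounds'' make explicit --- is that the round-two $\bar{Q}$ move on the identified long (resp.\ short) part is what places the two sub-games into the distinct isomorphism classes $\mathsf{min}<\r<\b<\mathsf{max}$ and $\mathsf{min}<\b<\r<\mathsf{max}$, thereby yielding the genuine two-sided partition with no cross matching pairs that parallel play requires.
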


\begin{proof}
We prove Claim \ref{claim:splitrules} in cases (i) and (iii). The other two cases are similar. Note that our analysis of $\textrm{MSL}_{\exists,4}(5)$ was an example of case (ii).

Consider case (i). Figure \ref{parallel path case1 fig} shows the gameplay through to the configuration immediately after $\bS$'s round $2$ move. Once $\bD$ makes her oblivious response, we can discard some of the boards following Observation \ref{obs:discard}, and note that the game splits into two games: $(\cA_1, \cB_1)$ corresponding to the isomorphism class $\mathsf{min} < \r < \b < \mathsf{max}$, and $(\cA_2, \cB_2)$ corresponding to the isomorphism class $\mathsf{min} < \b < \r < \mathsf{max}$. The game proceeds within the linear orders $L[\r, \mathsf{max}]$ in $(\cA_1, \cB_1)$, and within the linear orders $L[\mathsf{min}, \r]$ in $(\cA_2, \cB_2)$, using Observation \ref{obs:discard} to discard any responses outside those ranges. By construction, these both correspond to $\textrm{MSL}_{\forall,k-1}(\ell)$ games.

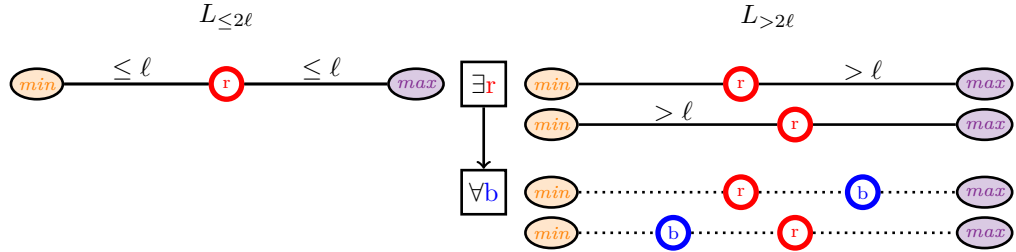
\begin{figure*}[ht]
\begin{center}
\tikzset{Tleft/.style={ellipse,draw=black, inner sep=0pt, minimum size=.9cc,
    line width=1pt,fill=orange!20}}
\tikzset{TRleft/.style={ellipse,draw=red, inner sep=0pt, minimum size=.9cc,
    line width=2pt,fill=orange!20}}
\tikzset{Tright/.style={ellipse,draw=black, inner sep=0pt, minimum size=.9cc,
    line width=1pt,fill=indigo!20}}
  \tikzset{Red/.style={circle,draw=red, inner sep=0pt, minimum size=.9cc, line width=2pt}}
\tikzset{Blue/.style={circle,draw=blue, inner sep=0pt, minimum size=.9cc, line width=2pt}}
\tikzset{Green/.style={circle,draw=dg, inner sep=0pt, minimum size=.9cc, line width=2pt}}
\tikzset{Black/.style={circle,draw=black, inner sep=0pt, minimum size=.9cc, line width=1pt}}
\tikzset{TreeNode/.style={rectangle,draw=black, inner sep=0pt, minimum size=1.3cc, line width=1pt}}
\tikzset{BigTreeNode/.style={rectangle,draw=black, inner sep=0pt, minimum size=1.8cc, line width=1pt}}

\begin{tikzpicture}[scale=.09]
\node at (-38,50) {$L_{\leq 2\ell}$};
\node at (42,50) {$L_{> 2\ell}$};

\node [TreeNode] (T0) at(0,40)  {};
   \node at (T0) {$\exists \r$};
  \node [TreeNode] (T1) at (0,24) {$\forall \b$};
  \draw[line width=1pt,->,color=black] (T0) -- (T1);

\node at (-52,42.25) {$\leq \ell$};
\node at (-24,42.25) {$\leq \ell$};
\node at (56,42) {$>\ell$};
\node [Tleft] (1a1) at (-66,40) {{\scriptsize \mn}};
\node [Red] (1amid) at (-38,40) {{\scriptsize $\r$}};
\node [Tright] (1a8) at (-10,40) {{\scriptsize \mx}};
\node [Tright] (1b9) at (74,40) {{\scriptsize \mx}};
\node [Red]   (1b2) at (38,40) {{\scriptsize $\r$}};
\node [Tleft] (1b1) at (10,40) {{\scriptsize \mn}};
\foreach \from/\to in {1a1/1amid,1amid/1a8,1b1/1b2,1b2/1b9}
\draw[-,line width=1pt,color=black] (\from) -- (\to);

\node at (28,36) {$>\ell$};
\node [Tright] (1c9) at (74,34) {{\scriptsize \mx}};
\node [Red]   (1c2) at (46,34) {{\scriptsize $\r$}};
\node [Tleft] (1c1) at (10,34) {{\scriptsize \mn}};
\foreach \from/\to in {1a1/1amid,1amid/1a8,1c1/1c2,1c2/1c9}
\draw[-,line width=1pt,color=black] (\from) -- (\to);

\node [Tright] (2b9) at (74,24) {{\scriptsize \mx}};
\node [Red]   (2b2) at (38,24) {{\scriptsize $\r$}};
\node [Blue]   (2b3) at (56,24) {{\scriptsize $\b$}};
\node [Tleft] (2b1) at (10,24) {{\scriptsize \mn}};

\node [Tright] (2c9) at (74,18) {{\scriptsize \mx}};
\node [Red]   (2c2) at (46,18) {{\scriptsize $\r$}};
\node [Blue]  (2c3) at (28,18) {{\scriptsize $\b$}};
\node [Tleft] (2c1) at (10,18) {{\scriptsize \mn}};
\foreach \from/\to in {2b1/2b2,2b2/2b3,2b3/2b9,2c1/2c3,2c3/2c2,2c2/2c9}
\draw[dotted,line width=1pt,color=black] (\from) -- (\to);
\end{tikzpicture}
\end{center}
\caption{The first round and a half of the $\textrm{MSL}_{\exists,k}(2\ell)$ game
  according to the $\mathsf{CMA}$ strategy.}
\label{parallel path case1 fig}
\end{figure*}

Now consider case (iii). Figure \ref{path thm case2 fig} shows the gameplay through to the configuration immediately after $\bS$'s round $2$ move. Once $\bD$ makes her oblivious response, we can discard some of the boards following Observation \ref{obs:discard}, and note that the game splits into two games: $(\cA_1, \cB_1)$ corresponding to the isomorphism class $\mathsf{min} < \b < \r < \mathsf{max}$, and $(\cA_2, \cB_2)$ corresponding to the isomorphism class $\mathsf{min} < \r < \b < \mathsf{max}$. The game proceeds within the linear orders $L[\mathsf{min}, \r]$ in $(\cA_1, \cB_1)$, and within the linear orders $L[\r, \mathsf{max}]$ in $(\cA_2, \cB_2)$, using Observation \ref{obs:discard} to discard any responses outside those ranges. By construction, these correspond to an $\textrm{MSL}_{\exists,k-1}(\ell - 1)$ game and an $\textrm{MSL}_{\exists,k-1}(\ell)$ respectively.

\begin{figure*}[ht]
\begin{center}
\tikzset{Tleft/.style={ellipse,draw=black, inner sep=0pt, minimum size=.9cc,
    line width=1pt,fill=orange!20}}
\tikzset{TRleft/.style={ellipse,draw=red, inner sep=0pt, minimum size=.9cc,
    line width=2pt,fill=orange!20}}
\tikzset{Tright/.style={ellipse,draw=black, inner sep=0pt, minimum size=.9cc,
    line width=1pt,fill=indigo!20}}
\tikzset{Red/.style={circle,draw=red, inner sep=0pt, minimum size=.9cc, line width=2pt}}
\tikzset{Blue/.style={circle,draw=blue, inner sep=0pt, minimum size=.9cc, line width=2pt}}
\tikzset{Green/.style={circle,draw=dg, inner sep=0pt, minimum size=.9cc, line width=2pt}}
\tikzset{Black/.style={circle,draw=black, inner sep=0pt, minimum size=.9cc, line width=1pt}}
\tikzset{TreeNode/.style={rectangle,draw=black, inner sep=0pt, minimum size=1.3cc, line width=1pt}}
\tikzset{BigTreeNode/.style={rectangle,draw=black, inner sep=0pt, minimum size=1.8cc, line width=1pt}}

\begin{tikzpicture}[scale=.09]
\node at (-38,50) {$L_{\leq 2\ell}$};
\node at (42,50) {$L_{> 2\ell}$};
\node [TreeNode] (T0) at(0,40)  {};

   \node at (T0) {$\forall \r$};
  \node [TreeNode] (T1) at (0,24) {$\exists \b$};
  \draw[line width=1pt,->,color=black] (T0) -- (T1);

\node at (56,42.25) {$>\ell$};
\node at (24,42.25) {$\geq\ell$};
\node at (-54,42.25) {$\leq \ell-1$};
\node [Tright] (1a1) at (74,40) {{\scriptsize \mx}};
\node [Red] (1amid) at (38,40) {{\scriptsize $\r$}};
\node [Tleft] (1a8) at (10,40) {{\scriptsize \mn}};
\node [Tright] (1b9) at (-10,40) {{\scriptsize \mx}};
\node [Red]   (1b2) at (-42,40) {{\scriptsize $\r$}};
\node [Tleft] (1b1) at (-66,40) {{\scriptsize \mn}};
\foreach \from/\to in {1a1/1amid,1amid/1a8,1b1/1b2,1b2/1b9}
\draw[-,line width=1pt,color=black] (\from) -- (\to);

\node at (-21,36.25) {$\leq\ell$};
\node [Tright] (1c9) at (-10,34) {{\scriptsize \mx}};
\node [Red]   (1c2) at (-32,34) {{\scriptsize $\r$}};
\node [Tleft] (1c1) at (-66,34) {{\scriptsize \mn}};
\foreach \from/\to in {1a1/1amid,1amid/1a8,1c1/1c2,1c2/1c9}
\draw[-,line width=1pt,color=black] (\from) -- (\to);

\node [Tright] (2b9) at (-10,24) {{\scriptsize \mx}};
\node [Red]   (2b2) at (-42,24) {{\scriptsize $\r$}};
\node [Blue]   (2b3) at (-54,24) {{\scriptsize $\b$}};
\node [Tleft] (2b1) at (-66,24) {{\scriptsize \mn}};

\node [Tright] (2c9) at (-10,18) {{\scriptsize \mx}};
\node [Red]   (2c2) at (-32,18) {{\scriptsize $\r$}};
\node [Blue]  (2c3) at (-21,18) {{\scriptsize $\b$}};
\node [Tleft] (2c1) at (-66,18) {{\scriptsize \mn}};
\foreach \from/\to in {2b1/2b3,2b3/2b2,2b2/2b9,2c1/2c2,2c2/2c3,2c3/2c9}
\draw[dotted,line width=1pt,color=black] (\from) -- (\to);
\end{tikzpicture}
\end{center}
\caption{The first round and a half of the $\textrm{MSL}_{\forall,k}(2\ell)$ game
  according to the $\mathsf{CMA}$ strategy.}

\label{path thm case2 fig}
\end{figure*}
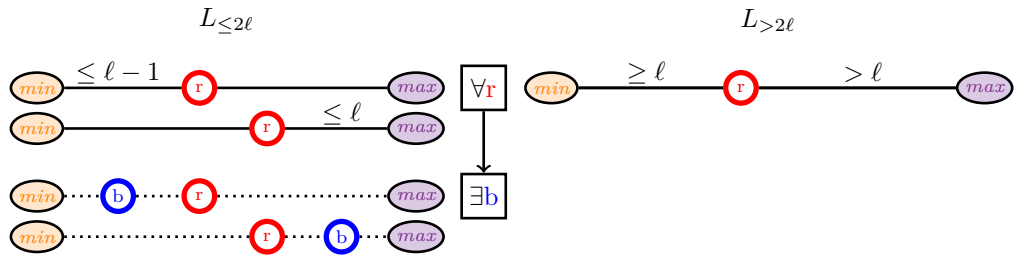

\end{proof}

Of course, the $\mathsf{CMA}$ strategy starts out seemingly promisingly, splitting with both initial sub-games starting on the same side; we must ensure that the strategy continues to be \emph{well-specified}, i.e., this continues throughout the recursion stack, especially since the sub-games can have different lengths. We show this in the following lemma.

\begin{restatable}{lemma}{cmawelldefined}\label{lem:cma-is-well-defined}
The $\mathsf{CMA}$ strategy is well-specified. Moreover, for $k \geq 3$, if $\textrm{MSL}_{Q,k}(\ell) \rightarrow  \textrm{MSL}_{\bar{Q},k-1}(\ell_1) \oplus \textrm{MSL}_{\bar{Q},k-1}(\ell_2)$ with $\ell_1 \geq \ell_2$, then the pattern of $\bS$'s winning strategy for $\textrm{MSL}_{Q,k}(\ell)$ is $Q$ concatenated with the pattern for the winning strategy for $\textrm{MSL}_{\bar{Q},k-1}(\ell_1)$.
\end{restatable}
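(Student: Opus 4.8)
The plan is to prove both assertions simultaneously by induction on the number of rounds $k$, tracking the pattern produced by $\mathsf{CMA}$ as a function of the triple $(Q,k,\ell)$. Write $P(Q,k,\ell)$ for the pattern of the $\mathsf{CMA}$ strategy on a winnable $\textrm{MSL}_{Q,k}(\ell)$. For the base of the induction ($k\leq 3$) I would simply read off the four irreducible games of Section \ref{sec:cmaformal}: their patterns $(\forall)$, $(\exists,\forall)$, $(\forall,\forall)$, and $(\forall,\exists,\forall)$ are fixed, and in each case the strategy is visibly uniquely determined and winning, so there is nothing to coordinate.

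For the inductive step, fix $k\geq 3$ and a winnable $\textrm{MSL}_{Q,k}(\ell)$. By Claim \ref{claim:splitrules}, playing $\mathsf{CMA}$ splits the game as $\textrm{MSL}_{Q,k}(\ell)\rightarrow\textrm{MSL}_{\bar{Q},k-1}(\ell_1)\oplus\textrm{MSL}_{\bar{Q},k-1}(\ell_2)$, and two features of that claim's proof are exactly what I need: both children start on the \emph{same} side (both carry leading quantifier $\bar{Q}$), and they lie in the two distinct isomorphism classes $\mathsf{min}<\r<\b<\mathsf{max}$ and $\mathsf{min}<\b<\r<\mathsf{max}$. The latter gives hypothesis 2 of the Parallel Play Lemma (Lemma \ref{lem:parallelplay}) for free: for $i\neq j$ no board of $\cA_i$ can match a board of $\cB_j$. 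Let $\cS_1,\cS_2$ be the $\mathsf{CMA}$ strategies on the children, with $\ell_1\geq\ell_2$, so that $\mathsf{pat}(\cS_1)=P(\bar{Q},k-1,\ell_1)$. It then remains only to exhibit a single length-$(k-1)$ pattern $P$ of which both $\mathsf{pat}(\cS_1)$ and $\mathsf{pat}(\cS_2)$ are subsequences; taking $P:=P(\bar{Q},k-1,\ell_1)$ and applying Lemma \ref{lem:parallelplay} produces a winning strategy for $\textrm{MSL}_{Q,k}(\ell)$ of pattern $Q\cdot P=Q\cdot P(\bar{Q},k-1,\ell_1)$, which is simultaneously the well-specifiedness statement and the claimed pattern formula.

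Everything therefore reduces to the following \emph{monotonicity sub-lemma}, which I expect to be the main obstacle: for a fixed starting quantifier $Q$, if $\ell'\leq\ell$ and $\textrm{MSL}_{Q,k}(\ell)$ is winnable via $\mathsf{CMA}$, then $\textrm{MSL}_{Q,k'}(\ell')$ is winnable via $\mathsf{CMA}$ for some $k'\leq k$ with a pattern that is a subsequence of $P(Q,k,\ell)$. Applying this with $Q$ replaced by $\bar{Q}$ and $k$ by $k-1$ makes $\mathsf{pat}(\cS_2)$ a subsequence of $P(\bar{Q},k-1,\ell_1)=\mathsf{pat}(\cS_1)$, as required. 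I would prove the sub-lemma by a parallel induction on $k$, where the key quantitative input, read off from Claim \ref{claim:splitrules}, is that the larger child length is $\lceil L/2\rceil$ when $Q=\exists$ and $\lfloor L/2\rfloor$ when $Q=\forall$; in either case it is monotone non-decreasing in the parent length $L$. Hence $\ell'\leq\ell$ passes to $\ell'_1\leq\ell_1$ at the next level, and after stripping the common leading $Q$ the inductive hypothesis delivers the embedding.

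The genuinely delicate point, where I would spend the most care, is the interaction with the single non-alternating base case $\textrm{MSL}_{\forall,2}(2)$, whose pattern $(\forall,\forall)$ breaks the otherwise strict alternation. A shorter sub-game may bottom out at this case (or at $\textrm{MSL}_{\forall,1}(1)$ with pattern $(\forall)$) while its longer sibling is still recursing with a strictly alternating pattern beginning with $\forall$. Here I would check directly that $(\forall,\forall)$ and $(\forall)$ are subsequences of every sufficiently long alternating pattern beginning with $\forall$ (select two, respectively one, of its $\forall$-positions), so the embedding survives the base case. This is exactly the situation for which the \emph{generalized} form of Parallel Play is needed: the shorter game genuinely finishes in $k'<k-1$ rounds and is padded out by dummy moves to sit as a subsequence of $P(\bar{Q},k-1,\ell_1)$, rather than matching it entry-for-entry, so the weakening of assumption~1 (allowing $\mathsf{pat}(\cS_i)$ to be a subsequence of $P$ and $r_i\leq r$) is essential and not merely convenient.
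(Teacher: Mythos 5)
Your proposal is correct, and it reaches the same destination by a noticeably different organization. The paper inducts on the length $\ell$ while carrying two explicit invariants: every $\mathsf{CMA}$ pattern strictly alternates except possibly for a doubled $\forall$ at the very end, and the two siblings' pattern lengths differ by at most one. It then runs a four-way case analysis (on $Q$ and the parity of $\ell$) and checks by hand that the only possible misalignments are $(\ldots,\forall,\exists,\forall)$ over $(\ldots,\forall,\forall)$ --- handled by a small ad hoc claim (Claim~\ref{claim:simple}) that is a special case of Lemma~\ref{lem:parallelplay} --- and $(\ldots,\exists,\forall,\forall)$ over $(\ldots,\exists,\forall)$, handled by a dummy universal move. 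You instead isolate a single monotonicity sub-lemma (for a fixed leading quantifier, $\ell'\leq\ell$ forces the shorter pattern to embed as a subsequence of the longer), driven by the observation that the larger-child length $\lceil L/2\rceil$ (resp.\ $\lfloor L/2\rfloor$) is monotone in $L$, and then apply Lemma~\ref{lem:parallelplay} with $P$ equal to the longer sibling's pattern. This buys you independence from the ``lengths differ by at most one'' invariant and eliminates the case analysis, at the price of running the embedding induction yourself. Two points to tighten. First, the induction is genuinely simultaneous: the sub-lemma at level $k$ invokes the pattern formula $P(Q,k,\ell)=Q\cdot P(\bar{Q},k-1,\ell_1)$ at level $k$, which in turn invokes the sub-lemma at level $k-1$; you should state that both claims are proved together at each level, main lemma first. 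Second, the longer sibling's pattern need not be strictly alternating --- it may itself bottom out at $\textrm{MSL}_{\forall,2}(2)$ and end in a doubled $\forall$ --- but your embedding still goes through, since any $\forall$-leading pattern of length at least $3$ in this family contains two $\forall$'s in order, so $(\forall,\forall)$ and $(\forall)$ embed regardless.
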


\begin{proof}
We first make a simple claim without a proof.

\begin{claim} \label{claim:simple}
Suppose $\cA$ and $\cB$ are two sets of pebbled structures that can be partitioned into $\cA = \cA_1 \cup \cA_2$ and $\cB = \cB_1 \cup \cB_2$, such that no pebbled structure in $\cA_i$ forms a matching pair with a pebbled structure in $\cB_j$, for $i \neq j$. Suppose that $\bS$ has a strategy $\cS_1$ to win the $(r+3)$-round MS-game on $(\cA_1, \cB_1)$ with pattern $\mathsf{pat}(\cS_1) = (Q_1, \ldots, Q_r, \forall, \exists, \forall)$, and also has a strategy $\cS_2$ to win the $(r+2)$-round MS-game on $(\cA_2, \cB_2)$ with pattern $\mathsf{pat}(\cS_2) = (Q_1, \ldots, Q_k, \forall, \forall)$. Then, $\bS$ has a strategy $\cS$ to win the $(r+3)$-round MS-game on $(\cA, \cB)$ satisfying $\mathsf{pat}(\cS) = \mathsf{pat}(\cS_1)$.
\end{claim}
Note that the claim is immediate by Lemma \ref{lem:parallelplay}, since $\mathsf{pat}(\cS_2)$ is a subsequence of $\mathsf{pat}(\cS_1)$.

Back to the proof of the main lemma. Assume inductively that the lemma holds for all MSP games up to length $\ell$. Assume additionally that the pattern of both sub-games alternate, either ending in one or two universal quantifiers, and that the longer sub-game (if there is one) takes at most one more round than the shorter one. We shall show that both the lemma and this assumption continue to hold for the game of length $\ell + 1$.

There are four cases depending on whether we are considering an $\exists$ or $\forall$ game, and whether $\ell$ is even or odd.

\noindent \textbf{Case 1: $\exists$ game, $\ell$ odd.} The game under consideration is $\textrm{MSL}_{\exists, r}(\ell+1)$. Since $\ell + 1$ is even, we split into two sub-games that are both $\textrm{MSL}_{\forall, r-1}((\ell+1)/2)$, which clearly have the same pattern.

\noindent \textbf{Case 2: $\exists$ game, $\ell$ even.} The game under consideration is $\textrm{MSL}_{\exists, r}(\ell+1)$. Since $\ell+1$ is odd, we split into the sub-games $\textrm{MSL}_{\forall, r-1}(\ell/2)$ and $\textrm{MSL}_{\forall, r-1}(\ell/2 + 1)$. If the sub-games have strategies with the same number of moves, then they both must have the pattern $(\forall, \exists, \ldots, \forall, \exists, \forall)$ or $(\forall, \exists, \ldots, \forall, \forall)$. In this case, the lemma holds. Otherwise, the strategy for the $\textrm{MSL}_{\forall, r - 1}(\ell/2 + 1)$ game has one more round than the other game (by assumption), and then the two patterns for the sub-games can line up in one of the two following ways (in each case with the longer pattern on top):
\begin{align*}
    &\qquad\qquad (\forall, \exists, \cdots, \forall, \exists, \forall) && (\forall, \exists, \cdots, \forall, \exists, \forall, \forall) \\
    &\qquad\qquad (\forall, \exists, \cdots, \forall, \forall) && (\forall, \exists, \cdots, \forall, \exists, \forall)
\end{align*}
In the first case, we are done by Claim \ref{claim:simple}. In the second case, $\bS$ just plays an arbitrary universal move at the end of the second sub-game, and the lemma once again follows.

\noindent \textbf{Case 3: $\forall$ game, $\ell$ odd.} The game under consideration is $\textrm{MSL}_{\forall, r}(\ell+1)$. Since $\ell+1$ is even, we split into the sub-games $\textrm{MSL}_{\exists, r-1}((\ell+1)/2)$ and $\textrm{MSL}_{\exists, r-1}((\ell+1)/2 - 1)$. The analysis is now very similar to Case 2. If the sub-games have strategies with the same number of moves, then they both have the pattern $(\exists, \forall, \ldots, \exists, \forall)$ or $(\exists, \forall, \ldots, \exists, \forall, \forall)$, and the lemma follows. Otherwise, we have the possibilities:
\begin{align*}
    &\qquad\qquad (\exists, \forall, \cdots, \exists, \forall, \forall) && (\exists, \forall, \cdots, \exists, \forall, \exists, \forall) \\
    &\qquad\qquad (\exists, \forall, \cdots, \exists, \forall) && (\exists, \forall, \cdots, \exists, \forall, \forall)
\end{align*}
And the analysis of these two cases is just like in Case 2, whereby the lemma follows.

\noindent \textbf{Case 4:  $\forall$ game, $\ell$ even.} The game under consideration is $\textrm{MSL}_{\forall, r}(\ell+1)$. Since $\ell+1$ is odd, we split into two sub-games of $\textrm{MSL}_{\exists, r-1}(\ell/2)$, which have the same pattern.

The requisite alternation pattern is clearly maintained, from the definition of the strategy. The assumption that the longer sub-game takes at most one more round than the shorter sub-game follows by noting that the lengths of the patterns are monotonic in $\ell$ and never increase by more than one.
\end{proof}

\subsection{Bounding the Quantifier Number}\label{sec:boundingqn}

Let $q^*_\exists(\ell)$ (resp.~$q^*_\forall(\ell)$) be the minimum $r \in \mathbb{N}$ such that $\bS$ wins the game $\textrm{MSL}_{\exists,r}(\ell)$ (resp.~$\textrm{MSL}_{\forall,r}(\ell)$) using the $\mathsf{CMA}$ strategy. Of course, we must have $q_\exists(\ell) \leq q^*_\exists(\ell)$ and $q_\forall(\ell) \leq q^*_\forall(\ell)$. Let $q^*(\ell) = \min(q^*_\exists(\ell), q^*_\forall(\ell))$. The following lemma (whose proof is omitted) follows from the complete description of the strategy from Section \ref{sec:cmaformal}.

\begin{lemma}\label{lem:q*-vals}
We have $q^*_\forall(1) = 1$, $q^*_\exists(1) = 2$, and $q^*_\forall(2) = 2$. Also:
\begin{align*}
        &q^*_\exists(2\ell) = q^*_\forall(\ell) + 1~~~~\textrm{for }\ell \geq 1, 
        &&&q^*_\exists(2\ell+1) = q^*_\forall(\ell+1) + 1~~~~\textrm{for }\ell \geq 1, \\
        &q^*_\forall(2\ell) = q^*_\exists(\ell) + 1~~~~\textrm{for }\ell \geq 2,
        &&&q^*_\forall(2\ell+1) = q^*_\exists(\ell) + 1~~~~~~~~~\textrm{for }\ell \geq 1.
\end{align*}
\end{lemma}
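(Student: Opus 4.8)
The plan is to derive all the stated values directly from the splitting rules of Claim~\ref{claim:splitrules} together with the structural facts in Lemma~\ref{lem:cma-is-well-defined}, organized as a simultaneous induction on the length $\ell$. First I would dispose of the three base values. Since $q^*_\forall(\ell)$ and $q^*_\exists(\ell)$ are by definition the least round budgets $r$ for which the $\mathsf{CMA}$ strategy wins $\textrm{MSL}_{\forall,r}(\ell)$ and $\textrm{MSL}_{\exists,r}(\ell)$ respectively, the equalities $q^*_\forall(1)=1$, $q^*_\exists(1)=2$, and $q^*_\forall(2)=2$ are read off from the four irreducible games in Section~\ref{sec:cmaformal}: the patterns $(\forall)$, $(\exists,\forall)$, and $(\forall,\forall)$ have lengths $1$, $2$, and $2$, and nothing shorter works, since $\textrm{MSL}_{\exists,1}(1)$ is unwinnable and $\textrm{MSL}_{\forall,2}(2)$ is the shortest winnable $\forall$-game on length $2$.

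For the four recursive identities, the central observation is that $\mathsf{CMA}$ on $\textrm{MSL}_{Q,k}(\ell)$ consists of the splitting move followed by recursively running $\mathsf{CMA}$ on the two sub-games $\textrm{MSL}_{\bar Q,k-1}(\ell_1)\oplus\textrm{MSL}_{\bar Q,k-1}(\ell_2)$ supplied by Claim~\ref{claim:splitrules}, played in parallel. By the Parallel Play Lemma (Lemma~\ref{lem:parallelplay}) and the well-definedness established in Lemma~\ref{lem:cma-is-well-defined}, the combined strategy wins with budget $k$ if and only if each sub-game is won with budget $k-1$, i.e.\ if and only if $k-1\ge q^*_{\bar Q}(\ell_1)$ and $k-1\ge q^*_{\bar Q}(\ell_2)$. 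Taking the least such $k$ gives $q^*_Q(\ell)=1+\max\bigl(q^*_{\bar Q}(\ell_1),q^*_{\bar Q}(\ell_2)\bigr)$. Substituting the four cases of Claim~\ref{claim:splitrules}, with $(\ell_1,\ell_2)$ equal to $(\ell,\ell)$, $(\ell+1,\ell)$, $(\ell,\ell-1)$, and $(\ell,\ell)$ respectively, reduces each identity to determining which sub-game length controls the maximum.

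The two ``balanced'' cases (i) and (iv), where $\ell_1=\ell_2$, are then immediate. The two ``unbalanced'' cases (ii) and (iii) require that the \emph{longer} linear order needs at least as many rounds, i.e.\ $q^*_\forall(\ell+1)\ge q^*_\forall(\ell)$ and $q^*_\exists(\ell)\ge q^*_\exists(\ell-1)$. This monotonicity of $q^*_\forall$ and $q^*_\exists$ in $\ell$ is exactly the statement, proved in the last paragraph of Lemma~\ref{lem:cma-is-well-defined}, that the pattern lengths are monotone in $\ell$; I would carry it as part of the inductive hypothesis and re-derive it from the identities themselves. For instance, $q^*_\exists(2\ell)=q^*_\forall(\ell)+1\le q^*_\forall(\ell+1)+1=q^*_\exists(2\ell+1)$ uses monotonicity of $q^*_\forall$ only at arguments smaller than $2\ell$, so the induction on $\ell$ establishes the four identities and monotonicity in one pass.

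I expect the only genuine subtlety to be the bookkeeping at the interface between the recursive regime and the base cases. Claim~\ref{claim:splitrules} is stated only for $k\ge3$, and rule (iii) only for $\ell\ge2$, so whenever I invoke a split I must verify that the budget $k$ at the claimed threshold really is $\ge3$. This follows by combining monotonicity with the base values: for example, applying rule (iii) to $\textrm{MSL}_{\forall,k}(2\ell)$ with $\ell\ge2$ gives threshold $q^*_\exists(\ell)+1$, and $q^*_\exists(\ell)\ge q^*_\exists(1)=2$ by monotonicity forces this threshold to be $\ge3$. Checking these boundary conditions in each of the four cases is precisely where the base values and the monotonicity hypothesis must be used together, and it is what pins down the range restrictions in the statement ($\ell\ge2$ for $q^*_\forall(2\ell)$, $\ell\ge1$ for the others) as exactly the correct ones.
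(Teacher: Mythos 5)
Your proposal is correct and follows exactly the route the paper intends (the paper omits the proof, saying only that the lemma follows from the description of the $\mathsf{CMA}$ strategy in Section~\ref{sec:cmaformal}): read the three base values off the irreducible games, and obtain each recursive identity as $q^*_Q(\ell)=1+\max\bigl(q^*_{\bar Q}(\ell_1),q^*_{\bar Q}(\ell_2)\bigr)$ from Claim~\ref{claim:splitrules} together with monotonicity carried through the simultaneous induction. One small correction to your boundary bookkeeping: for the first identity at $\ell=1$ the budget is $q^*_\forall(1)+1=2<3$, so Claim~\ref{claim:splitrules} does not formally apply there and $q^*_\exists(2)=2$ should be verified directly as an extra base case (it holds: $\bS$ plays the closest-to-midpoint existentially and then wins the two resulting parallel $\textrm{MSL}_{\forall,1}(1)$ sub-games, and one quantifier cannot suffice since $r(2)=2$); all other instances pass your $k\ge 3$ check as you describe.
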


From Lemma \ref{lem:q*-vals} it is possible to recursively compute $q^*_\forall(\ell)$ and $q^*_\exists(\ell)$, and therefore $q^*(\ell)$ for all values of $\ell \geq 1$. These values are provided for $\ell \leq 127$ in Table \ref{q*-table}.

\begin{table}[ht] 
\begin{center}
\begin{tabular}{|c|c|c|c|c|}\hline 
$\ell$ & $q^*_\forall(\ell)$ & $q^*_\exists(\ell)$ & $q^*(\ell)$ & $r(\ell)$ \\ \hline \hline
  1 & 1 & 2& 1 & 1\\ \hline
  2 & 2 & 2 & 2 & 2 \\ \hline
  3  & 3 & 3 & 3 & 2\\ \hline
  4 & 3 & 3& 3 & 3\\ \hline
  5 & 3 & 4& 3& 3\\ \hline
  6-7 & 4 & 4 & 4 & 3\\ \hline
  8-9 & 4 & 4 & 4 & 4\\ \hline
  10 & 5 & 4 & 4 & 4\\ \hline
  11-15 & 5 & 5 & 5 & 4\\ \hline
  16-18 & 5 & 5 & 5 & 5\\ \hline
  19-21 & 5 & 6 & 5 & 5\\ \hline
  22-31 & 6 & 6 & 6 & 5\\ \hline
  32-37 & 6 & 6 & 6 & 6\\ \hline
  38-42 & 7 & 6 & 6 & 6\\ \hline
  43-63 & 7 & 7 & 7 & 6\\ \hline
  64-75 & 7 & 7 & 7 & 7\\ \hline
  76-85 & 7 & 8 & 7 & 7\\ \hline
  86-127 & 8 & 8 & 8 & 7\\ \hline
\end{tabular}
\caption{Values of $q^*_\forall(\ell), q^*_\exists(\ell), q^*(\ell)$ and $r(\ell)$ for $1\leq \ell \leq 127.$} 
\label{q*-table}
\end{center}
\end{table}

We now state two simple corollaries to Lemma \ref{lem:q*-vals}.

\begin{restatable}{corollary}{qinductionseparate}\label{cor:q*inductionseparate} 
We have $q^*_\exists(\ell ) = 2 + q^*_\exists(\lfloor(\ell +1)/4\rfloor)$ for all $\ell \geq 5$. Similarly, we have $q^*_\forall(\ell ) = 2 + q^*_\forall(\lfloor(\ell  + 2)/4\rfloor)$ for all $\ell  \geq 3$.
\end{restatable}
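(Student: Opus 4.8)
The plan is to prove both identities by unfolding the recurrences in Lemma~\ref{lem:q*-vals} twice, since each application of the lemma peels off one quantifier and roughly halves $\ell$, so two applications should yield the additive ``$+2$'' and the division by $4$. I would handle the two identities separately but in parallel, as they are mirror images.

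First I would establish $q^*_\exists(\ell) = 2 + q^*_\exists(\lfloor(\ell+1)/4\rfloor)$ for $\ell \geq 5$. The idea is to split into the even and odd cases for $\ell$ and apply the first-row recurrences of Lemma~\ref{lem:q*-vals} to go from an $\exists$-game to a $\forall$-game, and then the second-row recurrences to come back to an $\exists$-game. Concretely, writing $\ell = 2m$ or $\ell = 2m+1$, I get $q^*_\exists(\ell) = q^*_\forall(m') + 1$ for an appropriate $m'$ (either $m$ or $m+1$), and then applying the $\forall$-recurrence to $q^*_\forall(m')$ yields $q^*_\exists(\lfloor m'/2\rfloor) + 1$. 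Chaining these gives $q^*_\exists(\ell) = 2 + q^*_\exists(\lfloor m'/2\rfloor)$, and the arithmetic crux is to verify that in every parity case $\lfloor m'/2\rfloor$ equals $\lfloor(\ell+1)/4\rfloor$. This is a routine but careful floor computation: for instance when $\ell = 2m$, the first recurrence gives $m' = m$, so the inner argument is $\lfloor m/2\rfloor = \lfloor \ell/4 \rfloor$, and one checks $\lfloor \ell/4\rfloor = \lfloor(\ell+1)/4\rfloor$ when $\ell \equiv 0 \pmod 4$, etc. I would tabulate the four residues $\ell \bmod 4$ and confirm the identity holds in each, also verifying that the intermediate arguments stay in the valid range ($m' \geq 1$, and large enough that the correct recurrence branch applies).

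The second identity $q^*_\forall(\ell) = 2 + q^*_\forall(\lfloor(\ell+2)/4\rfloor)$ for $\ell \geq 3$ follows the same two-step unfolding, but starting from a $\forall$-game: apply the second-row recurrences to reduce $q^*_\forall(\ell)$ to $q^*_\exists(m'') + 1$, then the first-row recurrences to reduce $q^*_\exists(m'')$ to $q^*_\forall(\cdot) + 1$. The slightly different offset ($+2$ inside the floor rather than $+1$) arises because the $\forall$-to-$\exists$ step uses the $2\ell \to \ell-1$ branch (case (iii)), which shifts the argument downward, whereas the $\exists$-to-$\forall$ step in the first identity used the $2\ell+1 \to \ell+1$ branch (case (ii)), which shifts upward; tracking these shifts through the composition produces the $+2$ versus $+1$ discrepancy. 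Again I would split on $\ell \bmod 4$ and check the floor identity in each residue class, paying attention to the boundary conditions (the recurrence $q^*_\forall(2\ell) = q^*_\exists(\ell)+1$ requires $\ell \geq 2$, so I must confirm the intermediate argument never drops below the threshold for $\ell \geq 3$).

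The main obstacle I anticipate is purely bookkeeping: correctly matching up the floor expressions across the two halvings and confirming that the intermediate game arguments always land in the regime where the intended recurrence branch of Lemma~\ref{lem:q*-vals} is valid, rather than falling into one of the small base cases $q^*_\forall(1), q^*_\exists(1), q^*_\forall(2)$. I would verify the boundary value $\ell = 5$ (resp.\ $\ell = 3$) by hand against Table~\ref{q*-table} as a sanity check --- indeed $q^*_\exists(5) = 4$ and $2 + q^*_\exists(\lfloor 6/4\rfloor) = 2 + q^*_\exists(1) = 2 + 2 = 4$, and $q^*_\forall(3) = 3$ with $2 + q^*_\forall(\lfloor 5/4\rfloor) = 2 + q^*_\forall(1) = 2 + 1 = 3$, which both match --- and then trust the uniform residue-class argument for the general case.
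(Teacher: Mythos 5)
Your proposal is correct and follows essentially the same route as the paper: compose the two one-step recurrences of Lemma~\ref{lem:q*-vals} (in the unified forms $q^*_\exists(m) = q^*_\forall(\lceil m/2\rceil)+1$ and $q^*_\forall(m) = q^*_\exists(\lfloor m/2\rfloor)+1$), then verify the resulting floor identity by residue class modulo $4$ together with the range conditions. One small nitpick: your intuition for the $+2$ versus $+1$ offset misattributes it to the ``$\ell-1$'' branch of splitting rule (iii) --- the round count there is governed by the \emph{longer} sub-game, so the offset actually comes from the floor-versus-ceiling asymmetry between the two composed recurrences --- but your residue-class tabulation would yield the correct identity regardless.
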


\begin{proof}
It follows directly from Lemma \ref{lem:q*-vals} that $q^*_\exists(\ell) = q^*_\forall(\lceil\ell/2\rceil) + 1$ for $\ell \geq 2$. Similarly, $q^*_\forall(\ell) = q^*_\exists(\lfloor\ell/2\rfloor) + 1$ for $\ell \geq 3$. It follows that for $\ell \geq 1$:
    \begin{equation}
        q^*_\exists(4\ell + 1) = q^*_\exists(4\ell + 2) = q^*_\forall(2\ell + 1) + 1 = q^*_\exists(\ell) + 2. \label{enum1}
    \end{equation}
Similarly, for $\ell \geq 2$:
    \begin{equation}
        q^*_\exists(4\ell - 1) = q^*_\exists(4\ell) = q^*_\forall(2\ell) + 1 = q^*_\exists(\ell) + 2. \label{enum2}
    \end{equation}
Combining these gives us the result (and the associated ranges).
\end{proof}

\begin{restatable}{corollary}{qpowersoftwo}\label{cor:q*-powers-of-two}
For all $k \geq 1$, we have $q^*_\forall(2^k) = q^*_\forall(2^{k-1}) + 1$. Similarly, for all $k \geq 2$, we have $q^*_\exists(2^k) = q^*_\exists(2^{k-1}) + 1$.
\end{restatable}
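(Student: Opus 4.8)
The plan is to exploit the two cross-recurrences of Lemma~\ref{lem:q*-vals} that relate the two quantities at powers of two. Specializing $q^*_\forall(2\ell) = q^*_\exists(\ell) + 1$ (valid for $\ell \geq 2$) and $q^*_\exists(2\ell) = q^*_\forall(\ell) + 1$ (valid for $\ell \geq 1$) to $\ell = 2^{k-1}$ yields
\[
q^*_\forall(2^k) = q^*_\exists(2^{k-1}) + 1 \quad (k \geq 2), \qquad q^*_\exists(2^k) = q^*_\forall(2^{k-1}) + 1 \quad (k \geq 1).
\]
These two identities couple the sequences: each quantity at level $k$ is pinned down by the \emph{other} quantity at level $k-1$. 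The cleanest route is therefore to first establish the auxiliary claim that $q^*_\forall(2^k) = q^*_\exists(2^k) = k+1$ for all $k \geq 1$, from which both parts of the corollary drop out by taking successive differences.

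First I would dispatch the base case $k = 1$: Lemma~\ref{lem:q*-vals} gives $q^*_\forall(2) = 2$ outright, and the second relation gives $q^*_\exists(2) = q^*_\forall(1) + 1 = 2$, so both equal $1+1$. For the inductive step, assuming $q^*_\forall(2^{k-1}) = q^*_\exists(2^{k-1}) = k$ for some $k \geq 2$, I would feed this into the second relation to get $q^*_\exists(2^k) = q^*_\forall(2^{k-1}) + 1 = k+1$, and into the first relation to get $q^*_\forall(2^k) = q^*_\exists(2^{k-1}) + 1 = k+1$. This closes the induction, with the first relation's hypothesis $k \geq 2$ and the second's hypothesis $k \geq 1$ both satisfied throughout.

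The corollary then follows by subtraction. For the universal case, for $k \geq 2$ the closed form gives $q^*_\forall(2^k) - q^*_\forall(2^{k-1}) = (k+1) - k = 1$, and the remaining $k = 1$ instance is exactly $q^*_\forall(2) = 2 = q^*_\forall(1) + 1$, read off from the lemma. For the existential case, for $k \geq 2$ the closed form likewise gives $q^*_\exists(2^k) - q^*_\exists(2^{k-1}) = (k+1) - k = 1$.

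The only point requiring care — more bookkeeping than a genuine obstacle — is tracking the ranges over which the recurrences are valid, and in particular understanding why the existential statement deliberately starts at $k \geq 2$. This is forced by the anomaly $q^*_\exists(1) = 2$ (rather than $1$), which reflects the mandatory dummy existential move in the irreducible game $\textrm{MSL}_{\exists,2}(1)$; consequently the closed form $q^*_\exists(2^k) = k+1$ holds only from $k = 1$ onward and fails at $k = 0$, which is precisely why the corollary's existential increment cannot be claimed at $k = 1$.
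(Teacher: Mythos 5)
Your proof is correct, but it takes a genuinely different route from the paper's. The paper proves the corollary by induction on $k$ using Corollary~\ref{cor:q*inductionseparate} — the within-type ``quarter'' recurrences $q^*_\forall(\ell) = 2 + q^*_\forall(\lfloor(\ell+2)/4\rfloor)$ and $q^*_\exists(\ell) = 2 + q^*_\exists(\lfloor(\ell+1)/4\rfloor)$ — treating the two sequences entirely separately, so that each induction needs two consecutive base values ($q^*_\forall(1), q^*_\forall(2)$ and $q^*_\exists(2), q^*_\exists(4)$) and the step $q^*_Q(2^{k-2}) + 2 = q^*_Q(2^{k-1}) + 1$ invokes the inductive hypothesis at the previous level. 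You instead bypass Corollary~\ref{cor:q*inductionseparate} altogether and work directly with the cross-type halving recurrences of Lemma~\ref{lem:q*-vals}, coupling the two sequences in a simultaneous induction that establishes the explicit closed form $q^*_\forall(2^k) = q^*_\exists(2^k) = k+1$. What your approach buys is twofold: it needs only one base level ($k=1$) rather than two per sequence, and it delivers up front the closed form that the paper only extracts \emph{afterwards} (in the proof of the main theorem of Section~\ref{sec:boundingqn}, where $q^*_\forall(2^k) = q^*_\exists(2^k) = k+1$ is cited as a consequence of this corollary). Your range bookkeeping is also right: the relation $q^*_\forall(2\ell) = q^*_\exists(\ell)+1$ requires $\ell \geq 2$, which is exactly satisfied at $\ell = 2^{k-1}$ for $k \geq 2$, and your explanation of why the existential increment must be excluded at $k=1$ (the anomaly $q^*_\exists(1) = 2$, so $q^*_\exists(2) = 2 \neq q^*_\exists(1)+1$) correctly identifies why the corollary's second clause starts at $k \geq 2$.
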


\begin{proof}
We prove both statements by induction on $k$. For the first, we have $q^*_\forall(1) = 1$ and $q^*_\forall(2) = 2$, establishing the base case. Inductively, by Corollary \ref{cor:q*inductionseparate}, we have:
\begin{equation*}
    q^*_\forall(2^k) = q^*_\forall\left(\left\lfloor\frac{2^k  + 2}{4}\right\rfloor\right) + 2 = q^*_\forall(2^{k-2}) + 2 = q^*_\forall(2^{k-1}) + 1,
\end{equation*}
with the last equality following from the induction hypothesis. Similarly, for the second part, we start with $q^*_\exists(2) = 2$ and $q^*_\exists(4) = 3$, establishing the base case. Inductively, by Corollary \ref{cor:q*inductionseparate}, we have:
\begin{equation*}
    q^*_\exists(2^k) = q^*_\exists\left(\left\lfloor\frac{2^k  + 1}{4}\right\rfloor\right) + 2 = q^*_\exists(2^{k-2}) + 2 = q^*_\exists(2^{k-1}) + 1.\qedhere
\end{equation*}
\end{proof}

\begin{remark}\label{rem:recursive++}
Lemma \ref{lem:q*-vals} and Corollary \ref{cor:q*inductionseparate} are more than just recursive expressions for $q^*_\exists(\ell)$ and $q^*_\forall(\ell)$. By virtue of Lemma \ref{lem:cma-is-well-defined}, we can now read off a quantifier signature establishing $q^*_\exists(2\ell)$ in terms of $q^*_\forall(\ell)$, and analogously for the other expressions.
\end{remark}

We now state and prove the main result of this section.

\begin{restatable}{theorem}{mainlos}
For all $\ell \geq 1$, we have:
\begin{equation*}
    r(\ell) \leq q(\ell) \leq r(\ell) + 1.
\end{equation*}
\end{restatable}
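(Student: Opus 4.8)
The left inequality $r(\ell)\le q(\ell)$ is immediate, since the quantifier rank of a formula never exceeds its number of quantifiers (as already noted in the text). For the right inequality it suffices to bound $q^*_\forall(\ell)$, because $q(\ell)=\min(q_\forall(\ell),q_\exists(\ell))\le q_\forall(\ell)\le q^*_\forall(\ell)$. Thus the whole theorem reduces to showing $q^*_\forall(\ell)\le r(\ell)+1=2+\lfloor\log(\ell)\rfloor$, where the last equality is Theorem~\ref{rosey-fact}. My plan is to pin down $q^*_\forall$ \emph{exactly} on powers of two and then fill in the intermediate lengths using monotonicity.

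The power-of-two values come for free: by Corollary~\ref{cor:q*-powers-of-two} together with the base value $q^*_\forall(1)=q^*_\forall(2^0)=1$ from Lemma~\ref{lem:q*-vals}, a one-line induction on $k$ gives $q^*_\forall(2^k)=k+1$ for every $k\ge 0$. (I would note in passing that this already yields $q^*_\forall(2^k)=r(2^k)$, i.e.\ exact equality at powers of two.) To transfer this to arbitrary $\ell$ I will use that $q^*_\forall$ is nondecreasing in $\ell$, which is the one ingredient not yet recorded and hence the main piece of real work.

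I would establish monotonicity by a \emph{simultaneous} strong induction proving that both $q^*_\forall$ and $q^*_\exists$ are nondecreasing. The engine is the pair of halving recurrences extracted from Lemma~\ref{lem:q*-vals} (and already used inside the proof of Corollary~\ref{cor:q*inductionseparate}): $q^*_\exists(\ell)=q^*_\forall(\lceil \ell/2\rceil)+1$ for $\ell\ge 2$, and $q^*_\forall(\ell)=q^*_\exists(\lfloor \ell/2\rfloor)+1$ for $\ell\ge 3$. Since $\lceil(\ell+1)/2\rceil\ge\lceil \ell/2\rceil$ and $\lfloor(\ell+1)/2\rfloor\ge\lfloor \ell/2\rfloor$, with both arguments strictly smaller than $\ell$ once $\ell\ge 2$, each inequality $q^*_\bullet(\ell)\le q^*_\bullet(\ell+1)$ propagates from the (at most unit) gap at the smaller arguments covered by the induction hypothesis. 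The only remaining checks are the finitely many boundary comparisons between the base cases and the recursive regime, e.g.\ $q^*_\forall(1)=1\le 2=q^*_\forall(2)$ and $q^*_\exists(1)=2\le 2=q^*_\exists(2)$, which I read off directly from Lemma~\ref{lem:q*-vals}. I expect the main obstacle to be precisely this floor/ceiling bookkeeping at the transition, but nothing deeper.

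Finally I combine the two ingredients. Given $\ell$, set $k=\lceil\log(\ell)\rceil$, so that $\ell\le 2^k$; monotonicity then gives $q^*_\forall(\ell)\le q^*_\forall(2^k)=k+1=\lceil\log(\ell)\rceil+1$. Since $\lceil\log(\ell)\rceil\le\lfloor\log(\ell)\rfloor+1$ always holds, this yields $q^*_\forall(\ell)\le\lfloor\log(\ell)\rfloor+2=r(\ell)+1$, which together with the trivial lower bound completes the proof. I would close with the remark that a naive direct induction on the inequality $q^*_\forall(\ell)\le r(\ell)+1$ does \emph{not} work: because $q^*_\forall(2^k)=r(2^k)$, the single unit of slack in the bound is already consumed at every power of two, so the rounding in the recurrence cannot recover it (the loss first surfaces when computing $q^*_\exists$ at lengths $\ell=2^{k}-1$). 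Routing through the exact power-of-two values, rather than through the inequality, is exactly what sidesteps this.
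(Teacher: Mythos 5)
Your proof is correct and follows essentially the same route as the paper: both arguments pin down $q^*_\forall$ (and, in the paper, also $q^*_\exists$) exactly at powers of two via Corollary~\ref{cor:q*-powers-of-two}, observe these values coincide with $r(2^k)=k+1$, and then interpolate to arbitrary $\ell$ using monotonicity. The only substantive difference is that you supply an actual inductive proof of the monotonicity of $q^*_\forall$ and $q^*_\exists$ from the halving recurrences, whereas the paper simply asserts that "all three functions are monotonic" --- so your write-up is, if anything, slightly more complete on that point.
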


\begin{proof}
The first inequality, $q(\ell) \geq r(\ell)$, is obvious. For the second, we will show that $q^*_\exists(\ell)$ and $q^*_\forall(\ell)$ are both bounded above by $r(\ell) + 1$ (and hence bound $q(\ell)$). From Lemma \ref{lem:q*-vals}:
\begin{align*}
    r(1) = q^*_\forall(1) = q^*_\exists(1) - 1 = 1 && r(2) = q^*_\forall(2) = q^*_\exists(2) = 2,
\end{align*}
so the assertion is true for $\ell \leq 2$. Now, by Corollary \ref{cor:q*-powers-of-two}, we know that $q^*_\forall(2^k) = q^*_\exists(2^k) = k+1$ for $k \geq 1$. By Theorem \ref{rosey-fact}, we also know that $r(2^k) = k+1$ for $k \geq 1$. So the three functions, $r(\cdot)$, $q^*_\forall(\cdot)$, and $q^*_\exists(\cdot)$, all equal each other at successive powers of two, and increase by one between these successive powers. Since all three functions are monotonic, they differ from one another by at most one. Therefore, we have $q^*_\exists(\ell) \leq r(\ell) + 1$ and $q^*_\forall(\ell) \leq r(\ell) + 1$.
\end{proof}

\subsection{Quantifier Alternation Pattern}\label{sec:quantalternation}

In this section, we present two results that will be useful for a few results in Section \ref{sec:strings}.

\begin{restatable}[Alternation Theorem, Smaller vs.~Larger]{theorem}{alternationone}\label{thm:alternation1} 
For every $\ell \geq 1$, there is a separating sentence $\sigma_\ell$ for $(L_{\leq \ell}, L_{> \ell})$ with $q^*(\ell)$ quantifiers, such that the quantifier signature of $\sigma_\ell$ strictly alternates and ends with a $\forall$.
\end{restatable}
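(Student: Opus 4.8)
The plan is to reduce the statement, via Lemma \ref{lem:pattern}, to exhibiting a winning strategy for $\bS$ in the $q^*(\ell)$-round game on $(L_{\leq\ell}, L_{>\ell})$ whose pattern strictly alternates and ends in $\forall$. Since $q^*(\ell) = \min(q^*_\forall(\ell), q^*_\exists(\ell))$ and the $\mathsf{CMA}$ strategy already realizes both $q^*_\forall(\ell)$ and $q^*_\exists(\ell)$, the whole argument amounts to controlling the quantifier pattern that $\mathsf{CMA}$ produces and showing that the minimizing side can be chosen so that this pattern is strictly alternating.

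First I would read off the $\mathsf{CMA}$ pattern explicitly. By Lemma \ref{lem:cma-is-well-defined} and Remark \ref{rem:recursive++}, combined with Corollary \ref{cor:q*inductionseparate}, each reduction step prepends exactly two alternating quantifiers and removes two rounds, so the minimal $\forall$-strategy pattern is $(\forall,\exists)$ repeated $s$ times followed by a base pattern from $q^*_\forall(1)$ or $q^*_\forall(2)$, namely $(\forall)$ or $(\forall,\forall)$. Hence the minimal $\forall$-pattern is either $(\forall\exists)^s\forall$, which strictly alternates and ends in $\forall$, or $(\forall\exists)^s\forall\forall$, which fails to alternate at the end; the first occurs exactly when the recursion bottoms out at length $1$, i.e.\ when $q^*_\forall(\ell)=2s+1$ is odd, and the second exactly when $q^*_\forall(\ell)$ is even. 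Symmetrically, the minimal $\exists$-pattern is $(\exists\forall)^s$ followed by a base pattern from $q^*_\exists(\ell_0)$ with $\ell_0\in\{1,2,3,4\}$, equal to $(\exists,\forall)$ for $\ell_0\in\{1,2\}$ and to $(\exists,\forall,\forall)$ for $\ell_0\in\{3,4\}$, so it strictly alternates and ends in $\forall$ exactly when $q^*_\exists(\ell)$ is even. I would record this as: a strictly-alternating, $\forall$-ending $\forall$-strategy with $q^*_\forall(\ell)$ quantifiers exists iff $q^*_\forall(\ell)$ is odd, and the corresponding $\exists$-strategy with $q^*_\exists(\ell)$ quantifiers exists iff $q^*_\exists(\ell)$ is even.

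It then suffices to prove that the minimizer always has the right parity: for every $\ell$, either $q^*_\forall(\ell)\leq q^*_\exists(\ell)$ with $q^*_\forall(\ell)$ odd, or $q^*_\exists(\ell)\leq q^*_\forall(\ell)$ with $q^*_\exists(\ell)$ even. When $q^*_\forall(\ell)=q^*_\exists(\ell)$ this is immediate, by taking the side whose parity matches $q^*(\ell)$. The two strict cases follow from the parity-monotonicity statements (ii) $q^*_\forall(\ell)$ even $\Rightarrow q^*_\forall(\ell)\geq q^*_\exists(\ell)$ and (iii) $q^*_\exists(\ell)$ odd $\Rightarrow q^*_\exists(\ell)\geq q^*_\forall(\ell)$, which I would establish by simultaneous strong induction on $\ell$ using the relations $q^*_\exists(\ell)=q^*_\forall(\lceil\ell/2\rceil)+1$ and $q^*_\forall(\ell)=q^*_\exists(\lfloor\ell/2\rfloor)+1$ from Lemma \ref{lem:q*-vals}. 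Subtracting these turns the desired inequalities into statements comparing $q^*_\exists(\lfloor\ell/2\rfloor)$ with $q^*_\forall(\lceil\ell/2\rceil)$; for even $\ell$ both half-indices coincide and the claim reduces directly to the complementary induction hypothesis, while for odd $\ell=2m+1$ the identity $\lceil m/2\rceil=\lfloor(m+1)/2\rfloor$ collapses the two half-indices to a single value $p<\ell$, and propagating parity through the two relations shows the claim reduces to the same statement at $p$.

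I expect the parity-monotonicity induction in (ii) and (iii) to be the main obstacle. The even-$\ell$ cases are routine cross-references between the two statements, but the odd-$\ell$ cases rely on the exact cancellation $\lceil m/2\rceil=\lfloor(m+1)/2\rfloor$ together with a careful parity argument to rule out an off-by-one discrepancy (a difference of exactly one would violate the forced parity), and the small base cases $\ell\leq 4$ must be verified by hand both to anchor the induction and to keep the half-index recursions within the ranges where Lemma \ref{lem:q*-vals} applies.
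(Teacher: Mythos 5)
Your proposal is correct, and it reorganizes the argument in a genuinely different way from the paper. The paper proves the theorem by a direct induction on $\ell$ (split by parity of $\ell$ and by the three-way comparison of $q^*_\exists(\ell)$ with $q^*_\forall(\ell)$), constructing the alternating sentence $\sigma_\ell$ at each step by prepending a quantifier to $\sigma_{\ell'}$ for $\ell'\approx\ell/2$; its odd-$\ell$ cases are delicate and involve auxiliary sentences and sub-cases. You instead factor the problem into two independent pieces: (a) a structural fact that every $\mathsf{CMA}$ pattern is strictly alternating except for a possible doubled $\forall$ at the end, so that alternation is equivalent to a parity condition ($q^*_\forall(\ell)$ odd for the $\forall$-side, $q^*_\exists(\ell)$ even for the $\exists$-side); and (b) a purely numerical claim that the minimizer of $\{q^*_\forall(\ell),q^*_\exists(\ell)\}$ can always be chosen with the forcing parity, proved by a clean simultaneous halving induction on the recurrences $q^*_\exists(\ell)=q^*_\forall(\lceil\ell/2\rceil)+1$ and $q^*_\forall(\ell)=q^*_\exists(\lfloor\ell/2\rfloor)+1$; I checked that your odd-$\ell$ reduction via $\lceil m/2\rceil=\lfloor(m+1)/2\rfloor$ sends statement (ii) at $\ell$ to (ii) at $p<\ell$ and (iii) to (iii), with no off-by-one leakage, and that (ii) and (iii) are consistent with Table \ref{q*-table}. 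What your route buys is modularity: the sentence-level induction of the paper is replaced by arithmetic on the recurrences, and the case analysis becomes mechanical. What the paper's route buys is an explicit recursive construction of $\sigma_\ell$ staying close to the game. Two presentational cautions: your step (a) relies on the invariant that patterns ``alternate, ending in one or two universal quantifiers,'' which appears only inside the proof of Lemma \ref{lem:cma-is-well-defined} rather than in its statement, so you should extract and justify it explicitly; and each reduction step of $\mathsf{CMA}$ prepends \emph{one} quantifier (two only after composing the two half-steps as in Corollary \ref{cor:q*inductionseparate}), so the phrase ``prepends exactly two alternating quantifiers'' should be stated as the composition of the two one-step recursions, with the base-case ranges of Lemma \ref{lem:q*-vals} checked as you indicate.
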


\begin{proof}
The theorem is certainly true for small values of $\ell$; e.g., when $\ell = 1$, $q^*(1) = q^*_\forall(1)$, and the sentence corresponding to that strategy has quantifier signature $\forall$. Similarly, when $\ell = 2$, $q^*(1) = q^*_\exists(1)$, and the sentence corresponding to that strategy has quantifier signature $\exists\forall$. The theorem can be verified for $\ell \leq 5$ simply by referring to Table \ref{q*-table}. We now proceed by induction.

Suppose $\ell$ is even, say $\ell = 2\ell'$. There are three cases:
\begin{itemize}
    \item If $q^*_\exists(\ell) < q^*_\forall(\ell)$, this means by Lemma \ref{lem:q*-vals} that $q^*_\forall(\ell') < q^*_\exists(\ell')$. So, by induction, there is a separating sentence $\sigma_{\ell'}$ with quantifier signature $\forall\exists\ldots\forall$ consisting of $q^*_\forall(\ell')$ alternating quantifiers. But by Lemma \ref{lem:q*-vals} and Remark \ref{rem:recursive++}, we can obtain a separating sentence $\sigma_\ell$ with quantifier signature $\exists\forall\ldots\forall$ consisting of $q^*_\exists(\ell)$ alternating quantifiers.
    \item If $q^*_\forall(\ell) < q^*_\exists(\ell)$, this means by Lemma \ref{lem:q*-vals} that $q^*_\exists(\ell') < q^*_\forall(\ell')$. Again, by induction, there is a separating sentence $\sigma_{\ell'}$ with quantifier signature $\exists\forall\ldots\forall$ consisting of $q^*_\exists(\ell')$ alternating quantifiers. By Lemma \ref{lem:q*-vals}, we can obtain a separating sentence $\sigma_\ell$ with quantifier signature $\forall\exists\ldots\forall$ consisting of $q^*_\forall(\ell)$ alternating quantifiers.
    \item If $q^*_\forall(\ell) = q^*_\exists(\ell)$, this means by Lemma \ref{lem:q*-vals} that $q^*_\exists(\ell') = q^*_\forall(\ell')$. Again, by induction, there is a separating sentence $\sigma_{\ell'}$ consisting of $q^*(\ell')$ alternating quantifiers ending with a $\forall$. By Lemma \ref{lem:q*-vals}, we can obtain a separating sentence $\sigma_\ell$ by prepending a quantifier to $\sigma_{\ell'}$ maintaining alternation. This would still contain $q^*(\ell)$ alternating quantifiers.
\end{itemize}

Now suppose $\ell$ is odd, say $\ell = 2\ell' + 1$. There are three cases:
\begin{itemize}
    \item If $q^*_\exists(\ell) < q^*_\forall(\ell)$, this means by Lemma \ref{lem:q*-vals} that $q^*_\forall(\ell' + 1) < q^*_\exists(\ell') \leq q^*_\exists(\ell' + 1)$. By induction, there is a separating sentence $\sigma_{\ell' + 1}$ with quantifier signature $\forall\exists\ldots\forall$ consisting of $q^*_\forall(\ell' + 1)$ alternating quantifiers. Then we can prepend a $\exists$ to obtain a separating sentence $\sigma_\ell$ with quantifier signature consisting of $q^*_\exists(\ell)$ alternating quantifiers.
    \item If $q^*_\forall(\ell) < q^*_\exists(\ell)$, this means by Lemma \ref{lem:q*-vals} that $q^*_\exists(\ell') < q^*_\forall(\ell' + 1)$. If $q^*_\exists(\ell') < q^*_\forall(\ell')$, we are again done by induction. If $q^*_\exists(\ell') = q^*_\forall(\ell')$, this means $q^*_\forall(\ell' + 1) > q^*_\forall(\ell')$, implying by Corollary \ref{cor:q*inductionseparate} that $\ell' \equiv 1\pmod 4$. But then $q^*_\exists(\ell' + 1) = q^*_\exists(\ell') < q^*_\forall(\ell' + 1)$. Therefore, $q^*(\ell') = q^*(\ell' + 1) = q^*_\exists(\ell' + 1)$, and so any alternating quantifier signature with $q^*(\ell')$ quantifiers ending with a $\forall$ must start with a $\exists$. Since by induction, $\sigma_{\ell'}$ has $q^*(\ell')$ alternating quantifiers ending with a $\forall$, it must also start with a $\exists$. Now again, we are done by prepending a $\forall$, by induction.
    \item If $q^*_\forall(\ell) = q^*_\exists(\ell)$, this means by Lemma \ref{lem:q*-vals} that $q^*_\exists(\ell') = q^*_\forall(\ell' + 1)$. Again, by induction, if $q^*_\forall(\ell' + 1) < q^*_\exists(\ell' + 1)$, we are done. If $q^*_\forall(\ell' + 1) = q^*_\exists(\ell' + 1)$, however, we have to be a little more careful. In that situation, if $q^*_\exists(\ell') = q^*_\forall(\ell')$, then $q^*(\ell') = q^*(\ell' + 1)$, and then the sentences $\sigma_{\ell'}$ and $\sigma_{\ell' + 1}$ have the same quantifier signature. Now, depending on the leading quantifier in that signature, we can inductively use either $q^*_\forall(\ell)$ or $q^*_\exists(\ell)$. Otherwise, $q^*_\exists(\ell') > q^*_\forall(\ell')$. But then, $\sigma_{\ell'}$ starts with a $\forall$, and therefore, the sentence $\sigma'_{\ell'}$ that is used by $\bS$ in the $q^*_\exists(\ell')$ strategy has $q^*(\ell') + 1 = q^*(\ell' + 1)$ quantifiers and starts with an $\exists$. Now, by induction, we can obtain a sentence $\sigma_\ell$ using $q^*_\forall(\ell)$ that calls $\sigma'_{\ell'}$, and has $1 + q^*(\ell' + 1)$ alternating quantifiers ending with a $\forall$. Since $q^*(\ell' + 1) = q^*_\forall(\ell' + 1) = q^*_\exists(\ell') = q^*(\ell) - 1$, we are done.
\end{itemize}
This concludes the proof.
\end{proof}

\begin{restatable}[Alternation Theorem, One vs.~All]{theorem}{alternationtwo}\label{thm:alternation2} 
For every $\ell \geq 1$, there is a sentence $\varphi_\ell$ separating $L_\ell$ from all other linear orders. This sentence $\varphi_\ell$ has an alternating quantifier signature (ending with a $\forall$) consisting of at most $q^*(\ell) + 2$ quantifiers.
\end{restatable}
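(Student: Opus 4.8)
The plan is to construct $\varphi_\ell$ explicitly from the sentences produced by Theorem~\ref{thm:alternation1}, and then to drive the quantifier count down using parallel play. For $\ell \geq 2$, fix by Theorem~\ref{thm:alternation1} separating sentences $\sigma_\ell$ for $(L_{\leq\ell}, L_{>\ell})$ and $\sigma_{\ell-1}$ for $(L_{\leq\ell-1}, L_{>\ell-1})$, each strictly alternating, ending in $\forall$, with $q^*(\ell)$ and $q^*(\ell-1)$ quantifiers respectively, and set $\varphi_\ell = \sigma_\ell \land \lnot\sigma_{\ell-1}$. I would first verify separation by cases on length: on $L_\ell$ we have $\sigma_\ell$ true (length $\leq \ell$) and $\sigma_{\ell-1}$ false (length $> \ell-1$), so $\varphi_\ell$ holds; on $L_m$ with $m < \ell$ the conjunct $\lnot\sigma_{\ell-1}$ fails; and on $L_m$ with $m > \ell$ the conjunct $\sigma_\ell$ fails. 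The cases $\ell \leq 2$, where $L_{<\ell}$ may be empty, are handled directly from Table~\ref{q*-table}, taking $\varphi_1 = \sigma_1$. Counted naively, $\varphi_\ell$ uses $q^*(\ell) + q^*(\ell-1)$ quantifiers, far above the claimed bound, so all the work is in saving quantifiers.

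For the count I would reformulate via Theorem~\ref{thm:MSfundamental}: it suffices to show $\bS$ wins the MS game on $(\{L_\ell\},\, L_{<\ell} \sqcup L_{>\ell})$ with an alternating pattern of length $q^*(\ell)+2$ ending in $\forall$. Partition the right side as $\cB_1 = L_{>\ell}$ and $\cB_2 = L_{<\ell}$. The two relevant sub-games are $(\{L_\ell\}, \cB_1)$, won by the $\mathsf{CMA}$ strategy underlying $\sigma_\ell$ with pattern $P$ (alternating, ending $\forall$, length $q^*(\ell)$), and $(\{L_\ell\}, \cB_2)$, won by the strategy corresponding to $\lnot\sigma_{\ell-1}$, whose pattern $\bar{P}'$ is the complement of $\sigma_{\ell-1}$'s pattern, hence alternating, ending in $\exists$, of length $q^*(\ell-1)$ (both facts follow from Lemma~\ref{lem:pattern}). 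A pleasant feature of the shared left board is that the cross-matching condition of parallel play is automatic: each sub-game already prevents $L_\ell$ from matching its own right part, those parts exhaust $\cB$, and adding the other sub-game's pebbles to $L_\ell$ only imposes more constraints.

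Next comes the pattern alignment. Using the monotonicity $q^*(\ell-1) \leq q^*(\ell)$ that follows from Lemma~\ref{lem:q*-vals}, I would exhibit a single alternating pattern $R \in \{\exists,\forall\}^{q^*(\ell)+2}$ ending in $\forall$ containing both $P$ and $\bar{P}'$ as subsequences: take $P$ to be the length-$q^*(\ell)$ suffix of $R$, which is forced to be alternating and to end in $\forall$, hence equals $P$; and take $\bar{P}'$ to be the length-$q^*(\ell-1)$ contiguous block of $R$ ending at the penultimate position, which is an $\exists$ (valid since $q^*(\ell-1) \leq q^*(\ell)+1$). Both embeddings are contiguous blocks, so both patterns are subsequences of $R$. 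Granting that the two strategies can be played simultaneously, the Generalized Parallel Play Lemma (Lemma~\ref{lem:parallelplay}) then gives $\bS$ a win with pattern $R$, and Lemma~\ref{lem:pattern} converts this into a separating sentence with exactly this alternating signature of length $q^*(\ell)+2$ ending in $\forall$.

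The hard part will be precisely the point where Lemma~\ref{lem:parallelplay} cannot be invoked verbatim: its hypothesis requires a \emph{disjoint} partition $\cA = \cA_1 \sqcup \cA_2$, whereas here both sub-games live on the same left board $L_\ell$. Playing them in parallel is well-defined only if, in every existential round that both embeddings use, the two $\mathsf{CMA}$ sub-strategies prescribe the \emph{same} element of $L_\ell$. I expect this to succeed because the existential moves of both sub-strategies are the canonical closest-to-midpoint descents on $L_\ell$ and its dyadic sub-intervals; since $\ell$ and $\ell-1$ differ by one, these descents agree for all but the final base-case rounds, and the two extra quantifiers in $R$ furnish exactly the slack needed to reconcile the divergent base cases. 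Making this agreement precise --- tracking how the recursions of Claim~\ref{claim:splitrules} for $L_{\leq\ell}$ versus $L_{\leq\ell-1}$ stay synchronized on the fixed board $L_\ell$ --- is the delicate step, and is where I would concentrate the effort.
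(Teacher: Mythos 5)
Your construction of the target sentence ($\sigma_\ell \land \lnot\sigma_{\ell-1}$, i.e.\ ``length $\leq \ell$ and not length $\leq \ell-1$'') and your reduction to a parallel-play argument on $(\{L_\ell\},\, L_{\leq \ell-1} \sqcup L_{>\ell})$ match the paper exactly, and you have correctly located the obstruction: Lemma~\ref{lem:parallelplay} needs a \emph{disjoint} partition of the left side, while both sub-games want to live on the single board $L_\ell$. But your proposed resolution --- synchronizing the two $\mathsf{CMA}$ descents so that they ``prescribe the same element'' in shared existential rounds --- misdiagnoses the difficulty and would not go through. The two sub-strategies have \emph{complementary} quantifier signatures ($\sigma_\ell$'s strategy and $\lnot\sigma_{\leq\ell-1}$'s strategy flip every quantifier relative to one another), so there are no rounds in which both play existentially; instead, in essentially every round one sub-strategy wants to move on the left and the other on the right, and $\bS$ can only choose one side per round. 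No amount of aligning closest-to-midpoint descents fixes this, because the conflict is about which \emph{side} to play on, not which element. Your contiguous-block embedding of the two patterns into a common alternating $R$ is fine as far as it goes, but it presupposes that the shared-board issue has been resolved, which is exactly the part left open.

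The paper's resolution is different and does not require any synchronization: $\bS$ spends his first move (a universal move) playing $\r$ on $\mathsf{max}$ on every board of one part of $\cB$ while simultaneously beginning the appropriate $\mathsf{CMA}$ strategy on the other part (recall $\mathsf{CMA}$ never plays on $\mathsf{max}$, so the two parts of $\cB$ land in distinct isomorphism classes, $\r = \mathsf{max}$ versus $\r \neq \mathsf{max}$). Duplicator's oblivious copying of $L_\ell$ then produces a genuine partition $\cA = \cA_1 \sqcup \cA_2$ of the left side according to where she placed $\r$, the cross-matching condition holds because of the $\r$-versus-$\mathsf{max}$ discrepancy, and Lemma~\ref{lem:parallelplay} applies verbatim: both sub-patterns are subsequences of a single strictly alternating pattern starting with $\forall$ of length $q^*(\ell)+1$ or $q^*(\ell)+2$ (depending on parity). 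The lesson is that the first quantifier is not overhead to be minimized away but is precisely the device that manufactures the disjoint left-hand partition your argument is missing.
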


\begin{proof}
When $\ell = 1$, the theorem follows directly from Theorem \ref{thm:alternation1}. So suppose $\ell > 1$.

Again let $\cA = \{L_\ell\}$, and let $\cB = \cB_1 \sqcup \cB_2$, where $\cB_1 = L_{\leq \ell - 1}$, and $\cB_2 = L_{> \ell}$. By Theorem \ref{thm:alternation1}, there is a sentence $\sigma_{\leq \ell}$ that is true for $L_{\leq \ell}$ and false for $L_{> \ell} = \cB_2$, with the given alternating quantifier signature, with $q^*(\ell)$ quantifiers. Similarly, there is a sentence $\sigma_{\leq \ell - 1}$ which is true for $L_{\leq \ell - 1} = \cB_1$ and false for $L_{\geq \ell}$, with the given alternating quantifier signature, with $q^*(\ell - 1) \leq q^*(\ell)$ quantifiers. Assume these two sentences both have $q^*(\ell)$ quantifiers (possibly by prepending a dummy leading quantifier to $\sigma_{\leq \ell - 1}$). Let $\sigma_2 := \sigma_{\leq \ell}$ and $\sigma_1 := \lnot\sigma_{\leq \ell - 1}$. Note that $\sigma_1$ separates $(\cA, \cB_1)$ (say with strategy $\cS_1$), and $\sigma_2$ separates $(\cA, \cB_2)$ (say with strategy $\cS_2$), and so $\sigma_1 \land \sigma_2$ separates $(\cA, \cB)$. Furthermore, $\sigma_1$ and $\sigma_2$ both have alternating quantifier signatures of the same length $q^*(\ell)$, but they are complements of each other: $\sigma_2$ ends in a $\forall$, and $\sigma_1$ ends in a $\exists$.

Consider the MS game on $(\cA, \cB)$. We will now give a $\bS$ strategy. $\bS$ always starts with a universal move. Exactly one of the sentences $\sigma_1$ and $\sigma_2$ begins with a $\forall$.

If the sentence with a leading $\forall$ is $\sigma_1$, $\bS$ plays his round $1$ moves, playing pebble $\r$ on the element $\mathsf{max}$ in all boards in $\cB_2$, and according to the strategy $\cS_1$ in all boards in $\cB_1$. Note that, by virtue of $\cS_1$ being the $\mathsf{CMA}$ strategy, $\bS$ never plays $\r$ on the element $\mathsf{max}$ in any board in $\cB_1$. Therefore, every board in $\cB_1$ satisfies $\r \neq \mathsf{max}$, whereas every board in $\cB_2$ satisfies $\r = \mathsf{max}$. Once $\bD$ has responded obliviously, partition $\cA$ as $\cA_1 \sqcup \cA_2$ such that every board in $\cA_1$ satisfies $\r \neq \mathsf{max}$, whereas every board in $\cA_2$ satisfies $\r = \mathsf{max}$ as well.

Now, the sub-games $(\cA_1, \cB_1)$ and $(\cA_2, \cB_2)$ can be played in parallel; there will be no matching pair from $\cA_i$ and $\cB_j$ for $i \neq j$; furthermore, the two strategies both have patterns that are subsequences of the sequence $(\forall, \exists, \ldots, \exists, \forall)$, which has length $q^*(\ell) + 1$ or $q^*(\ell) + 2$ depending on the parity of $q^*(\ell)$. Therefore, by Lemma \ref{lem:parallelplay}, the result follows.
\end{proof}
\section{Strings}\label{sec:strings}

In this section, we consider binary strings and the number of quantifiers needed to separate two sets of strings. We would like to bound the number of quantifiers required for these separations as a function of both the length of the strings, as well as the sizes of the sets.

\subsection{Separating One String from Another String}\label{sec:one-vs-one}

\begin{restatable}[One vs.~One, $n$-bit]{proposition}{onevsonen}\label{prop:one-vs-one-n}
    \textbf{Upper Bound:} For every pair $w, w'$ of $n$-bit strings such that $w \neq w'$, there is a sentence $\varphi_{w, w'}$ with $\log(n) + O(1)$ quantifiers separating $(\{w\}, \{w'\})$. When written in prenex normal form, this sentence $\varphi_{w, w'}$ has an alternating quantifier signature ending with $\forall$.\\
    \textbf{Lower Bound:} For all sufficiently large $n$, there exist two $n$-bit strings $w, w'$, such that separating them requires $\lfloor\log(n)\rfloor$ quantifiers. 
\end{restatable}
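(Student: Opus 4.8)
The plan is to prove the two directions separately.

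\textbf{Upper bound.} Since $w \neq w'$, fix a position $i$ at which they differ; by symmetry assume $w_i = 1$ and $w'_i = 0$ (the quantifier structure will not depend on which way the bit differs, since the differing bit enters only the quantifier-free matrix). The boundary cases $i \in \{1, n\}$ are handled for free with the constants via the quantifier-free sentences $S(\mathsf{min})$ or $S(\mathsf{max})$. For $1 < i < n$, the key observation is that, because both strings have exactly $n$ positions, the element at position $i$ is the \emph{unique} element $x$ whose initial segment $[\mathsf{min}, x]$ is a linear order of length exactly $i-1$. Relativizing the one-vs-all linear-order sentence $\varphi_{i-1}$ from Theorem \ref{thm:alternation2} to the interval $[\mathsf{min}, x]$ (replacing the constant $\mathsf{max}$ by $x$ and bounding each quantifier by $x$) yields a formula $\alpha(x)$ asserting exactly this, with $q^*(i-1) + 2$ quantifiers whose signature alternates and ends in $\forall$. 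I would then set $\varphi_{w,w'} := \exists x\,\bigl(\alpha(x) \wedge S(x)\bigr)$, which is true on $w$ and false on $w'$.

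\textbf{Counting and alternation for the upper bound.} By Theorem \ref{rosey-fact} and the bound $q^*(\ell) \le r(\ell)+1$ established in Section \ref{sec:boundingqn}, we get $q^*(i-1) \le \lfloor\log(i-1)\rfloor + 2 = \log(n) + O(1)$, so $\varphi_{w,w'}$ has $1 + q^*(i-1) + 2 = \log(n) + O(1)$ quantifiers. For the alternation requirement, note that $\alpha$ alternates and ends in $\forall$; prepending $\exists x$ keeps the signature alternating and ending in $\forall$ exactly when $\alpha$ begins with $\forall$. If instead $\alpha$ begins with $\exists$, I would first prepend one dummy universal quantifier to $\alpha$ (costing $O(1)$ and preserving both alternation and the trailing $\forall$) before prepending $\exists x$. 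Either way $\varphi_{w,w'}$ comes out alternating and ending in $\forall$, as required.

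\textbf{Lower bound.} Here I would use that the number of quantifiers in any separating sentence is at least its quantifier rank, so it suffices to exhibit a pair whose separation \emph{rank} is $\lfloor\log(n)\rfloor$; equivalently, by the Ehrenfeucht--Fra{\"i}ss{\'e} characterization of rank, a pair on which $\bD$ wins the $(\lfloor\log(n)\rfloor-1)$-round EF game. I would take $w$ and $w'$ to each contain a single $1$, located at positions $p$ and $p+1$ with $p \approx n/2$. Viewing $\mathbf{B}_w$ and $\mathbf{B}_{w'}$ as linear orders with the three distinguished points $\mathsf{min}$, the unique $S$-element, and $\mathsf{max}$, each structure is the ordered sum of its two open gaps, of interior sizes $(p-2,\, n-1-p)$ for $w$ and $(p-1,\, n-2-p)$ for $w'$. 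By the composition (Feferman--Vaught) theorem for EF games on ordered sums, together with the classical fact that $\bD$ wins the $k$-round EF game on two linear orders iff they are equal or both have at least $2^k - 1$ elements, $\bD$ wins the $k$-round game on the pair as soon as $\min(p-2,\, n-2-p) \ge 2^k - 1$. Choosing $p$ in the middle and $k = \lfloor\log(n)\rfloor - 1$ makes both gaps $\approx n/2 \ge 2^k - 1$, yielding rank $\ge \lfloor\log(n)\rfloor$ and hence the claimed quantifier lower bound.

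\textbf{Main obstacle.} The delicate point is matching the constant \emph{exactly} to $\lfloor\log(n)\rfloor$ for every sufficiently large $n$. The gap-budget computation shows that two gaps, each of size $\ge 2^k - 1$ with $k = \lfloor\log(n)\rfloor - 1$, fit inside an $n$-element order only when $n \ge 2^{\lfloor\log(n)\rfloor} + 2$; for the residues $n \in \{2^j,\, 2^j+1\}$ the shifted single-bit construction (and, by the same budget argument, the analogous length-preserving constructions) falls one short, so the rank bound alone does not suffice. For these $n$ I expect one must argue directly in the MS game on $(\{w\},\{w'\})$ --- showing $\bD$ wins the $(\lfloor\log(n)\rfloor-1)$-round MS game via an oblivious-strategy and composition argument --- thereby exploiting that the number of quantifiers can strictly exceed the quantifier rank. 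This strengthening from rank to quantifier number on the exceptional lengths is the step I expect to be the real work.
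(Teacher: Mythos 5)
Your proposal follows essentially the same route as the paper on both halves: for the upper bound you pin down the position of the differing bit by (a relativized form of) the one-vs-all linear-order separator of Theorem \ref{thm:alternation2} applied to the prefix, which is exactly the paper's game-theoretic move of playing $\r$ on $w_i$ and then running that theorem's strategy on $w[1,i]$; and for the lower bound you use the same shifted-single-$1$ construction with a composition argument that reduces to the linear-order quantifier-rank bound of Theorem \ref{rosey-fact}. The ``main obstacle'' you flag for $n \in \{2^j, 2^j+1\}$ is a legitimate observation, but it is not resolved by the paper either --- the paper's proof only exhibits the hard pair for lengths of the form $2^k+2$ --- so no additional MS-game argument beyond what you already have appears in the paper's own treatment.
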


\begin{proof}
    \textbf{Upper Bound:} Let $w, w' \in \{0,1\}^n$ be any two distinct strings. There is an index $i \in [n]$ such that $w_i \neq w'_i$. Let $\cA = \{w\}$ and $\cB = \{w'\}$. We will show that $\bS$ wins the MS game on $(\cA, \cB)$ in $\log(n) + O(1)$ moves.
    
    In round $1$, $\bS$ plays his move with pebble $\r$ on the $\cA$ side, on the element $w_i$ in $w$, creating the pebbled string $\langle w ~|~ w_i\rangle$. Assume $\bD$ responds obliviously on the $\cB$ side. We can now immediately use Observation \ref{obs:discard} to discard the resulting pebbled string $\langle w' ~|~ w'_i \rangle \in \cB$, where the pebble $\r$ is on the element $w'_i$.
    
    Now, every remaining board in $\cB$ is of the form $\langle w' ~|~ w'_j\rangle$, for $j \neq i$. Note that the substring $w'[1, j]$ has length $j$, which is different from $i$, the length of the substring $w[1, i]$ of $w \in \cA$. So from this point on, $\bS$ can simply play the strategy he uses in the proof of Theorem \ref{thm:alternation2} to separate a linear order of length $i$ from all other linear orders, which he wins in $q^*(i) + 3 = \log(i) + O(1) \leq \log(n) + O(1)$ rounds with an alternating pattern. Recall that one consequence of adopting this strategy is that $\bS$ always plays in every string between $\min$ and $\r$ (both inclusive); any of $\bD$'s responses outside that range can be discarded.
    
    The number of quantifiers required is $\log(n) + O(1)$. The alternating quantifier signature follows from the fact that the strategy in Theorem \ref{thm:alternation2} started with a universal move.
    
    \textbf{Lower Bound:} Let $\ell = 2^k + 2$ for $k > 1$. We construct a pair of $\ell$-bit strings that are hard to distinguish. Let $w = 0^{2^{k-1}} 100^{2^{k-1}}$ and $w' = 0^{2^{k-1}} 010^{2^{k-1}}$. 
    If $\bS$ plays entirely on one side of the respective $1$s then he is effectively playing an MS game $(L_{2^{k-1}}, L_{2^{k-1}-1})$. By Theorem \ref{rosey-fact} and our assumption that $k > 1$, we have $r(2^{k-1}) = k = \lfloor \log(\ell) \rfloor$. Since the quantifier rank lower bounds the number of quantifiers, the MS game played in this fashion requires at least $\lfloor \log(\ell)\rfloor$ rounds to win.

    Now suppose that instead of playing entirely on the same side of the respective $1$s, $\bS$ plays on both sides of a $1$ and/or on the $1$ during these $\lfloor \log(\ell)\rfloor$ rounds. In this case, $\bD$ can play obliviously to the left of the 1 when $\bS$ plays to the left of the 1, obliviously to the right of the 1 when $\bS$ plays to the right of the 1, and on the $1$ whenever $\bS$ plays on the $1$, thereby keeping matching pairs simultaneously on both sides. The lower bound follows.
\end{proof}

Of course, for the upper bound in Proposition \ref{prop:one-vs-one-n}, if $w'$ is not an $n$-bit string to begin with, then $\bS$ can just exhibit the length difference between them directly using Theorem \ref{thm:alternation2}. This immediately gives us the following corollary.

\begin{restatable}[One vs.~One, arbitrary length]{corollary}{onevsone}\label{cor:one-vs-one}
Each $n$-bit string $w$ can be separated from any other string $w'$ with a sentence that has $\log(n) + O(1)$ quantifiers, with an alternating quantifier signature ending with $\forall$.
\end{restatable}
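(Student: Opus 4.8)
The plan is to reduce Corollary~\ref{cor:one-vs-one} to the already-established upper bound in Proposition~\ref{prop:one-vs-one-n} by a simple case split on whether $w'$ has the same length as $w$. The statement we must prove is that a fixed $n$-bit string $w$ can be separated from \emph{any} other string $w'$ (of possibly different length) using $\log(n) + O(1)$ quantifiers, with an alternating signature ending in $\forall$. First I would handle the case $|w'| = n$: here $w$ and $w'$ are distinct $n$-bit strings, so Proposition~\ref{prop:one-vs-one-n} applies verbatim and yields a separating sentence $\varphi_{w,w'}$ with $\log(n)+O(1)$ quantifiers whose prenex form alternates and ends with $\forall$. Nothing new is needed in this case.

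The remaining case is $|w'| \neq n$, and here the key observation is that we can separate $w$ from $w'$ purely on the basis of their \emph{lengths}, ignoring their bit contents entirely. A binary string of length $m$ has an underlying linear order $L_{m-1}$ (recall the length of a linear order is its universe size minus one), so separating strings of different lengths amounts to separating linear orders of different lengths. I would invoke Theorem~\ref{thm:alternation2} (the Alternation Theorem, One vs.~All): it produces a sentence $\varphi_{n-1}$ in the vocabulary of orders that separates $L_{n-1}$ from all other linear orders, using an alternating signature ending in $\forall$ with at most $q^*(n-1) + 2$ quantifiers. Since $q^*(n-1) \leq r(n-1) + 1 = \log(n) + O(1)$ by the main theorem of Section~\ref{sec:linearorders} together with Theorem~\ref{rosey-fact}, this sentence also has $\log(n)+O(1)$ quantifiers. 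Interpreting $\varphi_{n-1}$ over the $\tau_{\mathsf{string}}$-structures (which contain a linear order as a reduct), it is true on $w$ and false on every string whose length differs from $n$, in particular on $w'$. This gives the desired separating sentence in the unequal-length case.

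Combining the two cases, for any $w' \neq w$ we obtain a separating sentence with $\log(n)+O(1)$ quantifiers and the required alternating-ending-in-$\forall$ signature, which is exactly the claim. I would phrase the whole argument compactly, since the corollary is explicitly flagged in the remark preceding it (``if $w'$ is not an $n$-bit string to begin with, then $\bS$ can just exhibit the length difference between them directly using Theorem~\ref{thm:alternation2}'') as an immediate consequence.

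I do not anticipate a genuine obstacle here, as both constituent results are already in hand; the only points requiring a little care are (i) confirming that the length-based sentence from Theorem~\ref{thm:alternation2} can be read as a string sentence by restricting attention to the order reduct, so that it correctly classifies $w'$ regardless of which bits $w'$ carries, and (ii) checking that in both cases the signature genuinely alternates and terminates in $\forall$, matching the conclusion of the corollary. Both Proposition~\ref{prop:one-vs-one-n} and Theorem~\ref{thm:alternation2} already guarantee the alternating-ending-in-$\forall$ form, so this bookkeeping is routine and no essential difficulty remains.
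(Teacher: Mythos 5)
Your proposal is correct and matches the paper's own (very brief) justification exactly: the paper likewise dispatches the case $|w'|=n$ via Proposition~\ref{prop:one-vs-one-n} and the case $|w'|\neq n$ by exhibiting the length difference via Theorem~\ref{thm:alternation2}, noting that the corollary is immediate. Your added bookkeeping (identifying the order reduct $L_{n-1}$ and bounding $q^*(n-1)+2$ by $\log(n)+O(1)$) is accurate and merely fills in details the paper leaves implicit.
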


The results above give tight bounds for the number of quantifiers needed to separate one arbitrary string from any other. In the next subsection, we generalize this result.

\subsection{Separating One String from Many Other Strings}\label{sec:one-vs-many}

In this section, we will need a slight generalization of Lemma \ref{lem:parallelplay}, enabling us to partition a game into more than two sub-games when necessary. We state this generalization here without a proof, since it follows immediately from Lemma \ref{lem:parallelplay}.

\begin{lemma}[Generalized Parallel Play Lemma]\label{lem:genparallelplay}
Let  $\cA$ and $\cB$ be two sets of pebbled structures, and let $r \in \mathbb{N}$. Let $P \in \{\exists, \forall\}^r$ be a sequence of quantifiers of length $r$. Suppose that $\cA$ and $\cB$ can be partitioned as $\cA = \cA_1 \sqcup \ldots \sqcup \cA_k$ and $\cB = \cB_1 \sqcup \ldots \sqcup \cB_k$ respectively, such that for all $1 \leq i \leq k$, $\bS$ has a winning strategy $\cS_i$ for the $r_i$-round MS game on $(\cA_i,\cB_i)$ (where $r_i \leq r$), satisfying:
\begin{enumerate}
\item For all $i$, $\mathsf{pat}(\cS_i)$ is a subsequence of $P$.
\item At the end of the sub-games, for $i \neq j$, there does not exist a board in $\cA_i$ and a board in $\cB_j$ forming a matching pair.
\end{enumerate}
Then $\bS$ wins the $r$-round MS game on $(\cA,\cB)$ with pattern $P$.
\end{lemma}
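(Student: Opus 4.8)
The plan is to reduce the Generalized Parallel Play Lemma to the two-way version (Lemma~\ref{lem:parallelplay}) by induction on the number $k$ of parts in the partition. The base case $k=1$ is trivial, and the case $k=2$ is exactly Lemma~\ref{lem:parallelplay}, so the content is entirely in the inductive step, which I would carry out by regrouping the partition into two coarser blocks and applying the two-way lemma once.

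For the inductive step, suppose the statement holds for any partition into fewer than $k$ parts, and we are given partitions $\cA = \cA_1 \sqcup \ldots \sqcup \cA_k$ and $\cB = \cB_1 \sqcup \ldots \sqcup \cB_k$ with the hypotheses of the lemma. First I would collapse the first $k-1$ parts into a single block: set $\cA' = \cA_1 \sqcup \ldots \sqcup \cA_{k-1}$ and $\cB' = \cB_1 \sqcup \ldots \sqcup \cB_{k-1}$. By the induction hypothesis applied to the partition of $(\cA',\cB')$ into its $k-1$ parts (whose cross-matching conditions hold, being a sub-collection of the original hypotheses), $\bS$ has a winning strategy $\cS'$ for the $r$-round MS game on $(\cA',\cB')$ with $\mathsf{pat}(\cS')$ a subsequence of $P$ — indeed, by Lemma~\ref{lem:parallelplay}, $\bS$ wins with pattern exactly $P$. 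I would then apply Lemma~\ref{lem:parallelplay} to the two-way partition $\cA = \cA' \sqcup \cA_k$, $\cB = \cB' \sqcup \cB_k$, using the strategy $\cS'$ on $(\cA',\cB')$ and the given $\cS_k$ on $(\cA_k,\cB_k)$.

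The one point requiring care — and the main obstacle, though a minor one — is verifying the cross-matching hypothesis (condition~2) for this coarse two-way partition. I would check that no board in $\cA'$ matches a board in $\cB_k$: any such board in $\cA'$ lies in some $\cA_i$ with $i \le k-1 < k$, so $i \ne k$, and the original hypothesis rules out a matching pair between $\cA_i$ and $\cB_k$. Symmetrically, no board in $\cA_k$ matches a board in $\cB'$, since $\cB'$ decomposes into the $\cB_j$ with $j \ne k$. With both patterns being subsequences of $P$ (that of $\cS'$ by the induction hypothesis, that of $\cS_k$ by assumption), Lemma~\ref{lem:parallelplay} yields a winning strategy for $\bS$ on $(\cA,\cB)$ with pattern $P$, completing the induction. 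Since the excerpt already flags that this ``follows immediately from Lemma~\ref{lem:parallelplay},'' I would keep the write-up brief, emphasizing only the regrouping and the disjointness bookkeeping.
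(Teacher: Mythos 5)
Your proof is correct and matches the paper's intent: the paper states this lemma without proof, noting only that it follows immediately from Lemma~\ref{lem:parallelplay}, and your induction-with-regrouping is the standard way to make that immediate derivation explicit. The one point you leave implicit---that boards which fail to form a matching pair still fail after further pebbles are placed, so the cross-matching hypothesis survives the padding of shorter sub-games to $r$ rounds---is equally implicit in the paper's own treatment of Lemma~\ref{lem:parallelplay}, so no gap results.
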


We are now ready to prove our main results in this section. We start by separating a single string from \emph{all} others of the same length.

\begin{restatable}[One vs.~All, $n$-bit]{theorem}{onevsalln}\label{thm:one-vs-all-n}
Every $n$-bit string can be separated from the set of all other $n$-bit strings by a sentence with $3\lceil \log_3(n)\rceil$ quantifiers. This sentence has quantifier signature consisting of $\lceil\log_3(n)\rceil$ iterations of $\exists\exists\forall$.
\end{restatable}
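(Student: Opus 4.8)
The goal is to separate a given $n$-bit string $w$ from all other $n$-bit strings using $3\lceil \log_3(n)\rceil$ quantifiers, with the signature being $\lceil\log_3(n)\rceil$ repetitions of the block $\exists\exists\forall$. My plan is to use the Generalized Parallel Play Lemma (Lemma~\ref{lem:genparallelplay}) to run many sub-games in parallel, where each sub-game addresses a different way in which a competing string $w'$ can differ from $w$. Concretely, I would set $\cA = \{w\}$, and partition $\cB$ (the set of all other $n$-bit strings) according to the leftmost index $i$ at which $w'$ disagrees with $w$. For a fixed disagreement index $i$, $\bS$'s task in that sub-game is to pin down the bit at position $i$ and certify its value, which amounts to identifying the position $i$ exactly and checking $S(i)$ (or $\lnot S(i)$).

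The key idea driving the $\exists\exists\forall$ block is a \emph{ternary} divide-and-conquer rather than the binary halving used for linear orders. In one round-block of three quantifiers, $\bS$ should be able to reduce the length of the relevant interval by a factor of $3$: two existential moves let him place two pebbles that cut the current interval into three roughly equal pieces, and a single universal move forces $\bD$ to commit inside the piece containing the target, discarding the other pieces via Observation~\ref{obs:discard}. Over $\lceil\log_3(n)\rceil$ such blocks, the interval shrinks from length (at most) $n$ down to a constant, at which point the target bit's value can be read off directly. First I would make precise how one $\exists\exists\forall$ block implements a three-way split with the correct matching/discarding behavior, so that the interval containing the distinguishing index $i$ is isolated; this is the local engine of the recursion. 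Then I would argue that across all the parallel sub-games (one per disagreement index $i$), $\bS$ can use the \emph{same} global pattern $P = (\exists\exists\forall)^{\lceil\log_3(n)\rceil}$: each sub-game's strategy has a pattern that is a subsequence of $P$, and sub-games corresponding to different indices $i$ never produce cross-matching pairs because the committed pebble positions encode incompatible interval structure. Invoking Lemma~\ref{lem:genparallelplay} then yields a winning $\bS$ strategy of pattern $P$, and by Lemma~\ref{lem:pattern} a separating sentence with the claimed signature.

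The main obstacle I anticipate is verifying the non-matching condition (condition 2 of Lemma~\ref{lem:genparallelplay}) across sub-games with \emph{different} target indices $i$, while simultaneously running them all on the single board $w$ on the left. Since $\cA = \{w\}$ is a singleton, every sub-game's existential moves are played on the \emph{same} board, so I must check that the ternary-split pebble placements dictated by different target indices are mutually consistent as moves on $w$, yet still let $\bS$ discard every wrong board on the right. The cleanest route is probably to have $\bS$ play each three-way split with respect to the $\mathsf{min}$/$\mathsf{max}$-relative interval structure so that, after the universal move in each block, a right-side board survives only if its committed interval matches $w$'s at every block so far; a board differing from $w$ at index $i$ must eventually fall into a discarded piece in the block that isolates $i$. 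I would also need to handle the boundary arithmetic (the $\lceil\log_3(n)\rceil$ ceiling, and ternary splits when the interval length is not divisible by $3$) carefully, analogously to the even/odd case analysis in Lemma~\ref{lem:cma-is-well-defined}, but these are routine once the single-block engine and the cross-sub-game disjointness are established.
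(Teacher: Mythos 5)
Your high-level engine --- ternary divide-and-conquer with $\exists\exists\forall$ blocks, terminating after $\lceil\log_3(n)\rceil$ levels --- is the right one, and it matches the paper. But the decomposition you propose to feed into Lemma~\ref{lem:genparallelplay} does not work as stated, and the part you defer is where all the content lies. You partition $\cB$ up front by the leftmost disagreement index $i$ and want to run one sub-game per index, each ``homing in on $i$.'' The lemma requires a matching partition $\cA = \cA_1 \sqcup \ldots \sqcup \cA_k$, but $\cA = \{w\}$ is a singleton, so all but one $\cA_i$ would be empty; worse, the existential moves dictated by target index $i$ and by target index $j \neq i$ land on different elements of the one left board, and an existential round places a single pebble on a single element of every board in $\cA$. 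So the sub-games for different $i$ are not ``mutually consistent as moves on $w$'' --- they genuinely conflict. The paper resolves this by not fixing the partition in advance: the two existential moves of each block are \emph{uniform} (at the two third-points of the current interval of $w$), and the branching is created by the \emph{universal} move, after which $\bD$'s oblivious responses duplicate the left board into (at most) three copies distinguished by where $\g$ sits relative to $\mathsf{min}$, $\r$, $\b$; only then can the next block's existential moves differ across branches. The recursion has branching factor $3$ per level, so the ``one sub-game per position'' picture only emerges at the leaves.

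Second, your description of what the universal move accomplishes is not right. You say it ``forces $\bD$ to commit inside the piece containing the target.'' On a right-side board $w'$, $\bD$'s earlier responses may have put $\r$ and $\b$ anywhere, so the three induced pieces of $w'$ need not have lengths near $n/3$ at all. The actual case analysis is: either some piece of $w'$ is strictly shorter than the corresponding piece of $w$ (forced by $|w'| = n$ if another piece is longer), or all three pieces have exactly the right lengths and then one must differ from $w$'s in content (else $w' = w$). $\bS$'s universal move marks the left endpoint of such a ``bad'' piece (on $\mathsf{min}$, or on top of $\r$, or on top of $\b$), which is what makes the three resulting isomorphism classes pairwise non-matching and lets Observation~\ref{obs:discard} prune $\bD$'s responses. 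Without this length-versus-content dichotomy the single-block engine is not established; with it, the recursion and the count $3\lceil\log_3(n)\rceil$ follow as you describe.
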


\begin{proof}
Assume first that $n$ is a power of $3$. Let $w \in \{0, 1\}^n$ be arbitrary. We will demonstrate an inductive strategy for $\bS$ to win the MS game on $(\cA, \cB)$, where $\cA = \{w\}$, and $\cB = \{0, 1\}^n - \{w\}$, 
in $3\log_3(n)$ rounds. To do this, we will inductively separate each string of length $n$ from each other string of length at most $n$.

$\bS$ starts by breaking $w$ into thirds. Let $i = n/3$, and $j = 2n/3$. $\bS$ plays his first two moves on the $\cA$ side, placing the pebble $\r$ on $w_{i+1}$, and the pebble $\b$ on $w_{j + 1}$, creating the pebbled string $\langle w ~|~ w_{i+1}, w_{j+1} \rangle \in \cA$. Note that the three substrings $w[1, i]$, $w[i+1, j]$, and $w[j+1, n]$, all have length $n/3$. We may assume that $\bD$ responds obliviously. We may also further assume that we then discard all boards on the $\cB$ side that are not partially isomorphic to the board in $\cA$.

Now, every pebbled string in $\cB$ is of the form $\langle w' ~|~ w'_{i' + 1}, w'_{j' + 1}\rangle$, with $0 \leq i' \leq j' < n$ corresponding to indices such that $\r$ is on $w'_{i' + 1}$ and $\b$ is on $w'_{j' + 1}$. We now claim that, for each such pebbled string:
\begin{itemize}
    \item At least one of the substrings $w'[1, i']$, $w'[i' + 1, j']$, and $w'[j' + 1, n]$ has length $n/3$ or less; otherwise the entire string $w'$ has length more than $n$.
    \item If all three substrings above have length exactly $n/3$, then one of them is \emph{different} from the corresponding substring of $w$; otherwise, we would have $w = w'$.
\end{itemize}

We now partition $\cB = \cB_1 \sqcup \cB_2 \sqcup \cB_3$ as follows:
\begin{itemize}
    \item for each $\langle w' ~|~  w'_{i' + 1}, w'_{j' + 1} \rangle \in \cB_1$, either $w'[1, i']$ has strictly fewer than $n/3$ bits, or it has exactly $n/3$ bits but is distinct from $w[1, i]$.
    \item for each $\langle w' ~|~  w'_{i' + 1}, w'_{j' + 1} \rangle \in \cB_2$, either $w'[i' + 1, j']$ has strictly fewer than $n/3$ bits, or it has exactly $n/3$ bits but is distinct from $w[i+1, j]$.
    \item for each $\langle w' ~|~  w'_{i' + 1}, w'_{j' + 1} \rangle \in \cB_3$, either $w'[j'+1, n]$ has strictly fewer than $n/3$ bits, or it has exactly $n/3$ bits but is distinct from $w[j+1, n]$.
\end{itemize}

In the next round, $\bS$ places the pebble $\g$ on the $\cB$ side: for every $\langle w' ~|~  w'_{i' + 1}, w'_{j' + 1} \rangle \in \cB_1$, he places $\g$ on $w'_1$ (i.e., on the constant $\mathsf{min}$); for every $\langle w' ~|~  w'_{i' + 1}, w'_{j' + 1} \rangle \in \cB_2$, he places $\g$ on $w'_{i'+1}$ (on top of $\r$); and for every $\langle w' ~|~  w'_{i' + 1}, w'_{j' + 1} \rangle \in \cB_3$, he places $\g$ on $w'_{j'+1}$ (on top of $\b$). In all cases, he is pointing out the leftmost element of the ``mismatched'' portion of the pebbled string from $\langle w ~|~  w_{i + 1}, w_{j + 1} \rangle$.

In response, $\bD$ plays obliviously on the $\cA$ side. We may once again assume that we discard all resulting boards that are not partially isomorphic to any of the boards in $\cB$. This means there are only (at most) three boards remaining in $\cA$:
\begin{itemize}
    \item let $\cA_1$ consist of the pebbled string $\langle w ~|~  w_{i + 1}, w_{j + 1}, w_1 \rangle$, where $\r$, $\b$, and $\g$ are on $w_{i+1}$, $w_{j+1}$, and $w_1$ respectively.
    \item let $\cA_2$ consist of the pebbled string $\langle w ~|~  w_{i + 1}, w_{j + 1}, w_{i+1} \rangle$, where $\r$, $\b$, and $\g$ are on $w_{i+1}$, $w_{j+1}$, and $w_{i+1}$ respectively.
    \item let $\cA_3$ consist of the pebbled string $\langle w ~|~  w_{i + 1}, w_{j + 1}, w_{j+1} \rangle$, where $\r$, $\b$, and $\g$ are on $w_{i+1}$, $w_{j+1}$, and $w_{j+1}$ respectively.
\end{itemize}

Note that, if $k \neq \ell$, there can never be a board from $\cA_k$ and a board from $\cB_\ell$ forming a matching pair.

We can now inductively play three parallel \textbf{one-vs-all} games, on $(\cA_1, \cB_1)$, $(\cA_2, \cB_2)$, and $(\cA_3, \cB_3)$, using Lemma \ref{lem:genparallelplay}. In each case, $\bS$ proceeds recursively, always playing two existential moves in the ``correct'' third on the $\cA$ side, and then using a universal move to point out the left endpoint of the mismatched third on the strings on the $\cB$ side. In all cases, we always discard any of $\bD$'s moves that are outside the relevant third of the strings. The prenex quantifier signature established by this game is
$\exists\exists\forall\cdots\exists\exists\forall$. This process terminates once we have discarded all strings on the $\cB$ side, at which point, all isomorphisms have been broken.

Note that $\bS$ always decreases the size of the string by a factor of $3$ in each successive sub-game, after spending two existential moves followed by a universal one, which corresponds to the block $\exists\exists\forall$. Therefore, the process continues for $\log_3(n)$ iterations, and so, the number of quantifiers used in total is $3\log_3(n)$.

The procedure is similar when $n$ is not a power of $3$. $\bS$ can play his existential moves on $w_{\lfloor n/3\rfloor}$ and $w_{\lfloor 2n/3\rfloor}$, and point out the difference on the right in exactly the same way. The same argument holds, and the sub-games now have size at most $n/3$. Therefore, the process still takes $\lceil\log_3(n)\rceil$ iterations and the total number of quantifiers used is still $3\lceil\log_3(n)\rceil$.
\end{proof}

Note that the method described in the proof of Theorem \ref{thm:one-vs-all-n} can be adapted easily to yield $\lceil k\log_k(n)\rceil$ quantifiers for any $k \geq 2$, using the same technique. This quantity is simply minimized at $k = 3$. As a corollary, we have the following.

\begin{restatable}[One vs.~All, arbitrary lengths]{corollary}{onevsall} \label{cor:one-vs-all}
Every $n$-bit string can be separated from the set of all other strings (of any length) by a sentence with $3\log_3(n) + O(1)$ quantifiers. 
\end{restatable}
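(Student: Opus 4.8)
The plan is to reuse the recursive $\exists\exists\forall$ strategy from Theorem~\ref{thm:one-vs-all-n} essentially verbatim, but with a \emph{strengthened invariant}: at every level of the recursion $\bS$ separates the relevant window of $w$ from \emph{all} strings of arbitrary length, rather than only from shorter ones. The key observation is that the proof of Theorem~\ref{thm:one-vs-all-n} already isolates $w$ from every string of length at most $n$ (its recursion is phrased as separating a length-$n$ string from all strings of length at most $n$), so the only genuinely new opponents are strings of length greater than $n$.

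First I would check that $\bS$'s opening pair of existential moves still does its job on overlong strings. After $\bS$ plays $\r$ and $\b$ at the two third-boundaries of $w$ and $\bD$ responds obliviously, discard (Observation~\ref{obs:discard}) every board on the $\cB$ side not partially isomorphic to $\langle w \mid w_{i+1}, w_{j+1}\rangle$. For any surviving $w'$ of length greater than $n$, the three delimited pieces cannot all match $w$'s thirds in both length and content, since that would force $w'=w$; hence there is a well-defined \emph{leftmost} piece of $w'$ differing from the corresponding third of $w$, now possibly by being \emph{longer} (not only shorter) or by differing in content. This lets me partition $\cB = \cB_1 \sqcup \cB_2 \sqcup \cB_3$ exactly as before, and $\bS$'s universal $\g$-move marking the left endpoint of that piece splits $\cA$ into three copies of $w$. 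Distinct $\g$-placements keep the cross terms $(\cA_k,\cB_\ell)$, $k\neq\ell$, free of matching pairs, so Lemma~\ref{lem:genparallelplay} applies with pattern $(\exists\exists\forall)^{\lceil\log_3 n\rceil}$.

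The recursion then closes: each sub-game $(\cA_k,\cB_k)$ is a fresh instance of the same any-length problem, but on a window of $w$ of length about $n/3$. The crucial point is that the recursion \emph{depth} is governed by the length of $w$ alone — the window shrinks by a factor of three each level — and is completely insensitive to how long the opposing strings are: an opponent's overlong piece simply becomes the opponent in the next any-length sub-instance. Thus the depth is $\lceil\log_3 n\rceil$, each level contributes one block $\exists\exists\forall$, and the total is $3\lceil\log_3 n\rceil + O(1) = 3\log_3 n + O(1)$. The $O(1)$ absorbs the base case, where $w$'s window has fixed constant length $c$: here $\bS$ finishes in $O_c(1)$ quantifiers, distinguishing short opponents by content or exact length, and distinguishing \emph{arbitrarily long} opponents merely by exhibiting that their length exceeds $c$ (a bounded property, cf.\ Theorem~\ref{thm:alternation2} at constant scale).

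I expect the main obstacle to be exactly this interface between the recursion and unboundedly long opponents: one must verify that (i) the ``leftmost differing piece'' is still well-defined and that the pieces to its left genuinely coincide with $w$'s, so that the $\g$-partition is clean and $\bD$'s off-window responses can be discarded as in the original proof; and (ii) the base case really terminates in $O(1)$ rounds against opponents of unbounded length, i.e.\ that ``too long'' is detectable at constant cost and does not secretly reintroduce a $\log$-factor. Everything else — the parallel-play bookkeeping, the discarding of off-window responses, and the quantifier count — is inherited unchanged from Theorem~\ref{thm:one-vs-all-n}.
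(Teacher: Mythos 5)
Your argument is correct, but it routes around the paper's proof rather than reproducing it. The paper spends one extra universal move up front: on each $w'\in\cB$, $\bS$ pebbles $\mathsf{min}$ if $|w'|\le n$ and $\mathsf{max}$ otherwise, so that short and long opponents land in distinct isomorphism classes; the short branch is then Theorem~\ref{thm:one-vs-all-n} quoted verbatim, and the long branch is a mirror image of its recursion in which the universal move marks a piece that is too \emph{big} (or a content mismatch when all three lengths agree), followed by one final universal move on a leftover unpebbled element of any overlong board. You instead strengthen the invariant of Theorem~\ref{thm:one-vs-all-n} itself, so that a single recursion handles opponents of arbitrary length, the marked piece being allowed to differ by being longer as well as shorter or content-mismatched. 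Both decompositions are sound and give $3\log_3(n)+O(1)$: the recursion depth is governed by $|w|$ alone either way, the cross-term condition for Lemma~\ref{lem:genparallelplay} depends only on which of $\mathsf{min},\r,\b$ the pebble $\g$ coincides with, and your base case (a constant-size window of $w$ against an opponent window of unbounded length) costs only $O(1)$ by the linear-order bounds of Section~\ref{sec:linearorders}. Your worry (i) is in fact moot: each sub-game is confined to its chosen window and off-window responses are discarded, so nothing about the pieces to the left of the marked one needs to coincide with $w$'s --- any assignment of $w'$ to some differing piece works, leftmost or not. The trade-off is that the paper gets to reuse Theorem~\ref{thm:one-vs-all-n} as a black box for the short branch at the cost of an extra opening move and a separately argued long branch, while you get a single uniform pattern $(\exists\exists\forall)^{\lceil\log_3 n\rceil}$ plus a constant suffix at the cost of re-verifying the inductive step rather than citing the theorem, and of checking that the constant-size base cases across all parallel sub-games fit one common constant-length pattern.
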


\begin{proof}
    Let $w \in \{0, 1\}^n$ be arbitrary. Let $\cA = \{w\}$, and let $\cB \subseteq \{0, 1\}^\ast - \cA$. Assume WLOG that $n > 1$, and that $\cB$ does not contain any one-bit string (if it does, those strings can be identified in $O(1)$ rounds).
    
    $\bS$ starts by playing a universal move on the $\cB$ side, where he plays on $w'_{\mathsf{min}}$ if $|w'| \leq n$, and otherwise plays on $w'_{\mathsf{max}}$. This partitions $\cB$ into $\cB_1 \sqcup \cB_2$, where $\cB_1 = \{\langle w' ~|~ w'_{\mathsf{min}}\rangle : |w'| \leq n\}$ and $\cB_2 = \{\langle w' ~|~ w'_{\mathsf{max}}\rangle : |w'| > n\}$, where the pebbled strings in $\cB_1$ and $\cB_2$ are in different isomorphism classes (note that $\mathsf{min}$ and $\mathsf{max}$ correspond to distinct elements). Once $\bD$ responds, we can again discard all boards in $\cA$ that are not isomorphic to either of the two isomorphism classes in the $\cB$ side. This partitions $\cA$ into $\cA_1 \sqcup \cA_2$ as well, where $\cA_1 = \{\langle w ~|~ w'_{\mathsf{min}}\rangle\}$, and $\cA_2 = \{\langle w ~|~ w'_{\mathsf{max}}\rangle\}$. From this point, the method of Theorem \ref{thm:one-vs-all-n} can be immediately adopted by $\bS$ to win the game $(\cA_1, \cB_1)$ using $3\lceil\log_3n\rceil$ quantifiers.
    
    Now consider the sub-game $(\cA_2, \cB_2)$. Note that $\bS$ can adopt the exact same strategy as above, except to use his universal moves to mark the left endpoints of a substring if it is too big rather than too small (or a mismatch if all three have equal lengths). The process still iterates with the same pattern, converging in $\lceil\log_3(n)\rceil$ iterations. After the final step, once all boards not matching the isomorphism classes on the other side are discarded, $\bS$ is left with possibly some boards on the $\cB$ side where there are unpebbled elements, that cannot be matched in the board on the $\cA$ side with that corresponding isomorphism class. So for his last move, $\bS$ just plays a universal move on each of these unpebbled elements, thereby breaking all induced isomorphisms. Since the pattern in one sub-game is a subsequence of the pattern in the other one, we can apply Lemma \ref{lem:genparallelplay}, and the result follows. Note that the $O(1)$ term absorbs the first and last plays, as well as any additive factors from the ceiling.
\end{proof}

With this baseline established, we can get improved results when we are separating an $n$-bit string from relatively fewer other $n$-bit strings. We parameterize the number of these other strings as a function $f(n)$ of the length of the strings.

Our first result, Proposition \ref{prop:one-vs-f(n)-baby}, separates an $n$-bit string from $f(n)$ other $n$-bit strings using $\log(n) + \log(f(n)) + O(1)$ quantifiers. Note that when $f(n) = \Omega(n)$, this proposition only gives us $2\log(n) + O(1)$ quantifiers, whereas we know from Theorem \ref{thm:one-vs-all-n} that we can already solve this problem with just over $3\log_3(n) \approx 1.89\log(n)$ quantifiers. So, Proposition \ref{prop:one-vs-f(n)-baby} is only interesting when $f(n) = o(n)$. We encourage the reader to study the full proof, since it will significantly elucidate the proof of Theorem \ref{thm:one-vs-f(n)}.

\begin{restatable}[One vs. Many, $n$-bit --- \emph{Basic}]{proposition}{onevmanynbit}\label{prop:one-vs-f(n)-baby}
Let $f : \mathbb{N} \to \mathbb{N}$ be a function. Every $n$-bit string can be separated from any set of $f(n)$ other $n$-bit strings by a sentence with $\log(n) + \log(f(n)) + O(1)$ quantifiers.
\end{restatable}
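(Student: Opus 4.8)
The plan is to run a binary search over the \emph{set} $\cB$ of forbidden strings rather than over the positions of a single string, and to execute all the resulting sub-searches simultaneously via the Generalized Parallel Play Lemma (Lemma~\ref{lem:genparallelplay}). Writing $f = f(n)$ and $\cA = \{w\}$, the key observation is that $\lceil \log f\rceil$ \emph{universal} moves suffice to partition $\cB$ into roughly $f$ singleton groups that are pairwise ``incompatible'' (no surviving cross-group pair can form a matching pair), after which each singleton is dispatched by the one-vs-one strategy of Proposition~\ref{prop:one-vs-one-n} in a further $\log n + O(1)$ \emph{alternating} rounds. Because all of these sub-games share a common quantifier pattern, Lemma~\ref{lem:genparallelplay} lets $\bS$ play them in parallel, so the total is $\log n + \log f + O(1)$.

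Concretely, I would first set aside the (at most two) constant strings $0^n$ and $1^n$ should they lie in $\cB$; each is separated from $w$ with $O(1)$ quantifiers (e.g.\ by $\exists x\, S(x)$ or $\exists x\, \lnot S(x)$), and folds into the parallel play within the $O(1)$ term. Next comes the halving primitive: given a current group $\cG \subseteq \cB$ of non-constant strings, $\bS$ chooses an \emph{arbitrary balanced} partition $\cG = \cG_0 \sqcup \cG_1$ and makes one universal move placing a fresh pebble on a $0$-valued position of each string in $\cG_0$ and on a $1$-valued position of each string in $\cG_1$ (such positions exist as no string is constant, and pebbles may be stacked). After $\bD$ responds obliviously, the $w$-copies on the left split according to the $S$-value under the new pebble, and by Observation~\ref{obs:discard} the two sub-games become mutually incompatible. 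Iterating this primitive builds a balanced binary trie of depth $\lceil\log f\rceil + O(1)$ (the additive constant absorbing the forced routing of the constant strings), whose leaves are singletons $\{w'\}$ paired with copies of $w$ carrying a leaf-specific pattern of trie pebbles. Finally, each leaf is finished by invoking Proposition~\ref{prop:one-vs-one-n} (equivalently the strategy underlying Theorem~\ref{thm:alternation2}): find an index $i$ with $w_i \neq w'_i$, play $\r$ existentially there, and run the length game to pin position $i$, killing every $w'$-board in $\log n + O(1)$ alternating rounds ending in $\forall$. Every leaf sub-game thus has a pattern consisting of $\lceil\log f\rceil + O(1)$ universal splits followed by an alternating block of length $\log n + O(1)$; all such patterns are subsequences of a single common pattern $P$, and Lemma~\ref{lem:genparallelplay} delivers the bound.

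The main obstacle is verifying the cross-matching hypothesis of Lemma~\ref{lem:genparallelplay} \emph{globally} across all $\approx f$ leaves, not merely between the two children of one split. This should follow from the trie structure: any two distinct leaves diverge at some ancestor split, where their boards were forced to opposite $S$-values under that split's pebble, so no cross-leaf pair can ever match thereafter. A secondary point, which I expect to be routine but must be confirmed, is that the finishing one-vs-one games behave correctly despite (i) the accumulated junk trie pebbles and (ii) the multiplicity of $w$-copies on the left of each leaf: the former can only create additional mismatches and so never help $\bD$, while the latter is harmless because all left boards are copies of the same string $w$, on which $\bS$'s existential and length-game moves act identically.
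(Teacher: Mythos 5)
Your proposal is correct and follows essentially the same route as the paper: roughly $\log f(n)$ ``preprocessing'' universal moves on $\cB$ to force every string of $\cB$ into its own isomorphism class, followed by $f(n)$ parallel one-vs-one games of length $\log n + O(1)$ glued together by Lemma~\ref{lem:genparallelplay}. The only difference is the isolating gadget: the paper assigns each $w'\in\cB$ an arbitrary injective ``instructional string'' and encodes it in the relative positions of consecutive pebbles (same element vs.\ successor), whereas you adaptively binary-search the set $\cB$ using the $S$-value under each freshly placed pebble --- both give pairwise incompatible classes in $\lceil\log f(n)\rceil + O(1)$ universal rounds, and your handling of the constant strings and of the residual ``junk'' pebbles is adequate.
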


\begin{proof}
Let $w$ be any $n$-bit string. Let $\cA = \{w\}$, and $\cB \subseteq \{0, 1\}^n - \cA$ satisfying $|\cB| = f(n)$. Note that $f(n) < 2^n$.

The key to this proof is that $\bS$ plays an initial set of ``preprocessing'' universal moves. For this purpose, let $\nabla : \cB \to \{0, 1\}^{\lceil\log (f(n))\rceil} - \{1^n\}$ be a one-to-one function that maps the strings in $\cB$ into a set of ``instructional strings,'' which are $\lceil\log (f(n))\rceil$-bit binary strings. We disallow $1^n$ from the range.  Note that the size of the range is $2^{\lceil\log (f(n))\rceil}$ unless $\lceil\log (f(n))\rceil = n$, in which case, since we exclude $1^n$, the size of the range is $2^n - 1$. 
In either case, the size of the range is at least the size of the domain. It follows that our needed 1-1 function $\nabla$ exists. As we shall see, for any $w' \in \cB$, the string $\nabla(w')$ encodes how $\bS$ plays on the string $w'$.

$\bS$ makes his first $\lceil \log(f(n)) \rceil + 1$ moves in $\cB$. Consider any $w' \in \cB$. In round $1$, $\bS$ plays pebble $1$ on $w'_1$. For $1 \leq i \leq \lceil \log(f(n)) \rceil$, suppose pebble $i$ is on $w'_j$. In round $(i + 1)$, $\bS$ looks at the $i$th bit of $\nabla(w')$, and plays on $w'_j$ if that bit is $0$, or on $w'_{j+1}$ if that bit is $1$.

After $\lceil \log(f(n))\rceil + 1$ rounds, no two strings in $\cB$ are in the same isomorphism class. After $\bD$'s oblivious responses, all of these isomorphism classes are represented on the $\cA$ side as well. By partitioning the two sides appropriately, this creates $f(n)$ parallel \textbf{one-vs-one} MS games. $\bS$ can win each of these games using the same alternating pattern of length $\log n + O(1)$, using the strategy described in Proposition \ref{prop:one-vs-one-n}. The result now follows from Lemma \ref{lem:genparallelplay}.
\end{proof}

By a more clever argument, we can dramatically improve Proposition \ref{prop:one-vs-f(n)-baby}:

\begin{restatable}[One vs. Many, $n$-bit --- \emph{Improved}]{theorem}{onevsfn}\label{thm:one-vs-f(n)}
Let $t \geq 2$ be any integer, and $f : \mathbb{N} \to \mathbb{N}$ be a function such that $\lim_{n \rightarrow \infty} f(n) = \infty$. Then, for all $n$, it is possible to separate each $n$-bit string from any set of $f(n)$ others by a sentence with $\log(n) + \log_t(f(n)) + O_t(1)$ quantifiers.
\end{restatable}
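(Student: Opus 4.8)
The plan is to generalize the ``instructional string'' argument of Proposition \ref{prop:one-vs-f(n)-baby} from a binary to a base-$t$ alphabet, so that each preprocessing move distinguishes the strings of $\cB$ in a $t$-way rather than a $2$-way fashion. Fix $w$ with $\cA = \{w\}$ and $\cB \subseteq \{0,1\}^n \setminus \{w\}$ with $|\cB| = f(n)$, and set $m = \lceil \log_t f(n)\rceil$, so that $t^m \geq f(n)$. I would choose a one-to-one \emph{instructional map} $\nabla : \cB \to \{1, \ldots, t\}^m$ (as in the basic proof, possibly reserving one codeword for technical convenience); the digit $\nabla(w')_i$ will dictate $\bS$'s $i$-th preprocessing move on the board $w'$.

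For the preprocessing (all universal moves on $\cB$), for $i = 1, \ldots, m$, $\bS$ plays a fresh color and, on each board $w' \in \cB$, places it on the $\nabla(w')_i$-th element of $w'$. The essential point --- and where I deviate from the naive base-$t$ translation of the basic proof --- is that each digit pebble is placed at a bounded offset from $\mathsf{min}$ rather than via a cumulative ``running pointer,'' so all $m$ preprocessing pebbles lie within the first $t$ positions of every board. Because the pebbles are colored and $\mathsf{min}$ is a constant, the induced order type on the pebbled elements records exactly the codeword $\nabla(w')$; hence two $\cB$-strings with distinct codewords land in distinct isomorphism classes. After $\bD$ responds obliviously, each surviving class on the $\cB$ side is matched by a copy of $w$ pebbled identically, and any board where $w$ and $w'$ already disagree at a pebbled position is discarded via Observation \ref{obs:discard}.

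This partitions the board into at most $f(n)$ sub-games, one per codeword, each an instance of \textbf{one-vs-one} on $w$ against a single $w'$. By Proposition \ref{prop:one-vs-one-n} (equivalently the strategy of Theorem \ref{thm:alternation2}), $\bS$ wins each such sub-game with a common alternating pattern of length $\log n + O(1)$; the preprocessing pebbles clustered near $\mathsf{min}$ can only help $\bS$, since additional pebbles can break matches but never create them. As distinct codewords give distinct order types, no cross-class match survives, so conditions (1) and (2) of the Generalized Parallel Play Lemma (Lemma \ref{lem:genparallelplay}) hold. Counting, the preprocessing costs $m = \log_t f(n) + O(1)$ quantifiers and the parallel one-vs-one phase costs $\log n + O(1)$, with edge handling (the cases $n < t$ or slowly growing $f$, plus the reserved codeword) absorbed into $O_t(1)$, yielding the claimed $\log n + \log_t f(n) + O_t(1)$.

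The main obstacle is exactly the bookkeeping hidden in the phrase ``bounded offset.'' A literal translation of the basic proof would advance one pointer by $0, 1, \ldots, t-1$ positions per digit; over $m \approx n/\log_2 t$ digits this can demand up to $(t-1)\,n/\log_2(t)$ positions, which exceeds $n$ for every $t \geq 3$ once $f(n)$ approaches $2^n$, overflowing the string. The crux is therefore to design a $t$-way-per-move encoding whose total footprint is $O(t)$ \emph{independent of} $m$ (here, by anchoring every digit pebble to one of $t$ reference positions near $\mathsf{min}$), and then to verify that this anchoring still produces $t^m$ genuinely distinct, mutually non-matching isomorphism classes, all realizable by $\bD$'s oblivious copies of $w$. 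Once that is established, composition with the one-vs-one phase through Lemma \ref{lem:genparallelplay} proceeds exactly as in Proposition \ref{prop:one-vs-f(n)-baby}.
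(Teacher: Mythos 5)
Your high-level plan --- spend $m = \lceil \log_t f(n)\rceil$ preprocessing universal moves to put each string of $\cB$ into its own isomorphism class, then run $f(n)$ parallel \textbf{one-vs-one} games via Lemma \ref{lem:genparallelplay} --- is exactly the paper's, and you correctly diagnosed why the literal base-$t$ version of the ``running pointer'' from Proposition \ref{prop:one-vs-f(n)-baby} overflows the string. But your fix has a genuine gap. You place pebble $i$ on the $\nabla(w')_i$-th element of $w'$, with $\nabla(w')_i \in \{1, \ldots, t\}$, and claim that ``the induced order type on the pebbled elements records exactly the codeword.'' It does not: the isomorphism type of a pebbled board sees only the \emph{relative} order of the pebbled elements, their coincidences with each other and with $\mathsf{min}$ and $\mathsf{max}$, and the $S$-bits at those positions --- never absolute offsets. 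For example, codewords $(1,2)$ and $(1,3)$ produce identical configurations (pebble $1$ on $\mathsf{min}$, pebble $2$ strictly between $\mathsf{min}$ and $\mathsf{max}$), so two distinct strings in $\cB$ carrying these codewords can still form a matching pair with the same copy of $w$. Since $m$ can greatly exceed $t$, pigeonholing $m$ pebbles into $t$ absolute positions collapses most codewords onto the same isomorphism class; the number of classes you actually realize is the number of weak orders on $[m]$ with at most $t$ levels (about $t!\,S(m,t) < t^m$), not $t^m$, and more importantly your map from codewords to classes is nowhere near injective, so the decomposition into one-vs-one sub-games fails.

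The paper's proof repairs exactly this point in a different way: the code is not a base-$t$ string read off absolute positions, but one of the $m!$ \emph{permutations} of the $m$ pebbles placed on distinct elements --- relative order of colored pebbles \emph{is} visible to the isomorphism type. The hypothesis $\lim_{n\to\infty} f(n) = \infty$ is then used to ensure $f(n) \geq t^{et}$ for all large $n$, hence $m \geq et$, and Stirling gives $m! \geq (m/e)^m \geq t^m \geq f(n)$, so $m$ pebbles suffice; small $n$ is absorbed into $O_t(1)$. If you want to salvage your anchored-offset idea you would have to switch the codebook from base-$t$ strings to weak orders with at most $t$ levels and re-do the counting (costing $O_t(1)$ extra pebbles), at which point you have essentially rediscovered the permutation argument.
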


\begin{proof}
Of course, $f(n) < 2^n$. Since $\lim_{n \rightarrow \infty} f(n) = \infty$, there is an $N$ such that for all $n \geq N$, we have $f(n) \geq t^{et}$. When $n < N$, the theorem is certainly true, as the number of quantifiers can be absorbed into the $O_t(1)$ term. So for the rest of the proof, assume $f(n) \geq t^{et}$.

As in Proposition \ref{prop:one-vs-f(n)-baby}, $\bS$ will first use $m = \lceil \log_t(f(n)) \rceil$ ``preprocessing'' universal moves up front. Note that since $f(n) < 2^n$ and $t \geq 2$, we have $m \leq n$. $\bS$ can permute these $m$ pebbles any way he wants, to create distinct orderings, and therefore, distinct isomorphism types. Using Stirling's approximation, the number of such permutations can be lower bounded as:
\begin{equation*}
m! \geq \sqrt{2\pi m}\left(\frac{m}{e}\right)^m = \omega\left(\left(\frac{m}{e}\right)^m\right) = \omega\left(\left(\frac{\lceil \log_t(t^{et}) \rceil}{e}\right)^m\right) = \omega \big(t^{\lceil \log_t(f(n)) \rceil}\big) = \omega(f(n)).
\end{equation*}

Hence, with $m \leq \log_t(f(n)) + 1$ universal moves, each string in $\cB$ can be associated with its own isomorphism type. The rest of the argument follows the proof of Proposition \ref{prop:one-vs-f(n)-baby}.
\end{proof}


\begin{restatable}[One vs.~Polynomially Many, $n$-bit]{corollary}{onevspolymanyn}\label{cor:1+epsilon-upper}
Let $f : \mathbb{N} \to \mathbb{N}$ be a function satisfying $\lim_{n \rightarrow \infty} f(n) = \infty$, and $f(n) = O(n^k)$ for some constant $k$. Then, for all $n$, and for every $\varepsilon > 0$, it is possible to separate each $n$-bit string from any set of $f(n)$ others by a sentence with $(1 + \varepsilon)\log(n) + O_{k,\varepsilon}(1)$ quantifiers.
\end{restatable}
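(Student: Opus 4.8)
The plan is to apply Theorem~\ref{thm:one-vs-f(n)} with a suitably chosen constant $t$ depending on $\varepsilon$ and $k$. The key observation is that for any fixed integer $t \geq 2$, Theorem~\ref{thm:one-vs-f(n)} gives a separating sentence with $\log(n) + \log_t(f(n)) + O_t(1)$ quantifiers. Since $f(n) = O(n^k)$, we have $\log(f(n)) \leq k\log(n) + O(1)$, and hence $\log_t(f(n)) = \frac{\log(f(n))}{\log(t)} \leq \frac{k}{\log(t)}\log(n) + O_t(1)$. Substituting this bound yields a total of
\begin{equation*}
\log(n) + \frac{k}{\log(t)}\log(n) + O_t(1) = \left(1 + \frac{k}{\log(t)}\right)\log(n) + O_t(1)
\end{equation*}
quantifiers. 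So the coefficient of $\log(n)$ is $1 + \frac{k}{\log(t)}$, which can be driven arbitrarily close to $1$ by taking $t$ large.

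The main step, then, is to choose $t$ so that $\frac{k}{\log(t)} \leq \varepsilon$, i.e., $\log(t) \geq \frac{k}{\varepsilon}$, which holds for any integer $t \geq 2^{k/\varepsilon}$. Fixing such a $t = t(k, \varepsilon)$, the additive term $O_t(1)$ becomes $O_{k, \varepsilon}(1)$, since $t$ is now a function of $k$ and $\varepsilon$ alone (and in particular is independent of $n$). This gives exactly the claimed bound of $(1 + \varepsilon)\log(n) + O_{k,\varepsilon}(1)$ quantifiers. I would verify that the hypothesis of Theorem~\ref{thm:one-vs-f(n)} is met, namely that $\lim_{n\to\infty} f(n) = \infty$, which is assumed directly in the statement of this corollary.

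I expect no genuine obstacle here; this is a routine instantiation of the preceding theorem with a constant tuned to $\varepsilon$. The only point requiring a small amount of care is bookkeeping of the $O$-terms: one must confirm that absorbing the constant $\frac{k}{\log(t)} - \varepsilon \leq 0$ slack (and the $+O(1)$ coming from the $O(n^k)$ bound on $f$, which contributes an additive constant inside $\log_t$) does not inflate the coefficient of $\log(n)$ beyond $1 + \varepsilon$. Concretely, the additive constant in $\log(f(n)) \leq k\log(n) + c$ only affects $\log_t(f(n))$ by an additive $\frac{c}{\log(t)} = O_{k,\varepsilon}(1)$ term, not the leading coefficient, so it is safely absorbed into $O_{k,\varepsilon}(1)$. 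With $t$ chosen as above, the proof is then immediate.
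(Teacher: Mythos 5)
Your proposal is correct and matches the paper's own proof essentially verbatim: both pick $t$ large enough that $k/\log(t) \leq \varepsilon$ and then instantiate Theorem~\ref{thm:one-vs-f(n)}, absorbing the $t$-dependent constant into $O_{k,\varepsilon}(1)$. Your extra remark about the additive constant hidden in $f(n) = O(n^k)$ is a harmless refinement (the paper simply takes $k$ large enough that $f(n) \leq n^k$ outright).
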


\begin{proof}
Pick a sufficiently large constant $k > 0$ such that $f(n) \leq  n^k$ for all $n$. Pick $t \geq 2$ large enough so that $k/\log(t) < \varepsilon$. By Theorem \ref{thm:one-vs-f(n)}, denoting by $C(t)$ a constant that depends only on $t$, the number of quantifiers we need is:
\begin{equation*}
\log(n) + \log_t(f(n)) + C(t) \leq \log(n) + k\cdot\log_t(n) + C(t) = \log(n)\cdot\left(1 + \frac{k}{\log(t)}\right) + C(t),
\end{equation*}
which is bounded above by $(1 + \varepsilon)\log(n) + C(t)$. Since $C(t)$ depends only on $t$, which depends only on $k$ and $\varepsilon$, the corollary follows.
\end{proof}

Theorem \ref{thm:one-vs-f(n)} and Corollary \ref{cor:1+epsilon-upper} assumed that $\lim_{n \rightarrow \infty} f(n) = \infty$. The following result, which is just a variation on the same theme, allows us to say something about bounded functions.

\begin{restatable}[One vs.~Boundedly Many]{theorem}{onevsboundedmany}\label{thm:one-vs-constant-best}
Let $f:\mathbb{N} \to \mathbb{N}$ be a function such that $\max_n f(n) = N$. Let $t$ be the largest integer $\geq 2$ satisfying $t^{et} \leq N$. Then, it is possible to separate each $n$-bit string from any set of $f(n)$ others by a sentence with  $\log(n) + \log_tN + O(1)$ quantifiers.
\end{restatable}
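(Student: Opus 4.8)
The plan is to reuse the architecture of Theorem~\ref{thm:one-vs-f(n)} almost verbatim, changing only the number of preprocessing moves and the argument that they suffice. The crucial difference is that here $f$ is bounded by the fixed constant $N$, so the number of strings to be separated never exceeds $N$ regardless of $n$; this lets $\bS$ commit to a single preprocessing budget that works uniformly. Accordingly I would set $m = \lceil \log_t N\rceil$, with $t$ the largest integer $\geq 2$ satisfying $t^{et}\le N$ as in the hypothesis, and have $\bS$ open with $m$ universal ``preprocessing'' moves on the $\cB$ side: he places the $m$ pebbles on $m$ distinct positions of each string of $\cB$, using a distinct relative ordering of the pebble colors for each of the (at most $N$) strings. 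Distinct color-orderings induce distinct isomorphism types, so to put each string of $\cB$ in its own class it suffices to guarantee $m!\ge N$.

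The heart of the argument is verifying this inequality, which is exactly where the choice of $t$ enters. From $t^{et}\le N$ we get $m \ge \log_t N \ge et$, hence $m/e \ge t$, and Stirling's bound gives
\begin{equation*}
m! \;\geq\; \sqrt{2\pi m}\,\Bigl(\frac{m}{e}\Bigr)^{m} \;\geq\; \Bigl(\frac{m}{e}\Bigr)^{m} \;\geq\; t^{m} \;\geq\; t^{\log_t N} \;=\; N,
\end{equation*}
so $m!\ge N\ge f(n)$ and the $m$ preprocessing moves do separate $\cB$ into singleton isomorphism classes. Choosing $t$ as large as possible subject to $t^{et}\le N$ is precisely what maximizes the number of orderings realized per move, and hence minimizes the preprocessing cost $\log_t N$.

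After $\bD$ responds obliviously and we discard via Observation~\ref{obs:discard}, the surviving copies of $w$ on the $\cA$ side realize each of these $f(n)$ types, so the game decomposes into $f(n)$ parallel \textbf{one-vs-one} games $(\cA_i,\cB_i)$, and no board in $\cA_i$ matches a board in $\cB_j$ for $i\neq j$ since the classes are distinct isomorphism types. Each one-vs-one game is won with $\log(n)+O(1)$ quantifiers and a strictly alternating signature ending in $\forall$ by Proposition~\ref{prop:one-vs-one-n}; since these all have length $\log(n)+O(1)$ and alternate, they are simultaneously subsequences of one common alternating pattern of that length, so the Generalized Parallel Play Lemma (Lemma~\ref{lem:genparallelplay}) fuses them into a single winning strategy. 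The total move count is $m + \bigl(\log(n)+O(1)\bigr) = \log(n)+\log_t N+O(1)$, and the additive $O(1)$ — arising only from the ceiling in $m$ and the universal one-vs-one constant — is independent of $N$, which is why $\log_t N$ is recorded as an explicit term rather than swallowed.

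I expect the only genuinely delicate points to be the boundary cases. Placing $m$ pebbles on $m$ distinct positions needs $n\ge m$, which fails only for $n$ below a fixed ($N$-dependent, hence constant) threshold; for those finitely many short lengths the strings have bounded size and can be separated outright in $O(1)$ quantifiers, so they are absorbed into the additive constant. Likewise, if $N$ is so small that no integer $t\ge 2$ satisfies $t^{et}\le N$, then $N$ is an absolute constant and the bound is trivial (one may instead fall back on the binary encoding of Proposition~\ref{prop:one-vs-f(n)-baby}, which gives $\log_2 N+O(1)$ preprocessing moves consistent with $t=2$). The remaining thing to check carefully is that the already-placed preprocessing pebbles do not obstruct the subsequent one-vs-one strategies — they only subdivide the strings further and can be ignored — after which the Stirling estimate above is the sole substantive computation.
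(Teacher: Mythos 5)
Your proposal is correct and follows essentially the same route as the paper: the paper's proof is a two-line reduction to the machinery of Theorem~\ref{thm:one-vs-f(n)}, observing that $\lceil\log_t N\rceil$ preprocessing universal moves yield $\lceil\log_t N\rceil! \geq f(n)$ isomorphism classes by Stirling, after which one proceeds as in Proposition~\ref{prop:one-vs-f(n)-baby}. Your explicit verification that $m/e \geq t$ (from $t^{et}\leq N$) gives $m! \geq t^{\log_t N} = N$, and your handling of the small-$n$ and small-$N$ edge cases, simply fill in details the paper leaves implicit.
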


\begin{proof}
Using Stirling's approximation, we get $\lceil\log_tN\rceil! \geq f(n)$, similar to the proof of Theorem \ref{thm:one-vs-f(n)}. So once again, $\bS$ can just play $\lceil\log_tN\rceil$ ``preprocessing'' universal moves to separate all $f(n)$ strings in $\cB$ into their own isomorphism classes, and proceed as usual.
\end{proof}

\subsection{Separating Many Strings from Many Other Strings}\label{sec:many-vs-many}

In this section, we generalize the results from Section \ref{sec:one-vs-many} to the case when $\cA$ also starts out with multiple strings. We start by generalizing Theorem \ref{thm:one-vs-f(n)}.

\begin{restatable}[Many vs.~Many, $n$-bit]{theorem}{manyvsmanyn}\label{thm:f(n)-vs-g(n)}
Let $t \geq 2$ be any integer, and let $f, g : \mathbb{N} \to \mathbb{N}$ be functions such that $\lim_{n \rightarrow \infty} f(n) = \infty$ and $\lim_{n \rightarrow \infty} g(n) = \infty$. Then, for all $n$, it is possible to separate each set of $f(n)$ $n$-bit strings from any set of $g(n)$ others by a sentence with $\log(n) + \log_t(f(n)\cdot g(n)) + O_t(1)$ quantifiers.
\end{restatable}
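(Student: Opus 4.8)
The plan is to mirror the structure of Theorem~\ref{thm:one-vs-f(n)}, now that \emph{both} sides carry multiple strings. The key insight is that preprocessing universal moves on the right must simultaneously be matched by preprocessing existential moves on the left, so that after the preprocessing phase each string on \emph{each} side sits in its own isomorphism class. Concretely, first I would have $\bS$ spend roughly $\log_t(g(n))$ universal moves on $\cB$, permuting a block of pebbles to assign each of the $g(n)$ right-hand strings a distinct permutation-based isomorphism type, exactly as in Theorem~\ref{thm:one-vs-f(n)} (using the Stirling bound $m! = \omega(t^m)$ so that $m \approx \log_t(g(n)) + O_t(1)$ permutations suffice). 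Dually, $\bS$ spends roughly $\log_t(f(n))$ existential moves on $\cA$, using the same permutation trick to give each of the $f(n)$ left-hand strings a distinct isomorphism type. The total preprocessing cost is $\log_t(f(n)) + \log_t(g(n)) + O_t(1) = \log_t(f(n)\cdot g(n)) + O_t(1)$.

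After both phases, $\bD$ has responded obliviously, so every isomorphism type created on one side is represented on the other. I would then discard (via Observation~\ref{obs:discard}) every board not partially isomorphic to some board on the opposite side. What remains is a collection of one-vs-one subgames: each surviving left-hand board, with its unique isomorphism type, is paired only against the (at most one) right-hand board of a matching type, and vice versa. Since the preprocessing moves pin down distinct types on both sides, boards from different pairs can never form a matching pair, which establishes condition~2 of the Generalized Parallel Play Lemma (Lemma~\ref{lem:genparallelplay}). Each such one-vs-one subgame is then won by the strategy of Proposition~\ref{prop:one-vs-one-n}, using $\log(n) + O(1)$ additional moves with a common alternating pattern of length $\log(n) + O(1)$.

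The last step is to invoke Lemma~\ref{lem:genparallelplay}: all the one-vs-one subgames share the same alternating quantifier pattern (their patterns are subsequences of a single pattern $P$ of length $\log(n) + O(1)$), and no cross-pair matching is possible, so $\bS$ wins them all in parallel. Combining the preprocessing cost with the parallel one-vs-one phase yields $\log(n) + \log_t(f(n)\cdot g(n)) + O_t(1)$ quantifiers, as claimed. As in Theorem~\ref{thm:one-vs-f(n)}, the case of small $n$ (where $f(n)$ or $g(n)$ has not yet grown past the $t^{et}$ threshold) is absorbed into the $O_t(1)$ term, using $\lim_{n\to\infty} f(n) = \lim_{n\to\infty} g(n) = \infty$.

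I expect the main subtlety to be the bookkeeping around \emph{two} interacting preprocessing phases rather than one. In Theorem~\ref{thm:one-vs-f(n)} only the right side needed to be shattered into isomorphism classes because the left was a singleton; here I must verify that the existential shattering of $\cA$ and the universal shattering of $\cB$ can be sequenced into a single coherent pattern, and that after both phases the surviving boards decompose \emph{cleanly} into matched one-vs-one pairs with no leftover cross-matches. The fact that $\bS$ controls both the existential and universal preprocessing moves (and that $\bD$'s oblivious responses only replicate isomorphism types already forced) is what makes this partition well-defined; the remaining work is to confirm that the permutation encodings on the two sides are chosen independently so that the joint type of a surviving pair determines both its $\cA$-origin and its $\cB$-origin uniquely.
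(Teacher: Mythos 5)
Your proposal is correct and follows essentially the same route as the paper's proof: preprocessing permutation moves (via the Stirling bound from Theorem~\ref{thm:one-vs-f(n)}) to shatter both $\cA$ and $\cB$ into singleton isomorphism classes, followed by parallel one-vs-one games via Proposition~\ref{prop:one-vs-one-n} and Lemma~\ref{lem:genparallelplay}, with small $n$ absorbed into the $O_t(1)$ term. The only cosmetic difference is that the paper performs the existential preprocessing on $\cA$ before the universal preprocessing on $\cB$, whereas you do the reverse; this is immaterial.
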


\begin{proof}
Analogously to the proof of Theorem \ref{thm:one-vs-f(n)}, we take $N$ such that for all $n \geq N$, we have $f(n) \geq t^{et}$ and $g(n) \geq t^{et}$. Again, the theorem is true for $n < N$, by absorbing the number of quantifiers into the $O_t(1)$ term. So once again, we may assume that $f(n) \geq t^{et}$ and $g(n) \geq t^{et}$. Once again, $\bS$ can first use up $\lceil\log_t(f(n))\rceil$ ``preprocessing'' existential moves to place each string in $\cA$ in its own isomorphism class. Then he can use up $\lceil\log_t(g(n))\rceil$ ``preprocessing'' universal moves to place each string in $\cB$ in its own isomorphism class. Once this is done, $\bS$ has partitioned the game instance into $\lceil\log_t(f(n))\rceil\cdot\lceil\log_t(g(n))\rceil$ parallel instances of \textbf{one-vs-one} games, each of which has a winning strategy in $\log_2n + O(1)$ rounds, that follows the same alternating pattern as described in Proposition \ref{prop:one-vs-one-n}. Pebbled strings from different parallel instances can never form the same isomorphism class, by construction. Therefore, using arguments from Proposition \ref{prop:one-vs-f(n)-baby} and Lemma \ref{lem:genparallelplay}, we obtain the result. Note that $\log_t(f(n)g(n)) = \log_t(f(n)) + \log_t(g(n))$.
\end{proof}

We can also generalize Corollary \ref{cor:1+epsilon-upper} using Theorem \ref{thm:f(n)-vs-g(n)}. This proof is omitted.

\begin{restatable}[Polynomially Many vs.~Polynomially Many, $n$-bit]{corollary}{polymanyvspolymany}\label{cor:1+epsilon-upper-2-sided}
Let $f, g : \mathbb{N} \to \mathbb{N}$ be functions such that $\lim_{n \rightarrow \infty} f(n), g(n) = \infty$ and $f(n), g(n) = O(n^k)$ for some constant $k$. Then, for all $n$, and for every $\varepsilon > 0$, it is possible to separate each set of $f(n)$ $n$-bit strings from any set of $g(n)$ others by a sentence with $(1 + \varepsilon)\log(n) + O_{k, \varepsilon}(1)$ quantifiers.
\end{restatable}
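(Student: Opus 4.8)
The plan is to derive Corollary \ref{cor:1+epsilon-upper-2-sided} from Theorem \ref{thm:f(n)-vs-g(n)} in exactly the same way that Corollary \ref{cor:1+epsilon-upper} was derived from Theorem \ref{thm:one-vs-f(n)}. The structure of both corollaries is identical: we have a polynomial growth bound on the set sizes, and we trade a slowly growing $\log_t$ term against a $(1+\varepsilon)$ multiplicative factor by choosing the base $t$ large enough. So I would first fix a sufficiently large constant $k > 0$ such that $f(n) \leq n^k$ and $g(n) \leq n^k$ for all $n$ (taking the larger of the two polynomial degrees). Then, given $\varepsilon > 0$, I would pick an integer $t \geq 2$ large enough that $2k/\log(t) < \varepsilon$; the factor of $2$ here is the only substantive difference from the one-sided case, and it accounts for the product $f(n) \cdot g(n)$ contributing two polynomial factors.

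The main computation is then a direct substitution into the bound from Theorem \ref{thm:f(n)-vs-g(n)}. Denoting by $C(t)$ a constant depending only on $t$ (absorbing the $O_t(1)$ term), the number of quantifiers is
\begin{equation*}
\log(n) + \log_t(f(n)\cdot g(n)) + C(t) \leq \log(n) + 2k\cdot\log_t(n) + C(t) = \log(n)\cdot\left(1 + \frac{2k}{\log(t)}\right) + C(t),
\end{equation*}
where the inequality uses $f(n)g(n) \leq n^{2k}$ and the identity $\log_t(n^{2k}) = 2k \log_t(n) = 2k \log(n)/\log(t)$. By our choice of $t$, this is bounded above by $(1+\varepsilon)\log(n) + C(t)$, and since $C(t)$ depends only on $t$, which in turn depends only on $k$ and $\varepsilon$, it can be written as $O_{k,\varepsilon}(1)$, giving the claimed bound.

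I expect no real obstacle here, since this is a routine consequence of the preceding theorem; the corollary is explicitly flagged as having its proof omitted for precisely this reason. The only point requiring a moment's care is the bookkeeping on the polynomial degree: one must remember that the product $f(n)g(n)$ is bounded by $n^{2k}$ rather than $n^k$, so the relevant threshold for $t$ involves $2k$ rather than $k$. This does not affect the final form of the bound, because $2k$ is still a constant depending only on $k$, and the choice of $t$ (and hence the additive constant) is allowed to depend on both $k$ and $\varepsilon$. Everything else—the existence of the preprocessing moves, the partition into parallel one-vs-one games, and the application of Lemma \ref{lem:genparallelplay}—is already handled inside Theorem \ref{thm:f(n)-vs-g(n)}, so nothing beyond the substitution above is needed.
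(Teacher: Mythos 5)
Your proposal is correct and is exactly the argument the paper intends (the paper omits this proof precisely because it mirrors the derivation of Corollary \ref{cor:1+epsilon-upper} from Theorem \ref{thm:one-vs-f(n)}): substitute $f(n)g(n) \leq n^{2k}$ into the bound of Theorem \ref{thm:f(n)-vs-g(n)} and choose $t$ with $2k/\log(t) < \varepsilon$. The factor-of-$2$ bookkeeping you flag is the only point of difference from the one-sided case, and you handle it correctly.
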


Finally, we can combine Theorem \ref{thm:one-vs-all-n} and the proof of Theorem \ref{thm:one-vs-f(n)} to obtain a more general upper bound:

\begin{restatable}[Many vs.~All, $n$-bit]{theorem}{manyvsall}\label{thm:f(n)-v-all}
Let $t \geq 2$ be any integer, and let $f : \mathbb{N} \to \mathbb{N}$ satisfy $\lim_{n \rightarrow \infty} f(n) = \infty$. Then, for all $n$, it is possible to separate any set of $f(n)$ $n$-bit strings from all other $n$-bit strings by a sentence with $3\log_3(n) + \log_t(f(n)) + O_t(1)$ quantifiers.
\end{restatable}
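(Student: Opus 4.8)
The plan is to combine the ``preprocessing'' technique from the proof of Theorem~\ref{thm:one-vs-f(n)} with the one-vs-all strategy of Theorem~\ref{thm:one-vs-all-n}, gluing the resulting sub-games together via the Generalized Parallel Play Lemma (Lemma~\ref{lem:genparallelplay}). Write $\cA$ for the given set of $f(n)$ strings and $\cB = \{0,1\}^n - \cA$. Since the $f(n)$ strings now live on the \emph{left}, $\bS$'s preprocessing moves will be \emph{existential} (played in $\cA$) rather than universal. As in Theorem~\ref{thm:one-vs-f(n)}, for $n$ below the threshold where $f(n) \geq t^{et}$ the bound is trivial (the separation can be done with a number of quantifiers depending only on $t$, which is absorbed into $O_t(1)$), so I would assume $f(n) \geq t^{et}$.

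First I would run the preprocessing phase. $\bS$ plays $m = \lceil \log_t(f(n)) \rceil$ existential moves on $\cA$, using the permutation trick of Theorem~\ref{thm:one-vs-f(n)}: by assigning a distinct ordering of the $m$ pebbles to each of the $f(n)$ strings of $\cA$, he places every string of $\cA$ into its own isomorphism type of induced (pebbled) substructure, the Stirling estimate $m! = \omega(f(n))$ guaranteeing enough orderings. After $\bD$ responds obliviously, I would use Observation~\ref{obs:discard} to delete every board of $\cB$ whose induced substructure does not match one of the $f(n)$ types now present in $\cA$. This partitions the surviving instance into $\cA = \cA_1 \sqcup \cdots \sqcup \cA_{f(n)}$ and $\cB = \cB_1 \sqcup \cdots \sqcup \cB_{f(n)}$, where $\cA_i = \{w_i\}$ is a single pebbled string and $\cB_i$ collects the boards of $\cB$ sharing $w_i$'s type. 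Each pair $(\cA_i, \cB_i)$ is then a one-vs-all instance: every board of $\cB_i$ is a string $\neq w_i$, so by Theorem~\ref{thm:one-vs-all-n} $\bS$ separates $w_i$ from all of $\cB_i$ with the pattern $(\exists\exists\forall)^{\lceil \log_3 n\rceil}$, regardless of the preprocessing pebbles already on the boards.

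It remains to glue the $f(n)$ sub-games. I would apply Lemma~\ref{lem:genparallelplay} to the configuration reached after the $m$ preprocessing moves, with common pattern $(\exists\exists\forall)^{\lceil \log_3 n\rceil}$; each sub-game uses exactly this pattern, so hypothesis~1 holds. The crucial point for hypothesis~2 is that a cross-class matching pair can never form: boards of $\cA_i$ and $\cB_j$ with $i \neq j$ already have non-isomorphic induced substructures on the $m$ preprocessing pebbles, and since every later move only \emph{adds} pebbles --- which can only refine, never merge, the isomorphism type of an induced substructure --- the two boards remain non-isomorphic to the end. The resulting combined strategy, prepended with the $m$ existential preprocessing moves, wins the whole game with pattern $\exists^{m}(\exists\exists\forall)^{\lceil\log_3 n\rceil}$, and the total quantifier count is $m + 3\lceil \log_3 n\rceil = \log_t(f(n)) + 3\log_3(n) + O_t(1)$, as claimed.

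The main obstacle I anticipate is the honest bookkeeping at the interface between the two phases. One must check that the preprocessing pebbles do not obstruct the one-vs-all strategy inside each $\cB_i$ (they do not, since Theorem~\ref{thm:one-vs-all-n} separates $w_i$ from \emph{all} other $n$-bit strings, hence from every board of $\cB_i$, whatever extra pebbles are present) and, dually, that the class distinctions created by preprocessing survive all subsequent play. The latter is exactly the monotonicity-of-induced-substructures observation above, and it is what makes hypothesis~2 of the parallel play lemma hold cleanly across all $f(n)$ branches at once.
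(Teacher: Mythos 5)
Your proposal is correct and follows essentially the same route as the paper's proof: $\lceil\log_t(f(n))\rceil$ existential preprocessing moves via the permutation trick of Theorem~\ref{thm:one-vs-f(n)} to isolate each string of $\cA$ in its own isomorphism class, followed by $f(n)$ parallel one-vs-all games with common pattern $(\exists\exists\forall)^{\lceil\log_3 n\rceil}$ glued by Lemma~\ref{lem:genparallelplay}. Your added bookkeeping (that extra pebbles only refine isomorphism types, so cross-class matches never form and the one-vs-all strategy survives the preprocessing pebbles) is exactly the detail the paper leaves implicit.
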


\begin{proof}
We may assume WLOG that the $f(n)$ strings are in $\cA$. Using Theorem \ref{thm:one-vs-f(n)}, $\bS$ can expend $\lceil\log_t(f(n))\rceil$ ``preprocessing'' existential moves to put each string in $\cA$ in its own isomorphism class. Once this is done, $\bS$ can play resulting $f(n)$ different parallel \textbf{one-vs-all} games according to Theorem \ref{thm:one-vs-all-n}, which all have the same winning pattern and therefore can be played in parallel using Lemma \ref{lem:genparallelplay}. This expends an additional $3\log_3(n) + O(1)$ quantifiers. The result follows.
\end{proof}

The following corollary follows exactly as Corollary \ref{cor:1+epsilon-upper}, and the proof is omitted.

\begin{restatable}[Polynomially Many vs.~All, $n$-bit]{corollary}{polymanyvsall}\label{cor:3+epsilon-upper}
Let $f : \mathbb{N} \to \mathbb{N}$ be a function satisfying $\lim_{n \rightarrow \infty} f(n) = \infty$ and $f(n) = O(n^k)$ for some constant $k$. Then, for all $n$, and for every $\varepsilon > 0$, it is possible to separate each set of $f(n)$ $n$-bit strings from all other $n$-bit strings by a sentence with $(3 + \varepsilon)\log_3(n) + O_{k, \varepsilon}(1)$ quantifiers.
\end{restatable}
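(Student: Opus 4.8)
The plan is to follow the same template as the proof of Corollary \ref{cor:1+epsilon-upper}, but invoking Theorem \ref{thm:f(n)-v-all} in place of Theorem \ref{thm:one-vs-f(n)} and carefully converting the logarithm base from $t$ to $3$. First I would fix a constant $k$ with $f(n) \leq n^k$ for all $n$ (legitimate since $f(n) = O(n^k)$), so that $\log_t(f(n)) \leq k\log_t(n)$ for every $n$ and every integer base $t \geq 2$.

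Next I would apply Theorem \ref{thm:f(n)-v-all}, which produces a separating sentence with $3\log_3(n) + \log_t(f(n)) + O_t(1)$ quantifiers for any fixed $t \geq 2$. The only genuine computation is to rewrite the middle term in base $3$: since $\log(n) = \log(3)\cdot\log_3(n)$, we have
\[
\log_t(f(n)) \leq k\log_t(n) = \frac{k\log(n)}{\log(t)} = \frac{k\log(3)}{\log(t)}\,\log_3(n).
\]
Substituting into the bound from Theorem \ref{thm:f(n)-v-all} gives a total of $\left(3 + \frac{k\log(3)}{\log(t)}\right)\log_3(n) + O_t(1)$ quantifiers.

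To obtain the claimed coefficient $3 + \varepsilon$, I would then choose the integer $t \geq 2$ large enough that $\frac{k\log(3)}{\log(t)} < \varepsilon$, equivalently $t > 3^{k/\varepsilon}$; such a $t$ exists because $\log(t) \to \infty$. With this choice the coefficient of $\log_3(n)$ is at most $3 + \varepsilon$, and since the additive $O_t(1)$ term depends only on $t$, which in turn depends only on $k$ and $\varepsilon$, it can be absorbed into $O_{k,\varepsilon}(1)$.

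I do not expect a real obstacle here: the argument is purely a base-change estimate layered on top of the already-established Theorem \ref{thm:f(n)-v-all}, exactly parallel to Corollary \ref{cor:1+epsilon-upper}. The single point requiring a little care is tracking the conversion factor $\log(3)/\log(t)$ (rather than the bare $1/\log(t)$ that appears in Corollary \ref{cor:1+epsilon-upper}), so that the threshold for $t$ genuinely forces the overhead below $\varepsilon\log_3(n)$; the choice $t > 3^{k/\varepsilon}$ makes this precise.
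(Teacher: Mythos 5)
Your proposal is correct and is exactly the argument the paper intends: the paper omits the proof, stating only that it ``follows exactly as Corollary \ref{cor:1+epsilon-upper},'' i.e., by applying Theorem \ref{thm:f(n)-v-all} and choosing $t$ large enough. Your base-change computation $\log_t(f(n)) \leq k\log(3)\log_3(n)/\log(t)$ and the resulting threshold $t > 3^{k/\varepsilon}$ are the right way to make the conversion precise.
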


We make an observation at this juncture. In Theorem \ref{thm:one-vs-f(n)}, in the worst case scenario, $f(n) = 2^n - 1$. However, in that case, Theorem \ref{thm:one-vs-f(n)} does not give us any useful bound (and, in fact, Theorem \ref{thm:one-vs-all-n} is significantly better). Similarly, in Theorem \ref{thm:f(n)-vs-g(n)}, if \textit{both} $f(n)$ and $g(n)$ are $\Theta(2^n)$, we do not get anything useful.

In fact, for any arbitrary instance $(\cA, \cB)$ consisting of only $n$-bit strings, $\bS$ always has a winning strategy in $n$ rounds: he can simply use $n$ existential (or universal) moves to play different pebbles on all the elements of each board on one side, which breaks all isomorphisms on the other side. This strategy effectively names the entirety of $\cA$ (or $\cB$). We improve on this na\"{i}ve bound in the following theorem.

\begin{restatable}[Any vs.~Any, $n$-bit --- \emph{Upper Bound}]{theorem}{anyvsanyupper}\label{thm:all-vs-all upper bound}
For all $n$, and for every $\varepsilon > 0$, any two disjoint sets of $n$-bit strings are separable by a sentence with
$(1 + \varepsilon)\frac{n}{\log(n)} + O_\varepsilon(1)$ quantifiers.
\end{restatable}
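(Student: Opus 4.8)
The plan is to beat the trivial $n$-round bound by encoding each string with far fewer than $n$ pebbles, exploiting the fact that any set of $n$-bit strings has size at most $2^n$. The key idea is that the \emph{information} needed to distinguish the strings is at most $n$ bits per string, but a single pebble placement carries $\log(n)$ bits of positional information (the index where the pebble lands). So I would aim to ``pack'' the bits of each string into roughly $n/\log(n)$ pebble placements, each of which reveals a block of about $\log(n)$ consecutive bits, rather than spending one pebble per bit.

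Concretely, set a block size $b \approx \log(n)$, so that each $n$-bit string is chopped into $\lceil n/b\rceil$ blocks, each of length $b$. I would have $\bS$ play a small number of ``preprocessing'' moves whose pebble positions act as a \emph{selector}: placing a pebble at a carefully chosen offset within a block lets $\bS$ point to one of the $2^b \approx n$ possible bit-patterns that a block could take. Since there are only about $2^b$ distinct block-contents and the pebble has $b$ possible positions inside a block, a constant number of pebbles per block (together with the oblivious-response/discard mechanism of Observation~\ref{obs:discard}) suffices to fix the isomorphism type of each block. Accumulating over all $\lceil n/b\rceil$ blocks, the total number of pebble-moves is $O(n/b) = O(n/\log(n))$. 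The parallel-play machinery of Lemma~\ref{lem:genparallelplay} is what makes this work across both sides simultaneously: after the preprocessing moves have separated the strings on each side into distinct isomorphism classes, the remaining one-vs-one subgames (one per surviving class) all share a common quantifier pattern and can be merged without summing their quantifier costs. The final $(1+\varepsilon)$ factor comes from tuning the block size slightly below $\log(n)$ — e.g.\ $b = (1-\delta)\log(n)$ — so that each block has strictly more than the number of distinct contents one needs to address, absorbing the rounding and the per-block constant overhead into the multiplicative slack.

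The main obstacle I anticipate is making the block-encoding genuinely \emph{parallel} and \emph{matching-free} across the two sides. After preprocessing, I need that no board in $\cA_i$ matches a board in $\cB_j$ for $i\neq j$ (hypothesis~2 of Lemma~\ref{lem:genparallelplay}), which requires that the pebble-placement scheme assign a distinct isomorphism type to each distinct string — and crucially, that $\bS$ can realize these placements \emph{on a single side at a time}, since each round he plays on only one of $\cA$, $\cB$. The delicate point is that the selector moves must simultaneously (i) carry enough positional entropy, roughly $\log(n)$ bits per move via the $n$ choices of index, to identify a block's content in $O(1)$ moves, and (ii) be consistent across all the strings sharing a block-prefix so that the game genuinely factors into independent per-string subgames. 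I would handle this by an inductive block-by-block argument analogous to the ternary split in Theorem~\ref{thm:one-vs-all-n}: process the leftmost block first, use the discard observation to kill non-isomorphic boards, partition both sides by the resulting class, and recurse on the remaining suffix, invoking Lemma~\ref{lem:genparallelplay} at each stage.

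Finally, I would verify the arithmetic: with $b=(1-\delta)\log(n)$ there are $\lceil n/b\rceil \leq n/((1-\delta)\log(n)) + 1$ blocks, each costing a constant number $c$ of moves, giving $c\,n/((1-\delta)\log(n)) + O(1)$; choosing $\delta$ and absorbing $c$ appropriately (and possibly refining the scheme so $c$ effectively approaches $1$ by having each move address one full block) yields $(1+\varepsilon)\frac{n}{\log(n)} + O_\varepsilon(1)$. The $O_\varepsilon(1)$ term collects the boundary block, the ceiling, and any fixed start/finish moves, exactly as in the corollaries above. The lower bound $\frac{n}{\log(n)}$ promised in Proposition~\ref{prop:all-vs-all lower bound} shows this is tight up to the $(1+\varepsilon)$ factor, which is the natural target for the tuning of $b$.
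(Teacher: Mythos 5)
Your high-level plan (spend roughly $n/\log(n)$ preprocessing moves to put every string in $\cA$ into its own isomorphism class, then finish with parallel one-vs-all subgames via Lemma~\ref{lem:genparallelplay}) is exactly the paper's plan, but the mechanism you propose for the preprocessing phase does not work, and the gap is precisely at the step you flag as "delicate." You claim a single pebble placement carries $\log(n)$ bits because it has $n$ possible landing positions; however, the matching-pair relation only sees the \emph{isomorphism type} of the pebbled substructure --- the relative order of the pebbles, their coincidences with each other and with $\mathsf{min},\mathsf{max}$, and the $S$-values at the pebbled elements. The absolute index of a pebble is invisible. Consequently, a constant number $c$ of pebbles confined to a block can realize only $O(c!\cdot 2^c) = O(1)$ distinguishable isomorphism types, nowhere near the $2^b\approx n$ block contents you need to address; your own sentence "the pebble has $b$ possible positions inside a block" already concedes that even if position were visible you would get only $\log b = \log\log n$ bits per move. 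A block-by-block recursion with $O(1)$ moves per block therefore cannot separate all $2^n$ strings, and inflating to enough pebbles per block costs $n/\log\log(n)$ moves in total, missing the target.

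The missing idea is to extract the positional entropy \emph{globally} rather than per block: the paper has $\bS$ play $m=\lceil n/\log_r(n)\rceil$ existential moves (for a real $r>2$) placing the $m$ pebbles on distinct elements of each string in $\cA$, and encodes each string by the \emph{permutation} (relative ordering) of those $m$ pebbles. Since $m! \geq (m/e)^m \geq 2^n$ for large $n$ by Stirling once $r>2$, distinct strings can be assigned distinct orderings, hence distinct isomorphism classes --- this is where the amortized $\log(m)\approx\log(n)$ bits per pebble actually come from. After discarding, the game splits into $|\cA|$ parallel one-vs-all games, each finished in $3\lceil\log_3(n)\rceil$ rounds by Theorem~\ref{thm:one-vs-all-n}, and tuning $r$ toward $2$ gives the $(1+\varepsilon)$ factor. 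Without replacing your block-selector encoding by some argument of this kind, the proof does not go through.
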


\begin{proof}
Let $r > 2$ be any real number.

\begin{claim}\label{claim:realr}
For every real $r > 2$, there is some $N_r \in \mathbb{N}$ such that for all $n \geq N_r$, every pair $\cA$ and $\cB$ of disjoint sets of $n$-bit strings are separable by a sentence with $\left\lceil n/\log_r(n)\right\rceil + 3\lceil\log_3(n)\rceil$ quantifiers.
\end{claim}
\begin{proof}[Proof of Claim \ref{claim:realr}]
The idea is once again for $\bS$ to play enough ``preprocessing'' existential moves at the start, to give each string in $\cA$ its own isomorphism class. $\bS$ plays $m := \left\lceil n/\log_r(n)\right\rceil$ such moves. We first note that as long as $n > r^2$, we have $m < \lceil n/2\rceil \leq n$, and so $\bS$ has sufficient space to play the pebbles from these preprocessing moves on distinct elements on each of the boards in $\cA$.

Consider the number of orderings of $m$ pebbles, which is the number of distinct isomorphism classes $\bS$ will create. By Stirling's approximation, we have:
\begin{equation*}
m! = \left\lceil \frac{n}{\log_r(n)} \right\rceil! > \left( \frac{n}{e\log_r(n)} \right)^{\frac{n}{\log_r(n)}}.
\end{equation*}
We wish to show that the right hand side of this equation is at least $2^n$ (for large enough $n$) to account for the largest possible size of $\cA$. Therefore, taking base-$2$ logarithms, we wish to show that:
\begin{equation*} 
\frac{n}{\log_r(n)}\cdot\left(\log(n) - \log(e\log_r(n)) \right) \geq n, \text{~~ i.e., ~~}\log(n) - \log(e) - \log\log_r(n) \geq \log_r(n).
\end{equation*}
Equivalently, we need to show that:
\begin{equation*} 
    \log(n) \geq \frac{\log(n)}{\log(r)} + \log\log_r(n)  + \log(e),
\end{equation*}
or in other words:
\begin{equation*}
\log(n)\left(1 - \frac{1}{\log(r)}\right) \geq \log\log(n) - \log\log(r) + \log(e).
\end{equation*}
Because $r > 2$, we have $\log(r) > 1$, and so the left hand side above grows linearly in $\log(n)$, whereas the right hand side grows logarithmically in $\log(n)$. Hence, there is some integer $N'_r$ such that for all $n \geq N'_r$, the left hand side dominates, and so, indeed, $\left\lceil n/\log_r(n)\right\rceil$ preprocessing moves are enough for each string to get its own isomorphism class. 

Once these preprocessing moves have been played, and $\bD$ has responded obliviously in $\cB$, we can use Observation \ref{obs:discard} to discard all boards in $\cB$ that do not form a matching pair with any board in $\cA$. We are now left with $|\cA|$ isomorphism classes on both sides. We can therefore partition $\cB$ as $\cB_1 \sqcup \ldots \sqcup \cB_{|\cA|}$ according to those isomorphism classes. Now observe that we have partitioned the game into $|\cA|$ parallel instances of \textbf{one-vs-many} games, $(\cA_i, \cB_i)$ for $1 \leq i \leq |\cA|$, where $\cA_i$ is a singleton for all $i$.

We are now in a position to apply Lemma \ref{lem:genparallelplay}. Of course, for $i \neq j$, there cannot be a board in $\cA_i$ and a board in $\cB_j$ forming a matching pair. Furthermore, by Theorem \ref{thm:one-vs-all-n}, each instance $(\cA_i, \cB_i)$ is winnable by $\bS$ following a strategy that has the same pattern of length $3\lceil\log_3(n)\rceil$. By Lemma \ref{lem:genparallelplay}, therefore, the total number of rounds required by $\bS$ to win the remainder of the game after the preprocessing moves have been played is $3\lceil\log_3(n)\rceil$. In total, therefore, the number of rounds required is $\lceil n/\log_r(n)\rceil + 3\lceil\log_3(n)\rceil$. The claim follows by setting $N_r = \max(N'_r, r^2)$.
%
\end{proof}
To now prove the theorem, we start by fixing $\varepsilon > 0$. Now, take $r > 2$ to be small enough so that $\log(r) < 1 + \varepsilon/2$. By Claim \ref{claim:realr}, there is some $N_r$ such that for all $n \geq N_r$, the number of quantifiers needed to separate any instance $(\cA, \cB)$ is at most:
\begin{align*}
    \left\lceil\frac{n}{\log_r(n)}\right\rceil + 3\lceil\log_3(n)\rceil &\leq \frac{n}{\log(n)}\cdot\log(r) + 3\lceil\log_3(n)\rceil + 1 \\
    &\leq \frac{n}{\log(n)}\left(1 + \frac{\varepsilon}{2}\right) + 3\lceil\log_3(n)\rceil + 1.
\end{align*}
Of course, $n/\log(n)$ asymptotically dominates $3\lceil\log_3(n)\rceil + 1$. So there is some $N'_\varepsilon$ such that for all $n \geq N'_\varepsilon$, we have $3\lceil\log_3(n)\rceil + 1 \leq (\varepsilon/2)\cdot(n/\log(n))$. So now, if $N_\varepsilon := \max(N'_\varepsilon, N_r)$, we have, for all $n \geq N_\varepsilon$:
\begin{equation*}
    \left\lceil\frac{n}{\log_r(n)}\right\rceil + 3\lceil\log_3(n)\rceil \leq \frac{n}{\log(n)}\left(1 + \frac{\varepsilon}{2}\right) + \frac{n}{\log(n)}\cdot\frac{\varepsilon}{2} = \frac{n}{\log(n)}\cdot(1 + \varepsilon).
\end{equation*}
It follows that, for any $n$, the number of quantifiers required to separate any instance $(\cA, \cB)$ of $n$-bit strings is at most $(1 + \varepsilon)\frac{n}{\log(n)} + N_\varepsilon$: the first term handles all $n \geq N_\varepsilon$, and the second term (trivially) handles all $n < N_\varepsilon$. The conclusion of the theorem follows.
\end{proof}

Remarkably, we cannot improve the upper bound in Theorem \ref{thm:all-vs-all upper bound} by any significant amount. The following proposition establishes this by means of a counting argument.

\begin{restatable}[Any vs.~Any, $n$-bit --- \emph{Lower Bound}]{proposition}{anyvsanylower}\label{prop:all-vs-all lower bound}
For all sufficiently large $n$, there is a nonempty set of $n$-bit strings, $\cA \subsetneq \{0, 1\}^n$, such that every sentence sentence $\varphi$ for $(\cA, \{0, 1\}^n - \cA)$ must have at least $n/\log(n)$ quantifiers.
\end{restatable}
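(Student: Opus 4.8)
The plan is to use a counting argument in the spirit of Proposition~\ref{prop:one-vs-one-n}'s lower bound, but now counting \emph{separating sentences} rather than analyzing a specific game. The key observation is that if every pair $(\cA, \{0,1\}^n - \cA)$ could be separated using few quantifiers, then there would be too few sentences to separate all the exponentially many distinct subsets $\cA \subseteq \{0,1\}^n$. First I would fix a bound $q = q(n)$ on the number of quantifiers and count how many inequivalent first-order sentences over $\tau_\mathsf{string}$ can be written using at most $q$ quantifiers. Each such sentence has at most $q$ variables, and its structure (quantifier prefix, Boolean combination of atomic formulas involving $<$, $S$, $\mathsf{min}$, $\mathsf{max}$ on those variables) can be encoded by a string over a fixed finite alphabet whose length is bounded by some function $C \cdot q$ for a constant $C$ depending only on the vocabulary. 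Hence the number of distinct sentences with at most $q$ quantifiers is at most $2^{O(q)}$.

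Next I would observe that each subset $\cA \subseteq \{0,1\}^n$ determines a distinct separation problem $(\cA, \{0,1\}^n - \cA)$, and a separating sentence $\varphi$ for this instance is true on exactly the strings in $\cA$ and false on exactly the strings in $\{0,1\}^n - \cA$. Thus $\varphi$ determines $\cA$ uniquely: two different subsets $\cA \neq \cA'$ cannot share a separating sentence, since a sentence true on precisely $\cA$ cannot also be true on precisely $\cA'$. Therefore the map sending each subset to (one of) its separating sentences is injective into the set of sentences with at most $q$ quantifiers. Since there are $2^{2^n}$ subsets of $\{0,1\}^n$, we need $2^{O(q)} \geq 2^{2^n}$, forcing $q = \Omega(2^n)$ if we used this crude count directly --- which is far too strong and hence wrong, so the count of sentences must be refined.

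The fix, and the main obstacle, is to count sentences more carefully: over a \emph{fixed finite universe} of size $n$, two sentences are equivalent for our purposes exactly when they agree on all $2^n$ strings of length $n$, and the number of sentences with $q$ quantifiers is not $2^{O(q)}$ but rather bounded by the number of truth-assignments they can realize. The correct approach is to bound the number of \emph{distinct subsets of $\{0,1\}^n$} definable by sentences with at most $q$ quantifiers. A sentence with $q$ quantifiers, after fixing the quantifier prefix (at most $2^q$ choices) and the quantifier-free matrix, is determined by a Boolean function of the atomic facts among $q$ variables; the number of atomic formulas over $q$ variables in $\tau_\mathsf{string}$ is $O(q^2)$, so the matrix is one of at most $2^{2^{O(q^2)}}$ possibilities --- still too many. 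I would instead argue semantically: by the Fundamental Theorem (Theorem~\ref{thm:MSfundamental}) combined with the pattern analysis, the behavior of any $q$-quantifier sentence on length-$n$ strings is determined by a bounded amount of combinatorial data, and the number of definable subsets is at most $2^{q \log n \cdot O(1)}$. Setting this at least $2^{2^n}$ is still wrong; the honest bound is that the number of subsets definable with $q$ quantifiers is at most roughly $n^{O(q)} = 2^{O(q \log n)}$, since each quantifier ranges over $n$ positions and the game/formula can be encoded by specifying, for each of the $q$ moves, a position among $n$ together with a side and a finite amount of Boolean glue. Requiring $2^{O(q\log n)} \geq 2^{\binom{2^n}{?}}$ is the wrong target; the right target counts only that \emph{some} subset is hard, so I would conclude: the number of subsets separable with $< n/\log(n)$ quantifiers is at most $2^{O((n/\log n)\cdot \log n)} = 2^{O(n)}$, which is strictly less than $2^{2^n}$, the total number of subsets. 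Hence some subset $\cA$ requires at least $n/\log(n)$ quantifiers.

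Concretely, the clean argument I would write is: the number of subsets of $\{0,1\}^n$ that are separable from their complement by a sentence with at most $q$ quantifiers is at most $2^{c \cdot q \log n}$ for an absolute constant $c$ (each quantifier contributes a choice from among $O(n)$ elements plus $O(1)$ structural bits, giving $n^{O(q)}$ sentences up to equivalence on length-$n$ structures). If every subset were separable with fewer than $n/\log(n)$ quantifiers, then $2^{c(n/\log n)\log n} = 2^{cn} \geq 2^{2^n}$, a contradiction for large $n$. Therefore some nonempty proper $\cA \subsetneq \{0,1\}^n$ requires at least $n/\log(n)$ quantifiers, as claimed. The main obstacle, and the step requiring genuine care, is establishing the bound $n^{O(q)}$ on the number of distinct \emph{length-$n$ definable} subsets: one must argue that, modulo equivalence on structures of universe size $n$, each additional quantifier multiplies the number of realizable subsets by only a polynomial-in-$n$ factor. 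This can be justified by induction on $q$ using a Feferman--Vaught-style or Ehrenfeucht--Fra\"{\i}ss\'{e}-type normal form over the finite universe $\{1,\dots,n\}$, tracking how the $q$-quantifier definable subsets are built from $(q-1)$-quantifier ones.
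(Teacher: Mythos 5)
Your high-level plan --- count the subsets of $\{0,1\}^n$ definable by sentences with at most $q$ quantifiers and compare against the $2^{2^n}-2$ instances $(\cA,\{0,1\}^n-\cA)$, noting that distinct instances need inequivalent separating sentences --- is exactly the paper's approach. But the quantitative claim you ultimately rest the proof on, that at most $n^{O(q)}=2^{O(q\log n)}$ subsets of $\{0,1\}^n$ are definable with $q$ quantifiers, is false, and no Feferman--Vaught or EF-style induction will establish it. The paper's own Theorem~\ref{thm:all-vs-all upper bound} shows that with $q=(1+\varepsilon)n/\log(n)$ quantifiers \emph{every} one of the $2^{2^n}$ subsets is separable from its complement, whereas your bound would cap the number of such subsets at $2^{O(n)}$. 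Equivalently: if your count were correct, the very same argument would prove a lower bound of $2^n/O(\log n)$ quantifiers (you need $2^{cq\log n}\geq 2^{2^n}$, i.e., $q\geq 2^n/(c\log n)$), flatly contradicting the upper bound. You correctly rejected the crude $2^{O(q)}$ count for exactly this reason earlier in your write-up, but the $n^{O(q)}$ count fails the same sanity check, because a sentence is not determined by ``one position plus $O(1)$ bits per quantifier'': the quantifier-free matrix carries an enormous amount of information.

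The correct count, which is what the paper does, is doubly exponential. Write the sentence in prenex form: there are fewer than $2^{q+1}$ quantifier signatures, and the matrix can be taken to be a disjunction of an arbitrary subset of the complete types on $q$ variables. The number of such types is at most $q!\cdot 2^q\leq 2^{q\log(q)}$ (an ordering of the variables plus a bit for each), so the number of inequivalent matrices is at most $2^{2^{q\log(q)}}$, giving at most $2^{q}\cdot 2^{2^{q\log(q)}}$ inequivalent sentences. Comparing with $2^{2^n}-2$ forces $q+2^{q\log(q)}\geq 2^n-1$, hence roughly $q\log(q)\geq n$; and since $q\leq n$ (so $\log(q)\leq\log(n)$, sharpened in the paper via $\log(n)>\log(q)+2/q$), this yields $q\geq n/\log(n)$. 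Note how the exponents line up: the definable-subset count must sit between $2^{2^{q\log q}}$ and $2^{2^n}$, which is precisely why the threshold lands at $q\approx n/\log(n)$ rather than at anything obtainable from a singly exponential count.
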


\begin{proof}
Take $n$ to be sufficiently large, and suppose $k$ (as a function of $n$) is the minimum number of quantifiers that is sufficient to separate every pair of disjoint sets of $n$-bit strings. We already know $k = o(n)$ from Theorem \ref{thm:all-vs-all upper bound}, and also $k \geq \log(n)$ from Proposition \ref{prop:one-vs-one-n}. Note that this means:
\begin{equation}\label{eqHelpful}
    \log(n) > \log(k) + 2/k.
\end{equation}
We wish to show that $k \geq n/\log(n)$. To this end, consider the number of pairwise inequivalent sentences that can be written with $k$ or fewer quantifiers. Assume any such sentence is written in prenex form. It must start with a quantifier signature of length at most $k$, followed by its quantifier-free part, which can be written as a disjunction of types. The number of such quantifier signatures is $\sum_{i = 0}^k2^i < 2^{k+1}$. Any type with $k$ or fewer variables can be completely specified by fixing the relative ordering of those variables (requiring at most $k$ occurrences of the variables, using transitivity of the $\leq$ relation), and fixing each of them to be $0$ or $1$ using the appropriate unary predicate (requiring another at most $k$ occurrences). Therefore, the total number of such types is at most $k!\cdot 2^k$. Since $k! \leq (k/2)^k$ for $k \geq 6$, the total number of types is bounded above by $(k/2)^k\cdot 2^k = 2^{k\log(k)}$. Any subset of types can be in the disjunction, leading to the number of quantifier-free parts being at most $2^{2^{k\log(k)}}$. This puts the total number of pairwise inequivalent formulas using $k$ quantifiers to be at most $2^k\cdot 2^{2^{k\log(k)}}$.
    
Now, consider an instance $(\cA, \{0, 1\}^n - \cA)$, where $\cA$ is a nonempty strict subset of the $n$-bit strings. Observe that any two distinct such instances \emph{must} require inequivalent sentences to separate them. Therefore, the number of pairwise inequivalent sentences we require in order to be assured of solving the problem is at least the number of such instances, which is $2^{2^n} - 2$, where we subtract $2$ to ensure there is at least one string on either side of each such instance. It follows that we need $2^k\cdot 2^{2^{k\log(k)}}  \geq 2^{2^n} - 2 \geq 2^{2^n - 1}$, i.e.:
\begin{equation}\label{k-need}
    k + 2^{k\log(k)} \geq 2^n - 1.
\end{equation}
But if $k < n/\log(n)$, we must have:
\begin{equation*}
    k + 2^{k\log(k)} < 2^{k\log(k) + 1} < 2^{k(\log(n) - 2/k) + 1} = 2^{k\log(n) - 1} < 2^{n - 1} < 2^n - 1,
\end{equation*}
where the first inequality follows because $2^{k\log(k)} > k$, the second follows from Eq.~\eqref{eqHelpful}, the third follows by the assumption that $k < n/\log(n)$, and the fourth follows for all sufficiently large $n$. Since this contradicts Eq.~\eqref{k-need}, it follows that $k \geq n/\log(n)$, as desired. In fact, the same argument also shows that with high probability, a \emph{random} instance $(\cA, \{0, 1\}^n - \cA)$ requires at least $n/\log(n)$ quantifiers to separate.
\end{proof}
\section{Conclusions \& Open Problems }\label{sec:conclusion}

We have analyzed MS games using parallel play to obtain non-trivial quantifier upper bounds, and in two cases, lower bounds, for a variety of linear order and string separation problems. Natural directions to extend this work include the following.

\begin{itemize}
\item The gaps between the upper and lower bounds on quantifier number for string separation problems can be tightened. In the \textbf{one vs.~all} problem (Theorem \ref{thm:one-vs-all-n}), is $3\log_3(n)$ optimal? Can the upper bound (Theorem \ref{thm:all-vs-all upper bound}) for separating any two sets of strings be brought down to $n/\log(n)$, by removing the $(1 + \varepsilon)$ factor?
    
\item Learning more about how many quantifiers are needed to express particular string and graph properties would be illuminating. While our lower bound for the \textbf{one vs.~one} problem (Proposition \ref{prop:one-vs-one-n}) gave a pair of strings requiring $\log(n)$ quantifiers to separate, the counting argument used to prove Proposition \ref{prop:all-vs-all lower bound} does not exhibit a \textit{particular} instance on $n$-bit strings that requires $n/\log(n)$ quantifiers to separate.

\item It is known for ordered structures that with $O(\log n)$ quantifiers, we can express the $\bit$ predicate, or equivalently all standard arithmetic operations on elements of the universe  \cite{IMMERMAN:1999}. In particular, with $\bit$, some properties that would otherwise require $\log n$ quantifiers can be expressed using $O(\log(n)/\log\log(n))$ quantifiers. Understanding the use of $\bit$ or other numeric relations would be valuable.
\end{itemize}

\section*{Acknowledgments}

Rik Sengupta was supported by NSF
CCF-1934846. Ryan Williams was supported by NSF CCF-2127597 and a Frank Quick Faculty Research Innovation Fellowship. The authors would also like to thank Sebastian Pfau for a helpful observation, which improved the statement of the Parallel Play Lemma.

\bibliographystyle{abbrv}
\bibliography{arxiv-bibiliography}

\begin{thebibliography}{10}

\bibitem{MScanon3}
M.~Carmosino, R.~Fagin, N.~Immerman, P.~Kolaitis, J.~Lenchner, and R.~Sengupta.
\newblock Multi-structural games and beyond.
\newblock \url{https://doi.org/10.48550/arXiv.2301.13329}, 2023.

\bibitem{Chen2013OnLO}
Y.~Chen and J.~Flum.
\newblock On limitations of the {E}hrenfeucht-{F}raiss{\'e}-method in descriptive complexity.
\newblock {\em Electron. Colloquium Comput. Complex.}, TR13, 2013.

\bibitem{MScanon1}
R.~Fagin, J.~Lenchner, K.~W. Regan, and N.~Vyas.
\newblock Multi-structural games and number of quantifiers.
\newblock In {\em 36th Annual {ACM/IEEE} Symposium on Logic in Computer Science, {LICS} 2021, Rome, Italy, June 29 - July 2, 2021}, pages 1--13, Rome, Italy, 2021. {IEEE}.

\bibitem{MScanon2}
R.~Fagin, J.~Lenchner, N.~Vyas, and R.~R. Williams.
\newblock On the number of quantifiers as a complexity measure.
\newblock In S.~Szeider, R.~Ganian, and A.~Silva, editors, {\em 47th International Symposium on Mathematical Foundations of Computer Science, {MFCS} 2022, August 22-26, 2022, Vienna, Austria}, volume 241 of {\em LIPIcs}, pages 48:1--48:14, Vienna, Austria, 2022. Schloss Dagstuhl - Leibniz-Zentrum f{\"{u}}r Informatik.

\bibitem{DBLP:journals/lmcs/GroheS05}
M.~Grohe and N.~Schweikardt.
\newblock The succinctness of first-order logic on linear orders.
\newblock {\em Log. Methods Comput. Sci.}, 1(1), 2005.

\bibitem{hella2024}
L.~Hella and K.~Luosto.
\newblock Game characterizations for the number of quantifiers.
\newblock {\em Mathematical Structures in Computer Science}, pages 1--20, 2024.

\bibitem{DBLP:journals/corr/abs-1208-4803}
L.~Hella and J.~Väänänen.
\newblock The size of a formula as a measure of complexity.
\newblock In A.~Hirvonen, J.~Kontinen, R.~Kossak, and A.~Villaveces, editors, {\em Logic Without Borders: Essays on Set Theory, Model Theory, Philosophical Logic and Philosophy of Mathematics}, pages 193--214. De Gruyter, Berlin, München, Boston, 2015.

\bibitem{MScanon0}
N.~Immerman.
\newblock Number of quantifiers is better than number of tape cells.
\newblock {\em J. Comput. Syst. Sci.}, 22(3):384--406, 1981.

\bibitem{IMMERMAN:1999}
N.~Immerman.
\newblock {\em Descriptive Complexity}.
\newblock Springer, New York USA, 1999.

\bibitem{ROSENSTEIN:1982}
J.~G. Rosenstein.
\newblock {\em Linear {O}rderings}.
\newblock Academic Press, New York USA, 1982.

\bibitem{DBLP:journals/apal/Schwentick96}
T.~Schwentick.
\newblock On winning {E}hrenfeucht games and monadic {NP}.
\newblock {\em Ann. Pure Appl. Log.}, 79(1):61--92, 1996.

\bibitem{vinallsmeeth2024quantifier}
H.~Vinall-Smeeth.
\newblock From quantifier depth to quantifier number: Separating structures with k variables.
\newblock \url{https://doi.org/10.48550/arXiv.2311.15885}, 2024.

\end{thebibliography}



\end{document}